\documentclass[11pt,a4paper]{article}

\usepackage{amsmath,amssymb,amsthm} \usepackage{amsfonts} \usepackage{multicol}
\usepackage{interval}

\usepackage[american]{babel}

\usepackage{csquotes,newcent,textcomp}

\usepackage{enumerate}

\theoremstyle{plain}
\newtheorem{theorem}{Theorem}[section]
\newtheorem{lemma}[theorem]{Lemma}
\newtheorem{prop}[theorem]{Proposition}
\newtheorem{corollary}[theorem]{Corollary}
\newtheorem{defn}[theorem]{Definition}
\newtheorem{remark}[theorem]{Remark}

\title{A framework for the valuation of insurance liabilities by production cost}
\author{Christoph M\"ohr \\
\and Independent researcher, christoph.moehr@gmail.com}
\date{Second version, 1 June 2025}

\begin{document}

\maketitle

\begin{abstract}
This paper sets out a framework for the valuation of liabilities from insurance contracts that is intended to be economically realistic, elementary, reasonably practically applicable, and as a special case to provide a basis for the valuation in regulatory solvency systems such as Solvency II and the SST. The valuation framework is based on the cost of producing the liabilities to an insurance company that is subject to solvency regulation (regulatory solvency capital requirements) and insolvency laws (definition and consequences of failure) in finite discrete time. Starting from the replication approach of classical no-arbitrage theory, the framework additionally considers the nature and cost of capital (expressed by a ``financiability condition"), that the liabilities may only be required to be fulfilled ``in sufficiently many cases" (expressed by a ``fulfillment condition"), production using ``fully illiquid" assets in addition to tradables, and the asymmetry between assets and liabilities. We identify the necessary and sufficient property of the financiability condition such that the framework recovers the market prices of tradables. We investigate extending production to take account of insolvency and implications of using illiquid assets in the production. We show how Solvency II and SST valuation can be derived with specific assumptions.  

\end{abstract}

\vspace{10pt}

{\bf Keywords.} insurance, valuation of insurance liabilities, replicating portfolios, solvency capital requirement, cost of capital, technical provisions, valuation in incomplete markets, Solvency II, Swiss Solvency Test

\newpage

\section{Introduction}\label{sec-intro}

Insurance contracts can be viewed as arrangements under which an insurance company promises the payment of future claims and benefits to policyholders and beneficiaries, for premiums paid by the policyholders. Characteristically, the amounts of future claims and benefits are currently uncertain, whereas the premiums are deterministic or at least ``less uncertain''. Claims and benefits may for example depend on the damage caused by natural perils such as windstorms or floods, the death of a person, or the development of financial markets. Insurance contracts thus correspond to stochastic future cash flows, with cash inflows to the insurance company for premiums, and cash outflows for claims and benefits. In this paper, we assume that the stochastic cash flows are given and consider finite discrete time.

The objective of the paper is to set out an elementary and reasonably practically applicable framework for the valuation of insurance contracts in an economically realistic setting. All insurance contracts of an insurance company are valued jointly. The framework is intended as a special case to provide a basis for valuation approaches in regulatory solvency systems such as Solvency II (\cite{EUFD2009}, \cite{EUDR2014}) and the Swiss Solvency Test (SST, \cite{AVOSST}). In the framework, the value is based on the cost to the insurance company of fulfilling the promises made under the insurance contracts, i.e. of producing the insurance contract cash outflows (production cost). The insurance company produces the insurance contract cash outflows according to a production strategy. Under the production strategy, it dynamically trades assets and produces the insurance contract cash outflows by selling assets or paying out cash inflows such as premiums, bond coupons, or equity dividends. 

This concept of production strategies is similar to replication or hedging strategies considered in classical no-arbitrage theory. To capture an economically realistic setting, we consider the following four additional aspects: 

(1) Insurance companies may not be able to fulfill their insurance obligations, i.e. the promises made under the insurance contracts, with certainty, i.e. ``in every possible state of the world''. For example, they may not be able to fulfill the maximum of all their claims and benefits.\footnote{Diversification is a core element of the business model of insurance companies, which should make it unlikely that all future claims and benefits would simultaneously realize at their maximum.} In turn, insurance companies are subject to solvency regulation and supervision and to insolvency laws. Insolvency laws determine when a company fails - typically, when it defaults on an obligation that is due or becomes balance sheet insolvent - and the consequences of failure, for example bankruptcy liquidation. Solvency regulation can be interpreted as intending to make failure sufficiently unlikely. Solvency regulation typically includes regulatory solvency capital requirements such as under Solvency II and in the SST. These can be interpreted as requiring that insurance companies demonstrate, based on a model of stochastic future developments, that they are able to ``sufficiently fulfill'' their insurance obligations, i.e. to produce them ``in sufficiently many cases'' as defined by what we call a \emph{fulfillment condition} in this paper. Under Solvency II, the fulfillment condition corresponds to a 99.5\% value-at-risk and, in the SST, to a 99\% tail-value-at-risk (expected shortfall), in both cases over a one-year period. The production strategies that we consider thus in general provide sufficient as opposed to ``full'' fulfillment, with a fulfillment condition that is required to be satisfied over each one-year period, and where, in case of failure, insolvency laws trigger insolvency proceedings.\footnote{In reality, regulatory solvency capital requirements and insolvency laws are not always sufficiently aligned for sufficient fulfillment to correspond to no failure in ``sufficiently many cases''.} 

(2) Insurance companies have and use capital, for example shareholder capital, which should thus be an element of production strategies. We view capital investments as arrangements by which investors (capital providers) provide financial resources to an insurance company in exchange for promises of future benefits that depend on the outcome and cannot by themselves trigger failure.\footnote{Hybrid capital, more generally, can trigger failure in specific situations.} This allows capital investments to absorb risk, i.e. volatility in the insurance liabilities: if insurance liabilities turn out to be higher, benefits to capital investors can be lower. We assume that investors are prepared to enter into a capital investment if the stochastic future benefits satisfy a specific \emph{financiability condition}. In line with the conception and the one-year view of Solvency II and the SST, we assume that the financiability condition is expressed as a one-year condition, and make the strong assumption that an insurance company can raise capital annually if required provided that the financiability condition is satisfied. 

(3) The assets of insurance companies are typically not limited to tradables, that is, ``fully liquid'' assets, but include also illiquid assets, for which trading is limited. We specifically consider ``fully illiquid assets'', which we assume can neither be bought nor sold.\footnote{``Fully illiquid assets'' in this sense are linked to the question whether insurance companies can benefit from ``holding assets to maturity'' (HTM) instead of having to sell them for a potentially unfavorable market price, which under Solvency II is related to ``volatility adjustment'' and ``matching adjustment''.} This implies that their use in production strategies is limited to using their cash inflows. 

(4) We distinguish between production strategies that only allow for non-negative units of tradables and production strategies that also allow for negative units. The latter amounts to extending the range of available instruments by allowing to take on additional liabilities by borrowing tradables (``going short''). This requires precisely defining the additional liabilities and to consider that they can trigger insolvency. This turns out to be linked to the question whether cash inflows at future dates, specifically from fully illiquid assets and insurance contract premiums, can be used for producing insurance contract cash outflows at earlier dates. Arguably, regulatory solvency capital requirements in some cases implicitly assume that this is possible.\footnote{For example, by acknowledging that the requirements account for ``solvency'' but not ``liquidity''.} As we show, this is related to whether production cost can be negative, e.g. due to premium inflows significantly exceeding cash outflows. 

A key element of the framework is the precise definition of production strategies that accounts for these aspects. We assume that production strategies proceed over successive one-year periods, in line with the one-year view for example under Solvency II and in the SST.  We initially consider production strategies with only non-negative units of tradables and that stop when failure occurs. Restricting to non-negative units of tradables is also worthwhile given that insurance companies often do not take on significant additional liabilities. The strategies can be extended under suitable assumptions to production strategies that can also contain negative units of tradables, and given suitable insolvency laws to production strategies that continue after failure.

One of the main results of this paper (Theorem~\ref{prop-valueextension}) characterizes when the valuation according to the framework recovers the market prices of tradables, more precisely, the market values of non-negative investment strategies of tradables. This is shown for a setting that corresponds to classical no-arbitrage theory, given by production strategies that can contain negative units of tradables, no fully illiquid assets, and the ``full'' fulfillment condition (i.e. requiring fulfillment almost surely), but additionally allowing for capital investments. Theorem~\ref{prop-valueextension} in particular sets a requirement on the financiability condition. For more general settings, specifically allowing for only sufficient fulfillment, we would not expect that the value would be recovered. 

A key idea of the paper is a two step procedure for constructing production strategies recursively backward in time over successive one-year periods. We illustrate this by a very simple concrete example. The example illustrates basic ideas of the earlier paper M\"ohr \cite{Moehr2011} that the framework of this paper shares but attempts to improve and generalizes. A similar setting is used in Section~\ref{ss-valuesolvencyiisst}, where we consider Solvency II and the SST as special cases. For the example, consider the task of calculating the production cost $\Bar{v}_0^{\phi}(L_1)$ at date $0$ of a production strategy $\phi$ for a given liability $L_1\geq 0$ at date $1$, with no cash flows between $0$ and $1$. In general, $L_1$ could be the production cost at date $t=1$ derived in the prior backward recursion. Let $r_{0,1}$ be the risk-free rate at date $0$ for a deterministic payoff at date $1$. As fulfillment condition, we assume that the liabilities $L_1$ need to be fulfilled by the assets $A_1$ with at least $99.5\%$ probability, which can be expressed by $\rho(A_1-L_1)\leq 0$, where $\rho$ is the $99.5\%$ value-at-risk (Section~\ref{ss-valuesolvencyiisst}). 

The first of the two steps of the construction is to determine assets with value $A_0$ at date $t=0$ and a strategy $\phi$ until $t=1$ such that the fulfillment condition is satisfied at $t=1$. For the example, let the strategy $\phi$ consist in investing tradables with value $A_0\geq 0$ at date $0$ risk-free until $t=1$, so that $A_1=(1+r_{0,1})\cdot A_0$. Then, if we select $A_0=(1+r_{0,1})^{-1}\cdot \rho(-L_1)$, the fulfillment condition is satisfied with equality. This is because then $A_1=\rho(-L_1)$ and thus using translation-invariance of $\rho$, we get $\rho(A_1-L_1)=-\rho(-L_1)+\rho(-L_1)=0$. The second step consists in using the financiability condition to split the value $A_0$ into a (maximum) amount $C_0$ that can be raised from capital investors and the remainder, which, as we show, is the production cost $\Bar{v}_0^{\phi}(L_1)=A_0-C_0$ under the selected strategy $\phi$. As is explained in Section~\ref{ss-nonnegprodstrategydefn}, the capital investors' claim at date $1$ is $(A_1-L_1)_{+}$. If this non-negative quantity is not positive with positive probability, no capital is used, $C_0=0$. Otherwise, a capital investor would likely be prepared to invest some positive amount $C_0>0$ at date $0$ for the claim $(A_1-L_1)_{+}$ at $t=1$, depending on the specific financiability condition. The production cost then is $\Bar{v}_0^{\phi}(L_1)=A_0-C_0$ for the following reason: given assets with value $A_0-C_0$, capital $C_0$ can be raised, giving assets with value $A_0$ that are then invested in the strategy. By construction, the financiability and the fulfillment condition then hold, so the liability $L_1$ can be produced. 

Further simplifying for illustration, let the financiability condition be given by a minimum expected excess return $\eta>0$ over the risk-free rate $r_{0,1}$, where we also ignore $()_{+}$, i.e. we have as minimum $E[A_1-L_1]= (1+r_{0,1}+\eta)\cdot C_0$. It then follows from a short calculation that the production cost under the strategy can be written as
\begin{equation}
	\Bar{v}_0^{\phi}(L_1)=\frac{E[L_1]}{1+r_{0,1}}+\frac{\eta}{1+r_{0,1}+\eta}\cdot \rho\left(\frac{E[L_1]-L_1}{1+r_{0,1}}\right)
\end{equation}
This can be viewed as the sum of a ``best estimate" and a ``risk margin", where the risk margin is given by the discounted capital cost for the discounted deviation of $L_1$ from $E[L_1]$. However, a separation between best estimate and risk margin is not an essential element of the framework and can in fact make calculations more complicated. 

We refer to the introduction section of Engsner et al. \cite{Engsner2020} and the references provided there for some of the many papers on the valuation of (insurance) liabilities. As in \cite{Engsner2020}, the current paper considers successive one-year capital requirements and insolvency and, in common also with Albrecher et al. \cite{Albrecher2022}, a key element is the definition and valuation of capital investments.

In the current paper, in line with M\"ohr \cite{Moehr2011} and unlike several other approaches to the valuation of insurance liabilities, the value of the liabilities is considered as a whole. A breakdown into e.g. a best estimate and a risk margin is not an essential element. A key difference to \cite{Moehr2011}, as well as to Engsner et al. \cite{Engsner2020} and others including Albrecher et al. \cite{Albrecher2022}, is that the production strategies are not restricted by assuming that the assets corresponding to the capital are invested separately and specifically in a numeraire asset. The current paper is intentionally more elementary than many other papers, not assuming the existence of (``risk-neutral") pricing measures and numeraires. A specific difference to Engsner et al. \cite{Engsner2020} is the specification of the capital investment, which in our case is set up for every one-year period with payout equal to the positive part of the value of assets over liabilities. In \cite{Engsner2020}, it consists of dividend streams until the option to default is exercised by capital investors and, in our understanding, does not consider the liability cash flows after the date considered. 

The paper is organized as follows. In Section~\ref{sec-setup}, we set up the framework, specifically cash flows, tradables, illiquid assets, and investment strategies, with specific consideration of portfolios containing negative units of tradables. In Section~\ref{sec-setup}, we set up the framework, specifically cash flows, tradables, illiquid assets, and investment strategies, with specific consideration of portfolios containing negative units of tradables. We introduce a normative condition on the market prices of tradables that is roughly comparable to ``no arbitrage" and that we call ``consistency". Consistency essentially means that if, at a given point in time, the market price of a suitable strategy is not smaller than that of another suitable strategy (almost surely), then this holds at any earlier point in time.  

Section~\ref{sec-productionofliabilitieswithcapital} sets out the production of liabilities in the framework, introducing the valuation of assets and liabilities in the balance sheet, fulfillment and the financiability condition, and the central definition of a production strategy, at first limited to strategies with non-negative units of tradables and that stop whenever failure occurs. This is extended in Section~\ref{subsec-extensionofadmstrategies} to strategies that can contain negative units and in Section~\ref{ss-prodstratinclfailure} to strategies that continue after failure.   

In Section~\ref{sec-propertiesofvalue}, we analyze properties of the valuation, specifically, in Section~\ref{subsec-perfectreplication}, under which conditions the market prices of tradables are recovered. Section~\ref{ss-addingshortpos} looks at adding short positions to liabilities and Section~\ref{subsec-prodwithilliquidassets} investigates production also using illiquid assets. Section~\ref{ss-valuesolvencyiisst} sets out how the valuation of insurance liabilities under Solvency II and in the SST can be seen as special cases of the framework. Section~\ref{sec-conclusion} concludes.

\section{Set-up of the framework}\label{sec-setup}

\subsection{General}

Given a natural number $T$, we consider a finite discrete set $D[0,T]$ of \emph{dates} $t_j\in [0,T]$ ($j=0,\ldots ,J$) that includes the start dates $t=i$ and end dates $t=i+1$ of the calendar years $i\in\{0,1,2,\ldots, T-1\}$ and at least another date within each calendar year, 
\begin{eqnarray*}
D[0,T] & = & \{ t_0=0\leq i=t_{j}<t_{j+1}<i+1=t_{\ell} \leq t_{J}=T\mid \\
& & \mbox{for } i\in\{0,\ldots ,T-1\}\} 
\end{eqnarray*}
We define the subset of dates $D[t_{min},t_{max}]=D[0,T]\cap \interval{t_{min}}{t_{max}}$ for dates ${t_{min}}, {t_{max}}\in D[0,T]$ with $t_{min}< t_{max}$. For $t_j\in D[0,T]$ with $t_j<T$ we define the successor date $\gamma  (t_j)$ by $\gamma  (t_j)=t_{j+1}$, and $\gamma  (T)=T+1$. We assume a filtered probability space defined by a probability space $(\Omega, \mathcal{F}, \mathbb{P})$ and a filtration $\mathbb{F}=(\mathcal{F}_t)_{t\in D[0,T]}$ consisting of an increasing sequence of $\sigma$-fields $\mathcal{F}_t$,
\begin{equation*}
\{ \emptyset, \Omega\}=\mathcal{F}_0\subseteq \mathcal{F}_{t_j} \subseteq \mathcal{F}_{t_{j+1}}\subseteq\mathcal{F}_T=\mathcal{F}
\end{equation*}
The field $\mathcal{F}_t$ is assumed to correspond to the information available (known) at date $t$. We assume zero sets are measurable. Unless otherwise specified, equalities and inequalities are assumed to hold $\mathbb{P}$-almost surely. Stochastic processes are assumed to be \emph{adapted} to the filtration $\mathbb{F}$ (e.g. $X_t$ is $\mathcal{F}_s$-measurable for $s\geq t$) except for investment strategies $\phi=(\phi_t)_t$ (see below), which by convention are assumed to be \emph{predictable}, i.e. the strategy denoted $\phi_{t_{j+1}}$ from $t_j$ to $t_{j+1}$ is $\mathcal{F}_{t_j}$-measurable.

We denote by $\mathbb{R}^{d+1}_{\geq 0}$ for $d\in\{0,1,\ldots\}$ the set of $x=(x_0,\ldots ,x_d)\in\mathbb{R}^{d+1}$ with $x_k\geq 0$ for all $k\in\{0,1,\ldots , d\}$. For $a,b\in\mathbb{R}$, we write $(a )_{+}=\max\{0,a\}$ and $a\wedge b=\min\{ a,b\}$. For a subset $A\in\mathcal{F}$, the indicator function $1_A$ is $1_A(\omega)=1$ for $\omega\in A$ and $1_A(\omega)=0$ otherwise. We write ``iff" for ``if and only if".

{\bf Cash flows.} We distinguish non-negative cash inflows, generally denoted by the letter $Z_t$, typically from assets, from non-negative cash outflows $X_t$, typically from liabilities, at dates $t\in D[0,T]$. We refer to in- and outflows often in relation to the insurance company under consideration but also sometimes in relation to an investment strategy (Definition~\ref{def-strategywithcashflows}). We distinguish \emph{contractually specified} non-negative cash flows $Z_t$ and $X_t$ from non-negative \emph{actual} cash flows $\tilde{Z}_t$ and $\tilde{X}_t$ that consider failure. Cash inflows from selling and cash outflows from buying an asset are not considered as cash flows ``of'' the asset. We consider a cash flow simply as a ``flow'' of tradables (see below) with a specific value, not restricting to cash in a specific currency.

{\bf Tradables.} We consider a financial market with $d+1$ \emph{tradables} defined as financial asset instruments enumerated by $k=0,1,\dots, d$ with $\mathcal{F}_t$-measurable non-negative reliable market prices $S_t=(S_t^0,S_t^1,\ldots ,S_t^d)$ in $\mathbb{R}^{d+1}_{\geq 0}$ with $S_t\neq 0$ at dates $t\in D[0,T]$. We make in general no further assumptions, specifically about the tradable $k=0$. The market price of an instrument is \emph{reliable} (as an idealization) if the instrument is ``fully liquid'', i.e. any quantity can be bought or sold instantly for a cash flow equal to the market price (disregarding transaction costs and other market frictions). For the company holding them as assets, tradables may generate $\mathcal{F}_t$-measurable actual cash inflows $\tilde{Z}_t^k\geq 0$ ($k\in\{ 0,\dots, d\}$) such as coupons for bonds and dividends for equities, for simplicity only at dates $t\in D[0,T]$. As a further simplification, we assume that the tradables have no cash outflows. We use the convention that the market price $S_t^k$ at a cash flow date $t\in D[0,T]$ \emph{does not include}, i.e. is ``immediately after" the cash inflow $\tilde{Z}_t^k$. Because cash flows are relevant in the framework for default and for illiquid assets, we use as $S_t^k$ the actually observed market prices and not total return indices derived by assuming that cash inflows are automatically reinvested in the tradable. 

In view of the asymmetry between assets and liabilities, we first consider in Section~\ref{ss-nonnegstrat} only non-negative (investment) strategies of tradables (i.e. assets), that is, with only non-negative units of tradables. In Section~\ref{ss-genstratrestrilliquid}, we then define general strategies that allow for negative units of tradables.

\subsection{Non-negative strategies, illiquid assets}\label{ss-nonnegstrat}
 
{\bf Non-negative strategies.} Strategies for producing insurance liabilities in general include cash inflows $\tilde{Z}_t$ and outflows ${X}_t$ from the insurance liabilities such as premiums and claims payments, respectively, that do not in general cancel each other out. For this reason, we consider non-negative strategies $\phi$ of tradables with finite discrete trading dates $t\in D[t_{min},t_{max}]$ that are not necessarily self-financing. A non-negative (asset) portfolio $\phi_{t_{j+1}}\geq 0$  of tradables for $t_j\in D[t_{min},t_{max}]$ with $t_j<t_{max}$ is defined to be an $\mathcal{F}_{t_j}$-measurable random vector on $\mathbb{R}^{d+1}_{\geq 0}$. Abbreviating $t\equiv t_{j+1}$, in particular, the number $\phi_{t}^{k}$ of units of any tradable $k$ is non-negative. The (actual) cash inflows of the portfolio $\phi_{t}$ are $\tilde{Z}_{t}^{\phi_{t}}=\sum_{k=0}^{d} \phi_{t}^{k}\cdot \tilde{Z}_{t}^{k}$, and we write $\phi_{t}\cdot S_s=\sum_{k=0}^{d} \phi_{t}^{k}\cdot S_s^{k}$.
\begin{defn}[Non-negative investment strategies, self-financing, terminal value]\label{def-strategywithcashflows}
For $D\equiv D[t_{min},t_{max}]$, a \emph{non-negative (investment) strategy} $\phi=(\phi_t)_{t\in D}$ of tradables with cash inflows $(\tilde{Z}_t)_{t\in D}$ and outflows $({X}_t)_{t\in D}$ to the strategy is a predictable sequence of non-negative portfolios $\phi_{t_j}$ for ${t_j}\in D\cup \{\gamma (t_{max})\}$, i.e. $\phi_{t_{min}}$ is $\mathcal{F}_{t_{min}}$-measurable and $\phi_{t_{j}}$ is $\mathcal{F}_{t_{j-1}}$-measurable for $t_j>t_{min}$ in $D\cup \{\gamma (t_{max})\}$, such that at any date $t_{j}\in D$,
\begin{equation}\label{eqn-selffinancing}
	\phi_{t_{j+1}}\cdot S_{t_{j}} = \phi_{t_{j}}\cdot S_{t_{j}} + \tilde{Z}_{t_{j}}^{\phi} + \tilde{Z}_{t_{j}} - X_{t_{j}}
\end{equation}
where we define $\tilde{Z}_{t_{j}}^{\phi}=\tilde{Z}_{t_{j}}^{\phi_{t_{j}}}$. The value $v_{t_{j}}(\phi)$ of a strategy $\phi$ at $t_{j}\in D$ is defined to be the market price of the portfolio $\phi_{t_{j+1}}$ after the cash flows at date $t_{j}\in D$, 
\begin{equation}\label{def-valuestrategy}
		v_{t_{j}}(\phi)=\phi_{t_{j+1}}\cdot S_{t_{j}}
\end{equation}
The value $v_{t_{max}}(\phi)$ is called \emph{terminal value} of the strategy $\phi$ on $D$.

The strategy $\phi$ is \emph{self-financing} iff $\tilde{Z}_{t_{j}}=X_{t_{j}}$ for any $t_{j}\in D$, i.e. \eqref{eqn-selffinancing} becomes
\begin{equation}\label{eqn-strictlyselffinancing}
	\phi_{t_{j+1}}\cdot S_{t_{j}} = \phi_{t_{j}}\cdot S_{t_{j}} + \tilde{Z}_{t_{j}}^{\phi}
\end{equation}
\end{defn}
\begin{remark} Note that we do not consider the cash inflows $\tilde{Z}_{t_{j}}^{\phi}$ to be part of the cash inflows $(\tilde{Z}_t)_{t\in D}$ of the strategy. Also note that \eqref{eqn-selffinancing} in general does not determine the portfolio $\phi_{t_{j+1}}$. 
\end{remark}

As an illustration, consider units of only one tradable $k\in\{0,\ldots ,d\}$ with non-zero cash inflows $\tilde{Z}_{t}^{\phi^k}\neq 0$. The strategy to invest the cash inflows $\tilde{Z}_{t}^{\phi^k}$ in additional units of the tradable $k$ has cash flows $\tilde{Z}_t={X}_t=0$ incoming to and outgoing of the strategy, respectively, and is thus self-financing according to Definition~\ref{def-strategywithcashflows}. In contrast, the strategy of passively holding the units unchanged has $\tilde{Z}_t=0$ and ${X}_t=\tilde{Z}_{t}^{\phi^k}\neq 0$ and so is not self-financing. Here, the cash flows ${X}_t$ are outgoing relative to the strategy but may remain within the company. 

In our framework, the following property of ``consistency'' of a set of tradables takes a role similar to the classical no-arbitrage condition.
\begin{defn}[Consistent tradables]\label{def-consistenttradables}
A set of tradables is \emph{consistent} on $D\equiv D[t_{min},t_{max}]$ iff for any two strategies $\phi$ and $\theta$ on $D$ with tradables in this set and any dates $t, \gamma (t)\in D$,
\begin{equation}\label{eqn-consistencyconditionsimple}
	 \phi_{\gamma (t)}\cdot S_{\gamma (t)} +\tilde{Z}_{\gamma (t)}^{\phi}\geq \theta_{\gamma (t)}\cdot S_{\gamma (t)} +\tilde{Z}_{\gamma (t)}^{\theta} \Rightarrow \phi_{\gamma (t)}\cdot S_{t}\geq \theta_{\gamma (t)}\cdot S_{t}
\end{equation}
\end{defn}
This is a condition on tradables because the strategy is not changed between the two dates $t$ and $\gamma (t)$. 
The corresponding property for strategies is the following: if, at some date $t_j\in D$, the values of the two strategies $\phi$ and $\theta$ satisfy $v_{t_{j}}(\phi)\geq v_{t_{j}}(\theta)$, then $v_{t_{\ell}}(\phi)\geq v_{t_{\ell}}(\theta)$ at any earlier date $t_{\ell}\in D$ with $t_{\ell}<t_j$. This property depends on consistency of tradables as well as on the conversion of the strategies according to \eqref{eqn-selffinancing} at the intermediate dates, as we now show. To investigate when the property holds, consider two non-negative strategies $\phi$ and $\theta$ with cash flows $\tilde{Z}_t, X_t$ and $\tilde{Z}_t^{\prime}, X_t^{\prime}$, respectively, such that $v_{t_{j}}(\phi)\geq v_{t_{j}}(\theta)$. Using the definition \eqref{def-valuestrategy} of the value and \eqref{eqn-selffinancing}, this implies $\phi_{t_{j}}\cdot S_{t_{j}}+\tilde{Z}_{t_{j}}^{\phi}+\tilde{Z}_{t_{j}}- X_{t_{j}} \geq  \theta_{t_{j}}\cdot S_{t_{j}}+\tilde{Z}_{t_{j}}^{\theta}+\tilde{Z}_{t_{j}}^{\prime}-X_{t_{j}}^{\prime}$. If we then assume that $\phi$ and $\theta$ have the same ``net cash flows", i.e. 
\begin{equation}\label{eqn-samenetcashflows}
	\tilde{Z}_{t}-X_{t}=\tilde{Z}_{t}^{\prime}-X_{t}^{\prime}\;\mbox{ for any }t\in D\mbox{ with }t_{\ell}<t\leq t_j
\end{equation}
we can apply consistency of tradables \eqref{eqn-consistencyconditionsimple} to get $v_{t_{j-1}}(\phi)\geq v_{t_{j-1}}(\theta)$. Continuing the argument recursively, we get $v_{t_{\ell}}(\phi)\geq v_{t_{\ell}}(\theta)$. Note that \eqref{eqn-samenetcashflows} clearly holds for self-financing strategies.

We can show similarly: among strategies with consistent tradables in the sense of \eqref{eqn-consistencyconditionsimple}, the value is uniquely determined by the terminal value and the net cash flows. Explicitly, two strategies $\phi$ and $\theta$ on $D\equiv D[t_{min},t_{max}]$ with the same terminal value $v_{t_{max}}(\phi)=v_{t_{max}}(\theta)$ (e.g. zero) and the same net cash flows $\tilde{Z}_{t}-X_{t}=\tilde{Z}_{t}^{\prime}-X_{t}^{\prime}$ for any $t\in D$ have the same value $v_{t}(\phi)=v_{t}(\theta)$ for any $t\in D$. 

{\bf Illiquid assets.} In addition to tradables, we consider a non-negative portfolio $\psi=(\psi_0^1,\ldots ,\psi_0^{l})\in\mathbb{R}^{l}_{\geq 0}$ of a finite number of \emph{``fully" illiquid assets} at date $t=0$, making the extreme assumption that these cannot be sold or bought, so $\psi$ is ``held to maturity" (HTM). We assume that any illiquid asset $\ell$ has $\mathcal{F}_t$-measurable actual cash inflows $\tilde{Z}_{t}^{\ell}\geq 0$ and no cash outflows. The corresponding non-negative strategy also denoted by $\psi$ has cash flows $\tilde{Z}_{t}^{\psi}$ at dates $t\in D[0,T]$ given by $\tilde{Z}_{t}^{\psi}=\sum_{\ell=1}^{l}\psi_0^{\ell}\cdot \tilde{Z}_{t}^{\ell}$. This corresponds to a (not self-financing) investment strategy as in Definition~\ref{def-strategywithcashflows} with no cash inflows $\tilde{Z}_t=0$ and cash outflows $X_t=\tilde{Z}_{t}^{\psi}$, i.e.  with $\psi_{\gamma (t)}\cdot S_t=\psi_t\cdot S_t$ corresponding to \eqref{eqn-selffinancing}.

\subsection{Short positions, general strategies, restrictions}\label{ss-genstratrestrilliquid}

{\bf Short position.} For positive units $\phi^k>0$ of the tradable $k$ as an asset, the corresponding negative units $\theta^k=-\phi^k<0$ by ``going short" are a liability. The asset that is the ``counterposition" to this liability in general is different from the original asset. E.g. ``going short" in cash, i.e. borrowing cash by some debt contract produces an asset position of the cash lender in the debt contract that is different from simply holding cash. This illustrates that asset and liability from ``going short" need to be defined through an agreement between the counterparties, with per-se several possible definitions. (Further, the agreement is exposed to failure of the liability counterparty.) In the definition we choose, the original asset is in particular "replicated" in the following sense: the asset counterparty of the agreement receives the actual (not the contractually specified) cash inflows $\phi^k\cdot\tilde{Z}_t^k$ of the units of the tradable $k$ (e.g. bond coupons, dividends) and in addition has the option at any time to get the current market price $\phi^k\cdot S_t^k$ and any cash inflows $\phi^k\cdot\tilde{Z}_t^k$ of the position in cash, which terminates the agreement. To use negative units of tradables in production, we need to assume that such agreements can be set up. We may further assume that they can be closed out.

We define short positions directly as a liability $\mathcal{L}(\phi)$ for non-negative strategies $\phi$ as in Definition~\ref{def-strategywithcashflows} with cash inflows $\tilde{Z}_t=0$ and cash outflows $X_t$ to the strategy $\phi$ but that are assumed to initially remain with the company. We consider general cash outflows $X_t$, with the cash inflows $\tilde{Z}_t^{\phi}$ going out being a typical special case. We define $\mathcal{L}(\phi)$ as a ``replication" of the strategy $\phi$ to the corresponding asset holder, in particular with the cash flows $X_t$ paid out to the asset holder, with an option for the asset holder to terminate the agreement and get the value in cash. Short positions in a tradable are a special case.  
\begin{defn}[Short position in non-negative strategy, available for production]\label{def-negativeshares}
Let $\phi\geq 0$ be a non-negative strategy on $D\equiv D[t_{min},t_{max}]$ with cash outflows $X_t$ and no cash inflows $\tilde{Z}_t=0$. The short position liability $\mathcal{L}({\phi})$ on $D$ is defined using a stopping time $\tau\in D$ by cash inflows $\tilde{Z}_t^{\mathcal{L}(\phi)}=0$ ($t\in D$) and cash outflows for $t\in D$,
\begin{equation}\label{def-assetstrategycashflows}
	X_t^{\mathcal{L}(\phi)} =X_t\mbox{ for }\; t<\tau\mbox{; }\; \phi_{\tau}\cdot S_{\tau} +\tilde{Z}_{\tau}^{\phi}\mbox{ for }t=\tau\mbox{; } 0\mbox{ for }t>\tau
\end{equation}	
The liability $\mathcal{L}(\phi)$ is assumed to be extinguished after date $\tau$, so its terminal value at $\tau$ is zero.

Portfolios containing negative units of a tradable $k\in\{0,1,\ldots ,d\}$ are said to be \emph{available for production} on $D$ iff it is possible at any $t\in D$, for the market price of the corresponding positive units of the tradable $k$, to borrow the units of the tradable $k$ by taking on the corresponding liability, and \emph{available for production with close out} iff it is in addition possible at any $t\in D$ for the market price to close out (pay back) such an existing liability.  
\end{defn}
Clearly, the liability $\mathcal{L}(\phi)$ can be produced by holding the strategy $\phi$ until date $\tau$ and in the meantime paying out the cash flows $X_t$.

Alternatively to the option of the asset holder to terminate the contract, leading to the stopping time in Definition~\ref{def-negativeshares}, we could simply consider liabilities $\mathcal{L}(\phi)$ where the value of the strategy is paid at date $t_{max}$, i.e. with the cash flow in \eqref{def-assetstrategycashflows} replaced by, for $t\in D$, 
\begin{equation}
	X_t^{\mathcal{L}(\phi)} =X_t\mbox{ for }t<t_{max}\mbox{; } \phi_{t_{max}}\cdot S_{t_{max}} +\tilde{Z}_{t_{max}}^{\phi}\mbox{ for }t=t_{max}\mbox{; } 0\mbox{ for }t>t_{max}
\end{equation}	
If we assume that portfolios containing negative units of tradables are available for production with close out (Definition~\ref{def-negativeshares}), then it is sufficient to consider this alternative, as the more general case can be produced by simply closing out the strategy at date $t=\tau$, because by closing out, the market price is paid out, i.e. $\phi_{\tau}\cdot S_{\tau} +\tilde{Z}_{\tau}^{\phi}$. This holds if we additionally assume that the liability is extinguished after close out, i.e. its terminal value at $\tau$ is zero.

{\bf General strategy.} A non-negative strategy as in Definition~\ref{def-strategywithcashflows} is characterized by asset portfolios and the portfolio conversion equation \eqref{eqn-selffinancing}. For the extension to general strategies also containing short positions, we formally decompose portfolios $\phi_t\in\mathbb{R}^{d+1}$ into $\phi_t=\phi_t^{+}-\phi_t^{-}$ with non-negative portfolios $\phi_t^{\pm}=(\pm \phi_t)_{+}\geq 0$ and interpret this as an asset portfolio $\phi_t^{+}$ together with a liability $\mathcal{L}^{*}(\phi^{-})$ to be defined through its cash flows. Inserting $\phi_t=\phi_t^{+}-\phi_t^{-}$ into the formal portfolio conversion equation \eqref{eqn-selffinancing} for $\phi$ at date $t\in D$ and rearranging we get a portfolio conversion equation for the non-negative strategy $\phi^{+}$:
\begin{equation}\label{eqn-genstrat}
	\phi_{\gamma (t)}^{+}\cdot S_{t}=\phi_{t}^{+}\cdot S_{t} +\tilde{Z}^{\phi^{+}}_{t}+\tilde{Z}_{t}-X_{t} - (\phi_{t}^{-}\cdot S_{t}+\tilde{Z}^{\phi^{-}}_{t}) + 		(\phi_{\gamma (t)}^{-}\cdot S_{t}) 
\end{equation}
In view of this, we define the cash flows for the liability $\mathcal{L}^{*}\equiv\mathcal{L}^{*}(\phi^{-})$ at dates $t$ to be the cash outflow $X_{t}^{\mathcal{L}^*}=\phi_{t}^{-}\cdot S_{t}+\tilde{Z}^{\phi^{-}}_{t}$ above from closing out at date $t$ a short position in the portfolio $\phi_{t}^{-}$ and the cash inflows $\tilde{Z}_{t}^{\mathcal{L}^{*}}=\phi_{\gamma (t)}^{-}\cdot S_{t}$ above from taking on a new short position in the portfolio $\phi_{\gamma (t)}^{-}$ provided that $t<t_{max}$.  
\begin{defn}[General strategy]\label{def-liabclosedtakenon}
Assume that portfolios containing negative units of tradables are available for production with close out. A strategy $\phi=\phi^{+}-\phi^{-}$ on $D\equiv D[t_{min},t_{max}]$ with cash flows $\tilde{Z}_t$ incoming to and cash flows $X_t$ outgoing of the strategy is defined as the non-negative strategy $\phi^{+}$ with cash flows $\tilde{Z}_t$ and $X_t$ together with the liability $\mathcal{L}^{*}\equiv\mathcal{L}^{*}(\phi^{-})$ defined by the cash inflows and outflows for any $t\in D$,
\begin{eqnarray}
	& & X_{t}^{\mathcal{L}^*} =\phi_{t}^{-}\cdot S_{t}+\tilde{Z}_{t}^{\phi^{-}}\\
	& & \tilde{Z}_{t}^{\mathcal{L}^{*}}=\phi_{\gamma (t)}^{-}\cdot S_{t}\mbox{ for }\; t<t_{max},\;\;\tilde{Z}_{t_{max}}^{\mathcal{L}^*}=0\label{def-tildeztforphi}
\end{eqnarray}	
The value of the strategy $\phi$ for $t\in D$, $t<t_{max}$ is defined as: $v_{t}(\phi)=v_{t}(\phi^{+})-v_{t}(\mathcal{L}^{*})$ with $v_{t}(\mathcal{L}^{*})=v_{t}(\phi^{-})$. 
\end{defn}
Consistency can be extended from non-negative to general strategies using consistency for non-negative strategies (Definition~\ref{def-consistenttradables}) and the same condition \eqref{eqn-samenetcashflows} on the cash flows. To show this, let $t\in D$ and assume $v_{\gamma (t)}(\phi)\geq v_{\gamma (t)}(\theta)$ for $\phi, \theta\in\mathcal{R}^{\prime}$. By the definition of the value from Definition~\ref{def-liabclosedtakenon} and \eqref{eqn-genstrat}, this is the same as $\phi_{\gamma (t)}^{+}\cdot S_{\gamma (t)} +\tilde{Z}^{\phi^{+}}_{\gamma (t)}+\tilde{Z}_{\gamma (t)}-X_{\gamma (t)} - (\phi_{\gamma (t)}^{-}\cdot S_{\gamma (t)}+\tilde{Z}^{\phi^{-}}_{\gamma (t)})\geq \theta_{\gamma (t)}^{+}\cdot S_{\gamma (t)} +\tilde{Z}^{\theta^{+}}_{\gamma (t)}+\tilde{Z}_{\gamma (t)}-X_{\gamma (t)} - (\theta_{\gamma (t)}^{-}\cdot S_{\gamma (t)}+\tilde{Z}^{\theta^{-}}_{\gamma (t)})$. Using \eqref{eqn-samenetcashflows}, we rearrange this inequality to get an inequality between the non-negative strategies $\phi^{+}+\theta^{-}$ and $\phi^{-}+\theta^{+}$ as on the right-hand side of \eqref{eqn-consistencyconditionsimple}, so consistency \eqref{eqn-consistencyconditionsimple} implies that $v_{t}(\phi^{+}+\theta^{-})\geq v_{t}(\phi^{-}+\theta^{+})$, so $v_{t}(\phi)\geq v_{t}(\theta)$.  

{\bf Restrictions.} In certain applications, strategies may have to be restricted e.g. to investing only in a subset of tradables, i.e. $\phi_t^k=0$ for one or several $k\in\{0,1,\ldots d\}$. 
\begin{defn}[Restrictions]\label{def-restrictions}
For a linear subspace $\mathcal{R}$ of $\mathbb{R}^{d+1}$ and a strategy $\phi$ on $D\equiv D[t_{min},t_{max}]$, we define, by slight abuse of notation, $\phi\in\mathcal{R}$ iff $\phi_t\in\mathcal{R}$ for any $t\in D$, $\phi\in\mathcal{R}^{\geq 0}$ iff $0\leq \phi_t\in\mathcal{R}$ for any $t\in D$, and $\phi\in \mathcal{R}^{\prime}$ iff $\phi_t\in\mathcal{R}$ and $v_t(\phi)\geq 0$ for any $t\in D$. (Note: $\phi\in\mathcal{R}\Rightarrow\phi^{\pm }\in\mathcal{R}$.)
\end{defn}

\section{Production of liabilities}\label{sec-productionofliabilitieswithcapital}

\subsection{Non-negative production strategies}\label{ss-nonnegprodstrategydefn}

{\bf Set-up.} We consider an insurance company that is subject to regulatory solvency capital requirements and insolvency laws. Roughly, the insolvency laws determine the circumstances under which the company fails and the consequences of failure, and the regulatory solvency capital requirements intend to make failure sufficiently unlikely. We link this to sufficient fulfillment of the insurance liabilities by production strategies: at specified dates, no failure is intended to be equivalent to sufficient fulfillment according to the fulfillment condition (Definition~\ref{def-fulfillmentcondition} below) that is encoded in the regulatory solvency capital requirements. We assume in line with typical rules that a company fails when it defaults on a payment that is due or is balance sheet insolvent, i.e. the value of its assets is less than the value of its liabilities. We additionally assume a financiability condition (Definition~\ref{def-financiabilitycondition} below) that specifies when the return from a capital investment is sufficient for capital investors. 

{\bf Insurance contracts.} Let $\mathcal{L}$ denote the insurance contracts that the company has written until date $0$, which we also sometimes imprecisely identify with the insurance liabilities from these contracts. The insurance contracts are assumed to be extinguished at date $T$, with no remaining cash flows or value, and we assume that the company writes no additional insurance contracts in the period $[0,T]$. We represent $\mathcal{L}$ by contractually specified cash outflows $X_t^{\mathcal{L}}\geq 0$ for claims and benefits and other e.g. administrative costs, and actual cash inflows $\tilde{Z}_t^{\mathcal{L}}\geq 0$ for premiums, at dates $t\in D[0,T]$. Our objective is to value the insurance liabilities $\mathcal{L}$ as a whole by the cost $\Bar{v}_{0}^{\phi,\psi,\mathcal{C}}(\mathcal{L})$ of production strategies $\phi$ that sufficiently produce them according to the fulfillment condition, using capital $\mathcal{C}=(C_i)_{i=0}^{T-1}$ and in general also illiquid assets $\psi$. As a simplification, we do not explicitly model the exercise of options by policyholders to terminate the insurance contracts and instead assume that this is implicitly taken into account in the cash flows $X_t^{\mathcal{L}}$ and $\tilde{Z}_t^{\mathcal{L}}$. 

{\bf Production of insurance contracts.} In the framework, insurance liabilities are fulfilled by production strategies that provide sufficient fulfillment. (Non-negative) production strategies as defined in the central Definition~\ref{def-admissibleinvestmentstrategy} below are more complicated than tradable investment strategies according to Definition~\ref{def-strategywithcashflows} with cash inflows $\tilde{Z}_t^{\mathcal{L}}$ and outflows $X_t^{\mathcal{L}}$ in particular due to sufficient fulfillment, capital, and fully illiquid assets. In the present section, we only consider strategies that contain only non-negative units of tradables and that do not cover what happens in case of failure. We additionally distinguish between strategies with non-negative and negative production cost, for reasons that we explain below. 

In line with the one-year view of regulatory capital requirements such as Solvency II and the SST, the production proceeds over successive one-year periods from $i-1$ to $i$ for $i=1,2,\ldots ,T$, with recapitalization assumed to be possible at each annual date $i-1$. For simplicity, we only allow for production strategies under which the insurance company (almost surely) does not default on the payments $X_{t}^{\mathcal{L}}$ at dates $t\in D[0,T]\setminus\{0,1,\ldots ,T\}$ and assume that balance sheet insolvency is only assessed at dates $i\in\{1,2,\ldots ,T\}$. Hence, by assumption, failure is only possible at the latter dates. 

At a date $i-1$, given no failure up to and including this date, the insurance company in general has a (non-negative) tradable asset portfolio $\phi_{\gamma (i-1)}$, an illiquid asset portfolio $\psi$, insurance liabilities $\mathcal{L}$, and capital arrangements with capital $C_{i-1}\geq 0$. From date $i-1$ to $i$, the company follows the given production strategy, which we denote by $\phi$. For this strategy, seen from date $i-1$, i.e. conditional on $\mathcal{F}_{i-1}$, the fulfillment condition is satisfied at $i$ and the financiability condition holds for the capital arrangement. Following the strategy, at each successive date $t\in D[0,T]$ with $i-1<t<i$, we assume the following sequence of steps: (i) cash inflows $\tilde{Z}_{t}^{\phi+\psi+\mathcal{L}}$ from the tradables $\phi$, the illiquid assets $\psi$, and the insurance contracts $\mathcal{L}$; (ii) the insurance obligations $X_{t}^{\mathcal{L}}$ are paid by selling tradables in $\phi_t$ or with the cash inflows $\tilde{Z}_{t}^{\phi+\psi+\mathcal{L}}$; and (iii) a new non-negative tradable portfolio $\phi_{\gamma (t)}$ is set up by potentially further trading tradables such that the following equality in terms of values holds:\footnote{Recall the illiquid assets $\psi$ can by assumption not be traded.} 
\begin{equation}\label{eqn-conversionredundant}
	\phi_{\gamma (t)}\cdot S_{t} = \phi_{t}\cdot S_{t} + \tilde{Z}_{t}^{\phi+\psi+\mathcal{L}} - X_{t}^{\mathcal{L}}\geq 0
\end{equation}
Non-negativity (almost surely) in \eqref{eqn-conversionredundant} corresponds to the above assumption that there is no default on the payments $X_{t}^{\mathcal{L}}$ for $i-1<t<i$.

At date $i$, the following additional aspects need to be taken into account: the company may fail; it may not be possible to continue the production strategy; and the value $C_{i}^{\prime}$ of the capital investment is assessed and potentially additional capital is raised. We make the following simplifying assumptions (which may not hold in reality due to different valuation standards): failure is defined by $A^{\prime}_{i}<L_{i}$ with the values $A^{\prime}_{i}$ of the assets and $L_{i}$ of the insurance liabilities according to Definition~\ref{def-assetsandliabilities} below, and the same valuation determines the value $C_{i}^{\prime}$ of the capital investment. The value of the capital investment is then given by $C_{i}^{\prime}=(A^{\prime}_{i}-L_{i})_{+}$ because the claims of capital investors are subordinated to all other obligations and non-negative. Non-negativity is a consequence of the limited liability of the capital investors, i.e. in case of failure $A^{\prime}_{i}<L_{i}$, the capital investors get nothing but are also not obliged to inject additional capital. Note that, under the assumptions, the capital investment cannot cause failure.

We assume the following sequence of steps at date $i$ and then explain why specific steps are possible: 
\begin{enumerate}[(1)]
\item  cash inflows $\tilde{Z}_{i}^{\phi+\psi+\mathcal{L}}$; 
\item failure is assessed by comparing $A^{\prime}_{i}$ to $L_{i}$. In case of no failure $A^{\prime}_{i}\geq L_{i}$, 
\item the insurance liability cash outflows $X_{i}^{\mathcal{L}}$ are provided;  
\item the compensation $C_{i}^{\prime}$ is provided to the capital investors; 
\item capital $C_{i}$ in the form of tradables is raised;\footnote{Steps (4) and (5) are presented as separate steps for convenience.} and 
\item the new tradable portfolio $\phi_{\gamma (i)}$ is derived by potentially further trading tradables, satisfying: 
\begin{equation}\label{eqn-conversionredundantip1}
	\phi_{\gamma (i)}\cdot S_{i} = \phi_{i}\cdot S_{i}+\tilde{Z}_{i}^{\phi+\psi+\mathcal{L}} - X_{i}^{\mathcal{L}} - C^{\prime}_{i} + C_{i}
\end{equation}
\end{enumerate}
The portfolio $\phi_{\gamma (i)}$ is the starting point for the production strategy for the next one-year period $i$ to $i+1$, which needs to be such that, conditional on $\mathcal{F}_{i}$, the fulfillment condition is satisfied at $i+1$, and the financiability condition is satisfied for the investment $C_{i}\rightarrow C_{i+1}^{\prime}$. The latter is why the capital $C_{i}$ can be raised in step 5, provided that moving to $\phi_{\gamma (i)}$ in step 6 is then possible. We explain below (Lemma~\ref{lemma-failure} following the definition of the values $A^{\prime}_{i}$ and $L_{i}$ in Definition~\ref{def-assetsandliabilities}) why it is possible to move from step 2 to steps 3 and 4 and then to step 6. The two equations \eqref{eqn-conversionredundant} and \eqref{eqn-conversionredundantip1} also show that, in case of no failure, production strategies are investment strategies $\phi$ in the sense of Definition~\ref{def-strategywithcashflows}, with cash inflows $\tilde{Z}_{i}^{\psi+\mathcal{L}}+C_{i}$ and outflows $X_{i}^{\mathcal{L}}+C^{\prime}_{i}$ at date $i$.

As we explain below, the above sequence of steps only works for strategies with non-negative production cost $\Bar{v}_{i}^{\phi, \psi,\mathcal{C}}(\mathcal{L})\geq 0$. To include negative production cost, which may e.g. occur when the outstanding premium cash inflows are relatively large, we introduce the following two additional steps 2bis between steps 2 and 3, and 5bis between 5 and 6: 
\begin{enumerate}
\item[(2bis)] tradables with value equal to $(-\Bar{v}_{i}^{\phi, \psi,\mathcal{C}}(\mathcal{L}))_{+}$ are borrowed by taking on a corresponding liability.
\item[(5bis)] the liability is paid back by tradables with value $(-\Bar{v}_{i}^{\phi, \psi,\mathcal{C}}(\mathcal{L}))_{+}$ from the capital $C_{i}$. 
\end{enumerate}
This obviously works if it is possible to take on and close out additional liabilities in this way. Without this assumption, we are restricted to production strategies with non-negative production cost, as we discuss in more detail in Section~\ref{ss-negprodcost}. In the following, we thus distinguish two situations: 
\begin{enumerate}[(A)]
\item the above assumption is made and production cost can also be negative and steps 2bis and 5bis are included; or 
\item the production strategies are restricted to non-negative production cost and steps 2bis and 5bis are not included.
\end{enumerate}
{\bf Production cost.} We define non-negative production cost $\Bar{v}_{i}^{\phi, \psi,\mathcal{C}}(\mathcal{L})\geq 0$ at date $i$ as the (minimum) value of non-negative asset portfolios of tradables at $i$ such that, if such a portfolio is available prior to step 5 above, then production with the given strategy $\phi$ is possible from date $i$. In the above sequence of steps, one would then raise capital $C_i$ in step 5 and convert the tradables in step 6 to the tradable portfolio $\phi_{\gamma (i)}$. It follows that the production cost is the (known) difference between the value $v_{i}(\phi)=\phi_{\gamma (i)}\cdot S_{i}$ and the capital $C_{i}$, i.e.  $\Bar{v}_{i}^{\phi, \psi,\mathcal{C}}(\mathcal{L})=\phi_{\gamma (i)}\cdot S_{i}-C_i$, provided that this difference is non-negative.  
 
Note that, although the production uses the illiquid assets $\psi$ and they in general reduce production cost in terms of the cost of the tradables required (Section~\ref{subsec-prodwithilliquidassets}), they are not part of the production cost $\Bar{v}_{i}^{\phi, \psi, \mathcal{C}}(\mathcal{L})$, which is limited to the value of the required tradables $\phi$. 

For negative production cost, the corresponding starting point at date $i$ is a tradable portfolio with negative value, i.e. to start with a liability and no tradable assets. Negative production cost $\Bar{v}_{i}^{\phi, \psi,\mathcal{C}}(\mathcal{L})<0$ at date $i$ can thus be defined as the negative of the (maximum) value of non-negative tradable portfolios that, in the additional step 5bis above, can be paid out immediately after raising the capital $C_i$ in step 5 above to settle the corresponding liability and then convert the remaining tradables with value $C_i-(-\Bar{v}_{i}^{\phi, \psi,\mathcal{C}}(\mathcal{L}))_{+}$ in step 6 to the tradable portfolio $\phi_{\gamma (i)}$. It again follows that the production cost is given by $\Bar{v}_{i}^{\phi, \psi,\mathcal{C}}(\mathcal{L})=\phi_{\gamma (i)}\cdot S_{i}-C_i$. Step 2bis and the need for steps 2bis and 5bis become clearer in the proof of Proposition~\ref{lemma-failure} below. 

{\bf Balance sheet valuation.} We define the values $A^{\prime}_{i}$ and $L_{i}$ in the following by considering the situation between steps 1 and 2 above, i.e. prior to the cash outflows $X_{i}^{\mathcal{L}}$ at date $i$; this allows accounting for failure caused by default. We then discuss the definition and show in Proposition~\ref{lemma-failure} below that no failure as defined by $A^{\prime}_{i}\geq L_{i}$ is equivalent to no default on $X_{i}^{\mathcal{L}}$ and the ability to continue the production strategy past date $i$. It follows from this in particular that steps 3 and 6 above are possible. 
\begin{defn}[Assets, liabilities]\label{def-assetsandliabilities}
Let $i\in\{1,2,\ldots,T\}$ and $\phi$ be a non-negative production strategy (Definition~\ref{def-admissibleinvestmentstrategy}) for the liabilities $\mathcal{L}$ using capital $\mathcal{C}$ and illiquid assets $\psi$ with production cost $\Bar{v}_{i}^{\phi, \psi,\mathcal{C}}(\mathcal{L})$. The value $A^{\prime}_{i}\geq 0$ of the assets and the value $L_{i}\geq 0$ of the liabilities are defined by (with $\tilde{Z}_{t}^{\phi+\theta}\equiv\tilde{Z}_{t}^{\phi}+\tilde{Z}_{t}^{\theta}$):
\begin{eqnarray}
    A^{\prime}_i = A^{\prime}_{i}(\mathcal{L},\phi,\psi,\mathcal{C}) & = & \phi_{i}\cdot S_{i} + \tilde{Z}_{i}^{\phi+\psi+\mathcal{L}}+(-\Bar{v}_{i}^{\phi, \psi,\mathcal{C}}(\mathcal{L}))_{+}\label{def-assetsi+1}\\
    L_{i} = L_{i}(\mathcal{L},\phi,\psi,\mathcal{C}) & = & X_{i}^{\mathcal{L}} + (\Bar{v}_{i}^{\phi, \psi,\mathcal{C}}(\mathcal{L}))_{+}\label{eqn-liabilitiesati+1}
\end{eqnarray}		
\end{defn}
\begin{remark} Definition~\ref{def-assetsandliabilities} contains all assets and liabilities at date $i$ in step 2 above explicitly or implicitly. The insurance contracts $\mathcal{L}$ are valued by their production cost as a liability when production cost are non-negative and as an asset otherwise. The illiquid assets $\psi$ and the cash inflows from $\mathcal{L}$, both of which by assumption can neither be bought nor sold, are accounted explicitly with their current cash inflows $\tilde{Z}_{i}^{\psi+\mathcal{L}}$ and implicitly in the production cost $\Bar{v}_{i}^{\phi, \psi,\mathcal{C}}(\mathcal{L})$ to the extent that the future cash inflows from $\psi$ and $\mathcal{L}$ are used to produce the future insurance liability cash outflows and thus reduce the production cost (see Section~\ref{subsec-prodwithilliquidassets}). 
\end{remark}

We now show how failure links to default and continuing the production strategy:
\begin{prop}\label{lemma-failure}
Failure at date $i\in\{1,2,\ldots T\}$ as defined by $A^{\prime}_{i}<L_{i}$ using Definition~\ref{def-assetsandliabilities} is equivalent to default on the payment $X_{i}^{\mathcal{L}}$ or the inability to continue the production strategy past date $i$.  
\end{prop}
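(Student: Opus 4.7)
The plan is to reduce both the balance-sheet failure condition $A'_i<L_i$ and the operational conditions (default on $X_i^{\mathcal{L}}$ or inability to reach $\phi_{\gamma(i)}$) to the same algebraic inequality, so that the equivalence becomes a bookkeeping calculation through the steps~(1)--(6) of the one-year transition.

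I would first rewrite the failure condition in a transparent form. Using Definition~\ref{def-assetsandliabilities} and the identity $(-x)_+-(x)_+=-x$, the inequality $A'_i\geq L_i$ is equivalent to
\[
    \phi_i\cdot S_i+\tilde{Z}_i^{\phi+\psi+\mathcal{L}}-X_i^{\mathcal{L}} \;\geq\; \Bar{v}_i^{\phi,\psi,\mathcal{C}}(\mathcal{L}),
\]
so that $C'_i=A'_i-L_i$ in the no-failure case and $C'_i=(A'_i-L_i)_+=0$ otherwise. I would also record for use throughout the defining identity $\phi_{\gamma(i)}\cdot S_i=\Bar{v}_i+C_i$ (abbreviating $\Bar{v}_i=\Bar{v}_i^{\phi,\psi,\mathcal{C}}(\mathcal{L})$) and the elementary fact $(\Bar{v}_i)_+-(-\Bar{v}_i)_+=\Bar{v}_i$, which will glue steps~(2bis) and (5bis) to the rest in Case~A and collapse trivially in Case~B where $\Bar{v}_i\geq 0$.

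For the forward direction, assuming $A'_i\geq L_i$, I would track the value of the tradable portfolio step by step. After (1) and (2bis) the tradables are worth $A'_i$; since $A'_i\geq L_i\geq X_i^{\mathcal{L}}$, step~(3) pays $X_i^{\mathcal{L}}$ without default and leaves $A'_i-X_i^{\mathcal{L}}\geq C'_i$; step~(4) pays $C'_i$ and leaves $L_i-X_i^{\mathcal{L}}=(\Bar{v}_i)_+$; steps~(5) and (5bis) then bring the balance to $(\Bar{v}_i)_++C_i-(-\Bar{v}_i)_+=\Bar{v}_i+C_i=\phi_{\gamma(i)}\cdot S_i\geq 0$, which is precisely what \eqref{eqn-conversionredundantip1} requires for step~(6). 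Conversely, if $A'_i<L_i$ I would split: when $A'_i<X_i^{\mathcal{L}}$ the company cannot pay $X_i^{\mathcal{L}}$ even after borrowing $(-\Bar{v}_i)_+$ in (2bis) and hence defaults; when $X_i^{\mathcal{L}}\leq A'_i<L_i$, then $C'_i=0$ and the tradables available after (5), namely $A'_i-X_i^{\mathcal{L}}+C_i$, fall short of the amount $(\Bar{v}_i)_++C_i$ needed to cover both the short payback $(-\Bar{v}_i)_+$ of step~(5bis) and the target $\phi_{\gamma(i)}\cdot S_i=\Bar{v}_i+C_i$ of step~(6), since $A'_i-X_i^{\mathcal{L}}<L_i-X_i^{\mathcal{L}}=(\Bar{v}_i)_+$.

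The main obstacle I anticipate is the clean bookkeeping of steps~(2bis) and (5bis), because the positive and negative parts of $\Bar{v}_i$ appear simultaneously inside $A'_i$, inside $L_i$, and as the short-position liability created and closed out within the period. Once the two identities $(\Bar{v}_i)_+-(-\Bar{v}_i)_+=\Bar{v}_i$ and $\phi_{\gamma(i)}\cdot S_i=\Bar{v}_i+C_i$ are in hand, Cases~A and B are treated uniformly and both directions reduce to the same comparison of the tradable balance against the required outflows.
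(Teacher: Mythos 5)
Your proof is correct and takes essentially the same route as the paper's: both reduce $A'_i\geq L_i$ (resp. $A'_i<L_i$) to the comparison of $\phi_i\cdot S_i+\tilde{Z}_i^{\phi+\psi+\mathcal{L}}-X_i^{\mathcal{L}}$ with $\Bar{v}_i^{\phi,\psi,\mathcal{C}}(\mathcal{L})$ via $(x)_{+}-(-x)_{+}=x$, use steps (2bis)/(5bis) to cover negative production cost, and use $\Bar{v}_i^{\phi,\psi,\mathcal{C}}(\mathcal{L})+C_i=\phi_{\gamma(i)}\cdot S_i$ to show the target portfolio is (or is not) reachable. Your step-by-step tracking of the tradable balance through steps (1)--(6) and the explicit split of the failure case into default ($A'_i<X_i^{\mathcal{L}}$) versus inability to continue is just a more detailed rendering of the paper's compressed argument.
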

\begin{proof}
We have $L_{i}\geq X_{i}^{\mathcal{L}}$ by definition. First assume no failure, $A^{\prime}_i\geq L_{i}$. In case of non-negative production cost, the value of the total tradable assets is $A^{\prime}_i=\phi_{i}\cdot S_{i} + \tilde{Z}_{i}^{\phi+\psi+\mathcal{L}}$, so they are sufficient to pay the cash outflows $X_{i}^{\mathcal{L}}$. In case of negative production cost, however, the assets with value $(-\Bar{v}_{i}^{\phi, \psi,\mathcal{C}}(\mathcal{L}))_{+}>0$ are not tradables, and given only $A^{\prime}_i\geq L_{i}$, the value $\phi_{i}\cdot S_{i} + \tilde{Z}_{i}^{\phi+\psi+\mathcal{L}}$ of the tradable assets is not necessarily sufficiently large to pay $X_{i}^{\mathcal{L}}$. This is where step 2bis above is used and links to the above definition of negative production cost: with the additional tradables with value equal to $-\Bar{v}_{i}^{\phi, \psi,\mathcal{C}}(\mathcal{L})> 0$ borrowed in step 2bis, the total value of the tradables is $A^{\prime}_i$, so the cash outflows $X_{i}^{\mathcal{L}}$ can be paid and there is no default. To show that the production strategy can be continued, note that the value $C_{i}^{\prime}$ of the capital investment considers all assets and liabilities potentially including the borrowed tradables and the corresponding liability but these cancel each other out and so $C_{i}^{\prime}=A^{\prime}_{i}-L_{i}$. Inserting the definitions of $A^{\prime}_{i}$ and $L_{i}$, rearranging terms to isolate $\Bar{v}_{i}^{\phi, \psi,\mathcal{C}}(\mathcal{L})$ on the right-hand side, and adding $C_i$ to both sides, we get $\phi_{i}\cdot S_{i} + \tilde{Z}_{i}^{\phi+\psi+\mathcal{L}}-X_{i}^{\mathcal{L}}-C_{i}^{\prime}+C_i= \Bar{v}_{i}^{\phi, \psi,\mathcal{C}}(\mathcal{L})+C_i=\phi_{\gamma (i)}\cdot S_{i}$ as required. 

Finally, assume failure $A^{\prime}_i< L_{i}$, we similarly get $\phi_{i}\cdot S_{i} + \tilde{Z}_{i}^{\phi+\psi+\mathcal{L}}< X_{i}^{\mathcal{L}}+ \Bar{v}_{i}^{\phi, \psi,\mathcal{C}}(\mathcal{L})$ and $C_{i}^{\prime}=0$. Possibly, there is then default on $X_{i}^{\mathcal{L}}$, but in any case, we get from the two previous relationships that $\phi_{i}\cdot S_{i} + \tilde{Z}_{i}^{\phi+\psi+\mathcal{L}}-X_{i}^{\mathcal{L}}-C_{i}^{\prime}+C_i< \Bar{v}_{i}^{\phi, \psi,\mathcal{C}}(\mathcal{L})+C_i=\phi_{\gamma (i)}\cdot S_{i}$, so the production strategy cannot be continued past date $i$. 
\end{proof}
\begin{remark} In our framework, failure at a date $i$ means default at this date or not sufficient fulfillment after the date, i.e. default can occur in the future ``more often" than allowed by the fulfillment condition. Proposition~\ref{lemma-failure} does not directly show this, because failure $A^{\prime}_i< L_{i}$ is defined for a specific production strategy $\phi$ and there may be other production strategies for the production after date $i$ with lower production cost. This can be overcome by changing the production strategy after date $i$ to a strategy with minimal production cost. Because finding such a minimum can be challenging, we allow in the following also for production strategies that are not minimal in this sense and may thus overestimate production cost or provide more capital compensation than would be needed. In line with this, fulfillment and financiability condition are defined below as minimum requirements.  
\end{remark}

{\bf Definitions.} We now define fulfillment and financiability condition and non-negative production strategies. Because of Proposition~\ref{lemma-failure}, sufficient fulfillment is defined in Definition~\ref{def-fulfillmentcondition} as a condition on the set on which $A^{\prime}_{i}\geq L_{i}$. Financiability conditions (Definition~\ref{def-financiabilitycondition}) express when the capital investment $C_{i-1}\rightarrow C_{i}^{\prime}=(A^{\prime}_{i}-L_{i})_{+}$ is acceptable to the capital investors. From now on, for convenience, we write a one-year period from date $i$ to $i+1$ instead of from $i-1$ to $i$.
\begin{defn}[Fulfillment condition]\label{def-fulfillmentcondition}
A {fulfillment condition} is a condition on the set $M_{i+1}=\{A^{\prime}_{i+1}\geq L_{i+1}\}\in\mathcal{F}_{i+1}$ ($i\in\{0,1,\ldots ,T-1\}$) \emph{conditional on} $\mathcal{F}_{i}$ with the property that if the set $M_{i+1}$ satisfies the condition, then so does any larger set $M\supseteq M_{i+1}$ in $\mathcal{F}_{i+1}$. The \emph{full fulfillment condition} corresponding to \emph{full production} is $\mathbb{P}[M_{i+1}\mid\mathcal{F}_{i}]=1$. 
\end{defn} 
For example (see Section~\ref{ss-valuesolvencyiisst}), the fulfillment condition under Solvency II is $\mathbb{P}[M_{i+1}\mid\mathcal{F}_{i}]\geq 0.995$. 

\begin{defn}[Financiability condition, consistent with and neutral to the tradables]\label{def-financiabilitycondition}
A financiability condition is a condition, for $i\in\{ 0,1,\ldots ,T-1\}$, on when the stochastic return from the capital investment
\begin{equation}\label{expr-payoffcapital}
     0\leq C_{i}\rightarrow C_{i+1}^{\prime}=(A^{\prime}_{i+1}-L_{i+1})_{+}  
\end{equation}
is sufficient for the capital investors, that has the following properties:
\begin{enumerate}[(a)]
\item The investment $0\rightarrow 0$ satisfies the financiability condition.
\item If $C_{i}\rightarrow C_{i+1}^{\prime}$ satisfies the financiability condition, then so does $C_{i}^{*}\rightarrow C_{i+1}^{*\prime}\,$ for any $0\leq C_{i}^{*}\leq C_{i}$ and $C_{i+1}^{*\prime}\geq C_{i+1}^{\prime}$. 
\end{enumerate}
We define:  
\begin{enumerate}[(1)]  
\item A financiability condition is \emph{positively homogeneous} iff, whenever $C_i\rightarrow C_{i+1}^{\prime}$ satisfies the financiability condition, then so does $\lambda_i\cdot C_i\rightarrow \lambda_i\cdot C_{i+1}^{\prime}$ for $\mathcal{F}_i$-measurable $\lambda_i\geq 0$.
\item It is \emph{consistent with the tradables} iff, whenever $C_{i}\rightarrow C_{i+1}^{\prime}$ satisfies the financiability condition, the strategy $(\phi_t)_{t=i}^{i+1}\in\mathcal{R}^{\prime}$ is self-financing, and $C_{i+1}^{\prime}\leq \phi_{i+1}\cdot S_{i+1}+\tilde{Z}_{i+1}^{\phi}$, then $C_{i}\leq v_i(\phi)$.
\item It is \emph{neutral to the tradables} iff, whenever $C_{i}\rightarrow C_{i+1}^{\prime}$ satisfies the financiability condition and the strategy $(\phi_t)_{t=i}^{i+1}\in\mathcal{R}^{\prime}$ is self-financing, then so does $C_i+v_i(\phi)\rightarrow C_{i+1}^{\prime}+\phi_{i+1}\cdot S_{i+1}+\tilde{Z}_{i+1}^{\phi}$.
\end{enumerate}  
\end{defn}
Definition~\ref{def-financiabilitycondition} also extends consistency for tradables from Definition~\ref{def-consistenttradables} to include capital investments. This is in particular used in Section~\ref{subsec-perfectreplication}. Intuitively, consistency with the tradables means that a capital investor would not invest more than the market price for the payoff of a self-financing strategy of tradables. Neutrality to the tradables, which is in particular used in Section~\ref{subsec-prodwithilliquidassets}, intuitively means that a capital investor would accept to additionally invest in the payoff of a self-financing strategy of tradables for their market price. 

We now define non-negative production strategies.
\begin{defn}[Non-negative production strategy]\label{def-admissibleinvestmentstrategy}
Let the following be given: dates $i_{min}<i_{max}$ in $\{0,1,\ldots ,T\}$, insurance contracts $\mathcal{L}$ with $\mathcal{F}_{i_{max}}$-measurable terminal value $Y_{i_{max}}$, a fulfillment and a financiability condition, and a set $M\in \mathcal{F}_{i_{min}}$. A \emph{non-negative production strategy} $\phi\equiv \phi (\mathcal{C}, \psi, D, M)$ for producing the liabilities $\mathcal{L}$ on $D\equiv D[i_{min},i_{max}]$ and $M$ using capital $\mathcal{C}$ and illiquid assets $\psi$ is a non-negative strategy $\phi=(\phi_t)_{t\in D}\in\mathcal{R}^{\geq 0}$ together with capital amounts $\mathcal{C}=(C_i)_{i=i_{min}}^{i_{max}-1}$ with $C_i=C_i(\phi,\psi)\geq 0$ and illiquid assets $\psi$ such that, on $M$, for any $i\in\{i_{min},\ldots ,i_{max}-1\}$, whenever $A^{\prime}_{i}\geq L_{i}$ according to Definition~\ref{def-assetsandliabilities}, the following conditions are satisfied:
\begin{enumerate}[(a)]  
	\item\label{cond-genfinancing} For any $t\in D$ with $i< t< i+1$,  
	\begin{equation}\label{def-generalfinancing}
		\phi_{\gamma (t)}\cdot S_{t} = \phi_{t}\cdot S_{t} + \tilde{Z}_{t}^{\phi+\psi+\mathcal{L}} - X_{t}^{\mathcal{L}}\geq 0
	\end{equation}
	\item\label{FinCati} The capital investment $C_{i}\rightarrow C_{i+1}^{\prime}$ satisfies the financiability condition. 
	\item\label{FulCati} At $i+1$, the fulfillment condition is satisfied. 
\end{enumerate}	
The \emph{production cost} of the liabilities $\mathcal{L}$ at date $i\in\{i_{min}, \ldots,i_{max}-1\}$ is defined with $v_i(\phi)=\phi_{\gamma  (i)}\cdot S_i$ as:
\begin{equation}\label{def-prodcost}
	\Bar{v}_{i}^{\phi, \psi, \mathcal{C}}(\mathcal{L})=v_i(\phi)-C_i
\end{equation}
If there is no tradable for which portfolios containing negative units of the tradable are available for production with close out (Lemma~\ref{def-negativeshares}), production strategies must have non-negative production cost $\Bar{v}_{i}^{\phi, \psi, \mathcal{C}}(\mathcal{L})\geq 0$. 
\end{defn}
\begin{remark} In case production strategies in Definition~\ref{def-admissibleinvestmentstrategy} must have non-negative production cost, a production strategy with zero production cost can be derived from a production strategy with negative production cost by decreasing the capital investment to $C_i-(-\Bar{v}_{i}^{\phi, \psi, \mathcal{C}}(\mathcal{L}))=v_i(\phi)$, which by assumption is non-negative and less than $C_i$. 
\end{remark}
\begin{remark} The terminal value $Y_{i_{max}}$ is typically zero if there are no non-zero cash flows of $\mathcal{L}$ after date $i_{max}$, specifically after terminal date $T$, and can then be disregarded. Non-zero $Y_{i_{max}}$ (and the set $M$) are in particular needed for patching together production strategies (Section~\ref{ss-constrprodstrat}). 
\end{remark}

{\bf Value of insurance contracts.} The value of insurance contracts $\mathcal{L}$ is defined as the minimum or more generally the essential infimum over the applicable production strategies. We do not study in this paper when the essential infimum is given by a minimal production strategy.  
\begin{defn}[Value of insurance contracts]\label{def-valuebar}
Given fulfillment and financiability condition, illiquid assets $\psi$, and restrictions $\mathcal{R}^{\geq 0}$, the value $\Bar{v}_{i}^{\psi}(\mathcal{L})$ at $i\in\{0,\ldots ,T-1\}$ of insurance contracts $\mathcal{L}$ with terminal value $Y_{T}$ is defined as
   \begin{equation}
        \Bar{v}_{i}^{\psi}(\mathcal{L}) = \operatorname*{inf~ess}\{\Bar{v}_{i}^{\phi, \psi,\mathcal{C}}(\mathcal{L})\}
   \end{equation}
with the essential infimum taken over all production strategies $\phi$ from $i$ to $T$ with illiquid assets $\psi$ for the given fulfillment and financiability condition and restrictions $\mathcal{R}^{\geq 0}$. By convention, $\operatorname*{inf~ess} \emptyset :=+\infty$.
\end{defn}

\subsection{Negative production cost and future cash inflows}\label{ss-negprodcost}

Negative production cost are related to the question whether future cash inflows, such as premiums or inflows from illiquid assets, can be used to pay for insurance liability cash outflows occuring at earlier dates. This is a problem that may not always be captured by solvency capital requirements. As a very simple illustrative example, we consider two one-year periods from dates $i=0$ to $i=1$, and $i=1$ to $i= 2$, and deterministic insurance contract premium cash inflows $(\tilde{Z}_{1}^{\mathcal{L}},\tilde{Z}_{2}^{\mathcal{L}})=(0,100)$ and claims cash outflows $(X_{1}^{\mathcal{L}},X_{2}^{\mathcal{L}})=(10,0)$. Although the premium inflows $\tilde{Z}_{2}^{\mathcal{L}}$ significantly exceed the claims outflows $X_{1}^{\mathcal{L}}$, they can obviously not be used directly to pay the claims at date $i=1$ because they occur at the later date $i=2$. So, without additional tradables, default occurs at $i=1$. In line with this, without further assumptions, the production cost at $i=0$ is positive (e.g. equal to 10 discounted risk-free from $i=1$ to $i=0$). At the same time, the production cost at $t=1$ (which are after the cash flow $X_{1}^{\mathcal{L}}=10$) would be negative, as the cash inflows at $t=2$ of $\tilde{Z}_{2}^{\mathcal{L}}=100$ exceed the outflows of $X_{2}^{\mathcal{L}}=0$. The assumption about taking on and closing out additional liabilities set out in Section~\ref{ss-nonnegprodstrategydefn} allows ``pulling back" the negative production cost at $t=1$ to $t=0$ by allowing to borrow tradables at $t=1$ with which to pay the outflows $X_{1}^{\mathcal{L}}=10$ and to subsequently pay back the borrowed tradables with the capital $C_1$ raised at $t=1$, for which the capital investors are compensated with the excess of the inflows $\tilde{Z}_{2}^{\mathcal{L}}=100$ over the outflows $X_{2}^{\mathcal{L}}=0$ at $t=2$.

\subsection{Production strategies covering failure}\label{ss-prodstratinclfailure}

Production strategies $\phi$ for liabilities $\mathcal{L}$ are defined in Definition~\ref{def-admissibleinvestmentstrategy} for arbitrary fulfillment conditions, so fulfillment may only be required in sufficiently many cases ($A^{\prime}_i\geq L_i$) and not all cases as for full replication. For the following, we assume that the production cost $\Bar{v}_{i}^{\phi, \psi, \mathcal{C}}(\mathcal{L})$ in $L_i$ has been calculated for all cases, including $A^{\prime}_i< L_i$.  Proposition~\ref{prop-completeproductionstrategy} in Section~\ref{ss-prodstratinclfailuredetail} in the Appendix shows that, for a positively homogeneous financiability condition, any production strategy $\phi$ for given liabilities $\mathcal{L}$ with an arbitrary fulfillment condition can be extended, by covering the cases in which there is failure $A^{\prime}_i< L_i$, to a production strategy $\widetilde{\phi}$ for the full fulfillment condition for adjusted liabilities $\widetilde{\mathcal{L}}$ that are identical to the non-adjusted liabilities $\mathcal{L}$ in ``sufficiently many cases". The adjusted liabilities $\widetilde{\mathcal{L}}$ can thus be seen as a ``redefinition" of $\mathcal{L}$ that is ``allowed" by the applicable fulfillment condition. 

However, by the assumptions from Section~\ref{ss-nonnegprodstrategydefn}, balance sheet insolvency $A^{\prime}_i< L_i$ implies failure, so insolvency laws take over. So, the procedure only works in reality if the ``redefinition" is consistent with applicable insolvency laws. The requirement on the insolvency laws is that, in case of failure, there is in particular no bankruptcy liquidation but instead all outstanding cash inflows and outflows are proportionally reduced by essentially the same factor such that balance sheet insolvency is just removed, leading to adjusted liabilities. To avoid breaking the flow, the precise definitions and analyses are provided in Section~\ref{ss-prodstratinclfailuredetail} in the Appendix.

\subsection{Constructing production strategies}\label{ss-constrprodstrat}

Production strategies can be constructed recursively backward in time over successive one-year periods. In this paper, we do not systematically study the existence and construction of production strategies but only provide general comments and specific examples.

In the recursion at date $i\in\{0,1,\ldots , T-1\}$, a production strategy $(\phi_t)_{t=i+1}^{T}$ for the period from $i+1$ to $T$ with capital amounts $\mathcal{C}=(C_t)_{t=i+1}^{T-1}$ and production cost $\Bar{v}_{i+1}^{\phi, \psi,\mathcal{C}}(\mathcal{L})=\phi_{\gamma (i+1)}\cdot S_{i+1}-C_{i+1}$ has been constructed. This may potentially consist of several production strategies defined on disjoint sets in $\mathcal{F}_{i+1}$ and may in practice potentially be only defined for ``sufficiently many cases". With this, the value $L_{i+1}=X_{i+1}^{\mathcal{L}}+\Bar{v}_{i+1}^{\phi, \psi,\mathcal{C}}(\mathcal{L})$ of the liabilities is given. 

The crucial task then is to construct a production strategy $\theta =(\theta_t)_{t=i}^{i+1}$ from date $i$ to $i+1$. With such a strategy $\theta$, first, at dates $t\in D$ with $i<t<i+1$, equation \eqref{def-generalfinancing} must be fulfilled; in particular, the liability cash flows $X_{t}^{\mathcal{L}}$ must be paid. Second, at date $i+1$, with the resulting assets with value $A^{\prime}_{i+1}= \theta_{i+1}\cdot S_{i+1} + \tilde{Z}_{i+1}^{\theta+\psi+\mathcal{L}}$, the fulfillment condition needs to be satisfied, i.e. $A^{\prime}_{i+1}\geq L_{i+1}$ ``in sufficiently many cases". If these two requirements are not satisfied, it can potentially be achieved by increasing the strategy $\theta$, e.g. by adding cash at date $i$ or increasing the units of the tradables. Once the requirements are satisfied, the properties of the financiability condition from Definition~\ref{def-financiabilitycondition} imply that the capital investment $C_i\rightarrow C_{i+1}^{\prime}=(A^{\prime}_{i+1}- L_{i+1})_{+}\geq 0$ satisfies the financiability condition at least for $C_i=0$ and possibly for a maximal $C_i>0$. The production cost at date $i$ are then given by $\Bar{v}_{i}^{\theta, \psi,\mathcal{C}}(\mathcal{L})=\theta_{\gamma (i)}\cdot S_{i}-C_{i}$, and the two strategies can be ``patched together" to a strategy $(\phi_t)_{t=i}^{T}$ from $i$ to $T$.

\subsection{General production strategies}\label{subsec-extensionofadmstrategies}
Definition~\ref{defn-admgeneralstrategies} below extends non-negative production strategies $\phi\in\mathcal{R}^{\geq 0}$ from Definition~\ref{def-admissibleinvestmentstrategy} under specific assumptions to what we call general production strategies $\phi\in\mathcal{R}^{\prime}$, so also allowing for short position and with non-negative values. This extension is in particular needed for Theorem~\ref{prop-valueextension}. Writing $\phi\in\mathcal{R}^{\prime}$ as $\phi=\phi^{+}-\phi^{-}$ with $\phi_t^{\pm}=(\pm\phi_t)_{+}\geq 0$, the extension is defined by interpreting $\phi$ as the non-negative production strategy $\phi^{+}$ for the liabilities $\mathcal{L}+\mathcal{L}^*({\phi^{-}})$, with $\mathcal{L}^*({\phi^{-}})$ as in Definition~\ref{def-liabclosedtakenon}. This requires suitable short positions to be available for production with close out (Definition~\ref{def-negativeshares}). As a simplification, we restrict to full production, i.e. the full fulfillment condition. For more general fulfillment conditions, the liability $\mathcal{L}^*({\phi^{-}})$ is potentially not fulfilled almost surely, so we would have short positions of tradables that are only fulfilled ``in sufficiently many cases". Proposition~\ref{prop-admgeneralstrategies} below shows that, under the above assumptions, general production strategies $\phi\in\mathcal{R}^{\prime}$ are characterized algebraically by the same conditions \eqref{cond-genfinancing}, \eqref{FinCati}, and \eqref{FulCati} as non-negative production strategies in Definition~\ref{def-admissibleinvestmentstrategy}. 
\begin{defn}[General production strategies]\label{defn-admgeneralstrategies}
Let dates $i_{min}<i_{max}$ in $\{0,1,\ldots ,T\}$, liabilities $\mathcal{L}$ with $\mathcal{F}_{i_{max}}$-measurable terminal value $Y_{i_{max}}$, the full fulfillment condition, a financiability condition, and a set $M\in \mathcal{F}_{i_{min}}$ be given. Assume that portfolios containing negative units in $\mathcal{R}^{\prime}$ are available for production with close out (Definition~\ref{def-negativeshares}). A \emph{general production strategy} $\phi$ for producing the liabilities $\mathcal{L}$ with terminal value $Y_{i_{max}}$ on $D\equiv D[i_{min},i_{max}]$ and $M$ using capital and illiquid assets is a strategy $(\phi)_{i=i_{min}}^{i_{max}}\in \mathcal{R}^{\prime}$ with $\phi_t^{\pm}=(\pm\phi_t)_{+}\geq 0$ together with capital amounts $\mathcal{C}=(C_{i})_{i=i_{min}}^{i_{max}}$ with $C_i=C_i(\phi,\psi)\geq 0$ and illiquid assets $\psi$ such that $\phi^{+}\in\mathcal{R}^{\geq 0}$ is a non-negative production strategy with capital amounts $\mathcal{C}$ and illiquid assets $\psi$ for the sum $\mathcal{L}+\mathcal{L}^*({\phi^{-}})$ (Definition~\ref{def-liabclosedtakenon}) with terminal value $Y_{i_{max}}$. The value $\bar{v}_i^{\phi,\psi,\mathcal{C}}(\mathcal{L})$ for $i\in\{i_{min},\ldots ,i_{max}-1\}$ is
\begin{equation}\label{def-valuewithgenstrategy}
	\bar{v}_i^{\phi,\psi,\mathcal{C}}(\mathcal{L})=\bar{v}_i^{\phi^{+},\psi,\mathcal{C}}(\mathcal{L}+\mathcal{L}^*({\phi^{-}}))-v_i(\phi^{-})
\end{equation}	
\end{defn}
\begin{prop}\label{prop-admgeneralstrategies}
Let the full fulfillment condition apply and portfolios containing negative units in $\mathcal{R}^{\prime}$ be available for production with close out (Definition~\ref{def-negativeshares}). A strategy $\phi\in\mathcal{R}^{\prime}$ is a general production strategy for the liabilities $\mathcal{L}$ if and only if the conditions \eqref{cond-genfinancing}, \eqref{FinCati} and \eqref{FulCati} from Definition~\ref{def-admissibleinvestmentstrategy} hold (formally) for $\phi$ and $\mathcal{L}$. 
\end{prop}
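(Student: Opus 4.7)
The plan is to prove the ``if and only if'' in a single sweep by showing that each of the three conditions \eqref{cond-genfinancing}, \eqref{FinCati}, \eqref{FulCati} for the non-negative strategy $\phi^{+}$ applied to the liability $\mathcal{L} + \mathcal{L}^*(\phi^{-})$ is \emph{algebraically identical} to the same-numbered condition formally applied to $\phi = \phi^{+} - \phi^{-}$ with $\mathcal{L}$. Since Definition~\ref{defn-admgeneralstrategies} declares $\phi$ to be a general production strategy precisely when $\phi^{+}$ is a non-negative production strategy for $\mathcal{L} + \mathcal{L}^*(\phi^{-})$, this algebraic equivalence yields both directions at once.

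\textbf{Condition \eqref{cond-genfinancing}.} I would write out \eqref{def-generalfinancing} for $\phi^{+}$ and the liability $\mathcal{L} + \mathcal{L}^*(\phi^{-})$ at an intermediate date $i < t < i+1$, substitute the cash-flow formulas from Definition~\ref{def-liabclosedtakenon}, namely $X_t^{\mathcal{L}^*(\phi^{-})} = \phi^{-}_t \cdot S_t + \tilde{Z}_t^{\phi^{-}}$ and $\tilde{Z}_t^{\mathcal{L}^*(\phi^{-})} = \phi^{-}_{\gamma(t)} \cdot S_t$, and then move the $\phi^{-}$ pieces to the left-hand side. A one-line rearrangement gives $\phi_{\gamma(t)} \cdot S_t = \phi_t \cdot S_t + \tilde{Z}_t^{\phi+\psi+\mathcal{L}} - X_t^{\mathcal{L}}$, which is \eqref{def-generalfinancing} formally for $\phi$ and $\mathcal{L}$. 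The non-negativity of the left-hand side is automatic on the $\phi^{+}$ side and is equivalent, via $\phi_{\gamma(t)} \cdot S_t = v_t(\phi)$, to the condition $v_t(\phi) \geq 0$ built into $\mathcal{R}'$ (Definition~\ref{def-restrictions}).

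\textbf{Conditions \eqref{FinCati} and \eqref{FulCati}.} Both are driven by the quantity $A'_{i+1} - L_{i+1}$ at $i+1$, so it suffices to check that this quantity coincides in the two setups. Using the identity $(-x)_{+} - (x)_{+} = -x$, Definition~\ref{def-assetsandliabilities} gives
\begin{equation*}
A'_{i+1} - L_{i+1} = \bigl(\text{portfolio value} + \tilde{Z}^{(\cdot)}_{i+1}\bigr) - X^{(\cdot)}_{i+1} - \bar{v}^{(\cdot)}_{i+1}(\cdot)
\end{equation*}
in both cases. Substituting the $\mathcal{L}^*(\phi^{-})$ cash flows as in the previous step (now at $t = i+1$) shows that the ``portfolio $+$ $\tilde{Z}^{(\cdot)}_{i+1} - X^{(\cdot)}_{i+1}$'' pieces differ between the two setups by exactly $\phi^{-}_{\gamma(i+1)} \cdot S_{i+1} = v_{i+1}(\phi^{-})$, which is canceled by the defining identity \eqref{def-valuewithgenstrategy} $\bar{v}^{\phi^{+},\psi,\mathcal{C}}_{i+1}(\mathcal{L}+\mathcal{L}^*(\phi^{-})) = \bar{v}^{\phi,\psi,\mathcal{C}}_{i+1}(\mathcal{L}) + v_{i+1}(\phi^{-})$. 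Hence $A'_{i+1} - L_{i+1}$ agrees in the two setups, the event $\{A'_{i+1} \geq L_{i+1}\}$ (so condition \eqref{FulCati}, here the full fulfillment condition) is the same, and the capital payout $C'_{i+1} = (A'_{i+1} - L_{i+1})_{+}$ entering \eqref{FinCati} is the same; thus \eqref{FinCati} transfers verbatim.

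\textbf{Anticipated obstacle.} The main bookkeeping subtlety is at the terminal date $i+1 = i_{max}$, where the ``new short position'' inflow $\tilde{Z}_{i_{max}}^{\mathcal{L}^*(\phi^{-})}$ vanishes by \eqref{def-tildeztforphi} while the close-out outflow $\phi^{-}_{i_{max}} \cdot S_{i_{max}} + \tilde{Z}_{i_{max}}^{\phi^{-}}$ does not, and the terminal-value convention for $Y_{i_{max}}$ replaces $\bar{v}_{i_{max}}$. One must check, using the standing assumption that short positions are available for production \emph{with close out}, that the identities from the two preceding steps continue to hold at $i_{max}$ in the form that keeps $A'_{i_{max}} - L_{i_{max}}$ invariant under the swap. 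Once this terminal case is handled, the algebraic equivalence is complete and both directions of the claim follow immediately.
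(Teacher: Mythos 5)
Your proposal is correct and takes essentially the same route as the paper's proof: condition \eqref{cond-genfinancing} via substituting the cash flows of $\mathcal{L}^{*}(\phi^{-})$ from Definition~\ref{def-liabclosedtakenon} and rearranging (the computation around \eqref{eqn-genstrat}), and conditions \eqref{FinCati}, \eqref{FulCati} by showing that $A'_{i}-L_{i}$ is the identical expression in both setups, with the $v_{i}(\phi^{-})$ term canceled by \eqref{def-valuewithgenstrategy} and \eqref{def-valuestrategy}. The terminal-date issue you flag resolves by the same direct computation (at $i_{max}$ the close-out outflow $\phi^{-}_{i_{max}}\cdot S_{i_{max}}+\tilde{Z}^{\phi^{-}}_{i_{max}}$ is exactly absorbed by the $\phi=\phi^{+}-\phi^{-}$ split of portfolio value and cash inflows, with the common terminal value $Y_{i_{max}}$), which is the paper's brief ``$i=i_{max}$'' case.
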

\begin{proof}
Let $\mathcal{L}^{*}\equiv \mathcal{L}^{*}({\phi^{-}})$. Because of the full fulfillment condition, conditioning on $A_i^{\prime}\geq L_i$ in Definition~\ref{def-admissibleinvestmentstrategy} becomes redundant. For dates $t\in D\cap\interval[open]{i}{i+1}$ with $i\in\{i_{min},\ldots, i_{max}-1\}$, the equivalence of the equation corresponding to \eqref{def-generalfinancing} from condition \eqref{cond-genfinancing} for $\phi^{+}$ and $\mathcal{L}+\mathcal{L}^{*}$ and for $\phi$ and $\mathcal{L}$ is a special case of the calculation in Section~\ref{ss-genstratrestrilliquid} around \eqref{eqn-genstrat} by using the definitions of $X_{t}^{\mathcal{L}^*}$ and $\tilde{Z}_{t}^{\mathcal{L}^{*}}$ of the cash flows from Definition~\ref{def-liabclosedtakenon} and setting $\tilde{Z}_{t}=\tilde{Z}_{t}^{\psi +\mathcal{L}}$ and $X_{t}=X_{t}^{\mathcal{L}}$.
	
Conditions \eqref{FinCati} and \eqref{FulCati} follow from showing recursively backward in time that the expression $A_{i}^{\prime}-L_{i}$ for $i\in\{0,\ldots , i_{max}\}$ for $\phi^{+}$ and $\mathcal{L}+\mathcal{L}^{*}$, which is $A_{i}^{\prime}-L_{i}=\phi_i^{+}\cdot S_i+\tilde{Z}_i^{\phi^{+}+\psi+\mathcal{L}+\mathcal{L}^*}-X_i^{\mathcal{L}+\mathcal{L}^*}-\bar{v}_i^{\phi^{+},\psi,\mathcal{C}}(\mathcal{L}+\mathcal{L}^*$), is the same expression as for $\phi$ and $\mathcal{L}$. This follows for $i=i_{max}$ and $i<i_{max}$ from Definition~\ref{def-liabclosedtakenon}, the expressions for $\bar{v}_{i}^{\phi,\psi,\mathcal{C}}(\mathcal{L})$ in Definition~\ref{defn-admgeneralstrategies}, and \eqref{def-valuestrategy}.
\end{proof}

\section{Properties of the valuation}\label{sec-propertiesofvalue}

\subsection{Valuation in the framework as extension of market price valuation}\label{subsec-perfectreplication}

Theorem~\ref{prop-valueextension} below provides the crucial proof that, under suitable assumptions, specifically on the financiability condition, the valuation framework for liabilities introduced in this paper is an extension of the valuation of tradable investment strategies $\phi$ by the market price $v_i(\phi)$ as in \eqref{def-valuestrategy}. Similar to ``classical no-arbitrage valuation", the setting is full production (Definition~\ref{def-fulfillmentcondition}), illiquid assets $\psi =0$, and general production strategies (Definition~\ref{defn-admgeneralstrategies}), and assuming that portfolios containing negative units in $\mathcal{R}^{\prime}$ are available for production with close out (Definition~\ref{def-negativeshares}). Unlike ``classical no-arbitrage valuation", we additionally allow for capital investments. We restrict to non-negative investment strategies $\phi\in\mathcal{R}^{\geq 0}$ for simplicity and show that a suitable liability corresponding to $\phi$ can broadly be produced with production cost $v_i(\phi)$ and no lower production cost. We do this for the liability $\mathcal{L}(\phi)$ defined in Definition~\ref{def-negativeshares} with terminal value $Y_{T} =0$. Capital investments can be seen as financial instruments in addition to tradables, and it is shown in Theorem~\ref{prop-valueextension} that production cost can only be an extension of the market price if the financiability condition is consistent with the tradables in the sense of Definition~\ref{def-financiabilitycondition}.

The proof of Theorem~\ref{prop-valueextension} uses the following two lemmas. The first lemma examines the natural production strategy for $\mathcal{L}(\phi)$, which is $\phi$ itself with an adjustment for the stopping time $\tau$.
\begin{lemma}\label{lemma-naturalprodstrategy}
Let the full fulfillment condition apply, portfolios containing negative units in $\mathcal{R}^{\prime}$ be available for production with close out (Definition~\ref{def-negativeshares}), and illiquid assets $\psi=0$. For a strategy $\phi\in\mathcal{R}^{\geq 0}$ on $D\equiv D[0,T]$ with zero cash inflows $\tilde{Z}_t=0$ and cash outflows $X_t$, let $\mathcal{L}(\phi)$ be the liability from Definition~\ref{def-negativeshares}. Define the strategy $\phi^{\prime}\in\mathcal{R}^{\geq 0}$ for $t\in D$ by $\phi_t^{\prime}=\phi_t\cdot 1_{\{t\leq\tau\}}$ and $\phi_{\gamma (T)}^{\prime}=0$. Then, for any $t\in D$, the cash outflows $X_t^{\prime}$ of $\phi^{\prime}$ are $X_t^{\prime}=X_t^{\mathcal{L}(\phi)}$, i.e. $\phi_{\gamma (t)}^{\prime}\cdot S_t=\phi_{t}^{\prime}\cdot S_t+\tilde{Z}_t^{\phi^{\prime}}-X_t^{\mathcal{L}(\phi)}$. Further, the strategy $\phi^{\prime}$ with capital $C_i=0$ is a production strategy for the liability $\mathcal{L}(\phi)$ with terminal value $Y_T=0$, which has production cost $\bar{v}_i^{\phi^{\prime},0,0}(\mathcal{L}(\phi))=v_i(\phi^{\prime})=v_i(\phi)\cdot 1_{\{t<\tau\}}$. 
\end{lemma}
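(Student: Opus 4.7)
The plan is to verify directly that $\phi'$ (together with $C_i\equiv 0$) satisfies all three defining properties of a non-negative production strategy from Definition~\ref{def-admissibleinvestmentstrategy}, using the explicit description of the cash flows of $\mathcal{L}(\phi)$ given in Definition~\ref{def-negativeshares}. The core idea is that $\phi'$ coincides with $\phi$ up to and including $\tau$ and vanishes thereafter, so that the large terminal outflow $\phi_\tau\cdot S_\tau+\tilde{Z}_\tau^\phi$ of $\mathcal{L}(\phi)$ at $t=\tau$ is exactly financed by liquidating $\phi$ at that date.

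First I would check that $\phi'$ is predictable and non-negative: since $\tau$ is a stopping time, $\{t\leq\tau\}=\{\tau\leq\mbox{predecessor}(t)\}^c\in\mathcal{F}_{\mbox{\scriptsize predecessor}(t)}$, which combined with predictability of $\phi$ gives predictability of $\phi'$; non-negativity is inherited from $\phi\geq 0$. Next I would establish the cash-flow identity $\phi'_{\gamma(t)}\cdot S_t=\phi'_t\cdot S_t+\tilde{Z}_t^{\phi'}-X_t^{\mathcal{L}(\phi)}$ by splitting into three cases. On $\{t<\tau\}$ both indicators equal one, so the identity reduces to \eqref{eqn-selffinancing} for $\phi$ with $\tilde{Z}_t=0$, $X_t^{\mathcal{L}(\phi)}=X_t$. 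On $\{t=\tau\}$ we have $\phi'_t=\phi_\tau$ and $\phi'_{\gamma(t)}=0$, so the right-hand side becomes $\phi_\tau\cdot S_\tau+\tilde{Z}_\tau^\phi-(\phi_\tau\cdot S_\tau+\tilde{Z}_\tau^\phi)=0$. On $\{t>\tau\}$ everything is zero.

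Then I would verify conditions \eqref{cond-genfinancing}, \eqref{FinCati}, \eqref{FulCati} of Definition~\ref{def-admissibleinvestmentstrategy}. Condition \eqref{cond-genfinancing} follows from the cash-flow identity above, recalling $\tilde{Z}_t^{\mathcal{L}(\phi)}=0$ and $\psi=0$, with non-negativity secured by $\phi'\geq 0$. Condition \eqref{FinCati} is immediate: with $C_i=0$, the trivial investment $0\rightarrow 0$ satisfies the financiability condition by property (a) of Definition~\ref{def-financiabilitycondition}, and then property (b) upgrades this to $0\rightarrow C^{\prime}_{i+1}$ for any non-negative $C^{\prime}_{i+1}$. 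For condition \eqref{FulCati}, note that $\phi'_{\gamma(i)}=\phi_{\gamma(i)}\cdot 1_{\{\gamma(i)\leq\tau\}}=\phi_{\gamma(i)}\cdot 1_{\{i<\tau\}}$ gives $v_i(\phi')=v_i(\phi)\cdot 1_{\{i<\tau\}}\geq 0$; hence $\bar{v}_i^{\phi',0,0}(\mathcal{L}(\phi))=v_i(\phi')-0\geq 0$, so in Definition~\ref{def-assetsandliabilities} the positive parts simplify and substitution of the cash-flow identity at $i+1$ yields $A'_{i+1}-L_{i+1}=\phi'_{\gamma(i+1)}\cdot S_{i+1}-v_{i+1}(\phi')=0$, confirming full fulfillment with equality. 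The final formula for the production cost is then just a restatement of $v_i(\phi')=v_i(\phi)\cdot 1_{\{i<\tau\}}$.

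The computation is essentially bookkeeping and I do not expect a genuine obstacle; the only delicate point is the case analysis at $t=\tau$, where one must use the precise form of the terminal cash flow of $\mathcal{L}(\phi)$ from \eqref{def-assetstrategycashflows} so that the liquidation of $\phi'$ exactly cancels the outflow. A minor subtlety is that the production cost appears implicitly in the fulfillment check via $L_{i+1}$, but because $C_i=0$ the cost reduces to $v_i(\phi')$ with no circularity.
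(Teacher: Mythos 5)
Your proposal is correct and follows essentially the same route as the paper's proof: the same case split at $t<\tau$, $t=\tau$, $t>\tau$ to identify the cash outflows of $\phi'$ with those of $\mathcal{L}(\phi)$, and then verification of conditions \eqref{cond-genfinancing}, \eqref{FinCati}, \eqref{FulCati} with $C_i=0$ via $A'_{i+1}-L_{i+1}=0$. The only cosmetic difference is that the paper organizes the fulfillment check as a backward recursion (establishing the production cost $\bar{v}_{i+1}^{\phi',0,0}(\mathcal{L}(\phi))=v_{i+1}(\phi')$ date by date, starting from $Y_T=0$), whereas you invoke \eqref{def-prodcost} directly at every date, which amounts to the same computation; your extra checks of predictability of $\phi'$ and the identity $\{\gamma(i)\leq\tau\}=\{i<\tau\}$ are fine and are omitted in the paper.
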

\begin{proof}
By Definition~\ref{def-negativeshares}, we have $X_t^{\mathcal{L}(\phi)}=1_{\{t<\tau\}}\cdot X_t+1_{\{t=\tau\}}\cdot (\phi_t\cdot S_t+\tilde{Z}_t^{\phi})$ for $t\in D$. We first show that the cash flows $X_t^{\prime}$ are given by the same expression. This holds for $t<\tau$ because then $\gamma (t)\leq\tau$, so the definition of $\phi^{\prime}$ and \eqref{eqn-selffinancing} for $\phi$ immediately imply $\phi_{\gamma (t)}^{\prime}\cdot S_t=\phi_{\gamma (t)}\cdot S_t=\phi_{t}\cdot S_t+\tilde{Z}_t^{\phi}-X_t=\phi_{t}^{\prime}\cdot S_t+\tilde{Z}_t^{\phi^{\prime}}-X_t$, so $X_t^{\prime}=X_t$. For $t=\tau$, we have $\gamma (t)>\tau$ and so the definition of $\phi^{\prime}$ implies $\phi_{\gamma (t)}^{\prime}\cdot S_t=0=\phi_{t}^{\prime}\cdot S_t+\tilde{Z}_t^{\phi^{\prime}}-(\phi_{t}\cdot S_t+\tilde{Z}_t^{\phi})$. For $t>\tau$, the argument is similar. Thus, $\phi_{\gamma (t)}^{\prime}\cdot S_t=\phi_{t}^{\prime}\cdot S_t+\tilde{Z}_t^{\phi^{\prime}}-X_t^{\mathcal{L}(\phi)}$ for any $t\in D$. In particular, \eqref{def-generalfinancing} from Definition~\ref{def-admissibleinvestmentstrategy} holds for $\phi^{\prime}$ for any $t\in D\setminus [0,\ldots ,T]$. It further implies for $i\in [0,\ldots ,T-1]$ that $A_{i+1}^{\prime}-L_{i+1}=\phi_{i+1}^{\prime}\cdot S_{i+1}+\tilde{Z}_{i+1}^{\phi^{\prime}}-X_{i+1}^{\mathcal{L}(\phi)}-\bar{v}_{i+1}^{\phi^{\prime},0,0}(\mathcal{L}(\phi))=v_{i+1}(\phi^{\prime})-\bar{v}_{i+1}^{\phi^{\prime},0,0}(\mathcal{L}(\phi))$. For $i=T-1$, this is zero because $\phi_{\gamma (T)}^{\prime}=0$ and $\bar{v}_T^{\phi^{\prime},0,0}(\mathcal{L}(\phi))=Y_T=0$ by assumption. So the full fulfillment condition and the financiability condition with $C_i=0$ hold. It follows that the production cost at $i=T-1$ are $\bar{v}_i^{\phi^{\prime},0,0}(\mathcal{L}(\phi))=v_i(\phi^{\prime})-C_i=v_i(\phi^{\prime})$. Thus, $A_{i+1}^{\prime}-L_{i+1}=0$ at $i=T-2$. The proof proceeds recursively. 
\end{proof}
The next lemma shows that any production cost for the liability $\mathcal{L}(\phi)$ is at least as large as $v_i(\phi^{\prime})$ from Lemma~\ref{lemma-naturalprodstrategy}.
\begin{lemma}\label{lemma-prodstrategynonnegandcloseout}
Consider the situation from Lemma~\ref{lemma-naturalprodstrategy}. Assume consistency within the tradables in $\mathcal{R}$ and with the financiability condition. Let $\theta\in\mathcal{R}^{\prime}$ with capital amounts $\mathcal{C}=(C_i)_{i=0}^{T-1}$ be a general production strategy on $D$ for the liability $\mathcal{L}(\phi)$. Then, for any $t\in D$ and $i\in\{0,\ldots , T\}$, 
\begin{equation}\label{eqn-twoclaims}
	v_t(\theta)\geq v_t(\phi)\cdot 1_{\{t<\tau\}}\mbox{, and }\Bar{v}_{i}^{\theta,0,\mathcal{C}}(\mathcal{L}(\phi))\geq v_{i}(\phi)\cdot 1_{\{i<\tau\}}
\end{equation}
\end{lemma}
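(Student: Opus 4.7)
I plan to prove both inequalities jointly by backward induction on $t\in D$, using the difference $\xi:=\theta-\phi'$ where $\phi'$ is the stopped natural production strategy from Lemma~\ref{lemma-naturalprodstrategy} (so $v_t(\phi')=v_t(\phi)\cdot 1_{\{t<\tau\}}$). The two claims become $v_t(\xi)\geq 0$ on $D$ and $\bar{v}_i^{\theta,0,\mathcal{C}}(\mathcal{L}(\phi))\geq v_i(\phi')$ at annual $i$. The base case $t=T$ is immediate: $\phi'_{\gamma(T)}=0$ gives $v_T(\phi')=0\leq v_T(\theta)$ by $\theta\in\mathcal{R}'$, and $\bar{v}_T=Y_T=0$ by convention.

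The inductive step for the $v_t$ claim splits according to whether $\gamma(t)$ is intermediate or annual. If $\gamma(t)$ is intermediate, the strategy equations~\eqref{def-generalfinancing} for $\theta$ and $\phi'$ at $\gamma(t)$ both contain the outflow $-X_{\gamma(t)}^{\mathcal{L}(\phi)}$, which cancels on subtraction to give $v_{\gamma(t)}(\xi)=\xi_{\gamma(t)}\cdot S_{\gamma(t)}+\tilde{Z}_{\gamma(t)}^{\xi}\geq 0$ by induction; the extended consistency of tradables from Section~\ref{ss-genstratrestrilliquid} then transports the inequality back to $v_t(\xi)\geq 0$. If $\gamma(t)=i+1$ is annual, I combine the fulfillment condition $A'_{i+1}\geq L_{i+1}$ with the inductive bound on $\bar{v}_{i+1}$ and Lemma~\ref{lemma-naturalprodstrategy} to obtain
\[
	0\leq C'_{i+1}=A'_{i+1}-L_{i+1}\leq \xi_{i+1}\cdot S_{i+1}+\tilde{Z}_{i+1}^{\xi},
\]
and consistency again yields $v_t(\xi)\geq 0$.

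To close the induction at an annual date $t=i$ with the $\bar{v}_i$ claim, I build an auxiliary self-financing strategy $\sigma$ on $D[i,i+1]$ lying in $\mathcal{R}'$: set $\sigma_s:=\xi_s$ for $s\in D(i,i+1]$, and take $\sigma_i$ as a cash-like portfolio in $\mathcal{R}$ (e.g.\ using a tradable $k$ with $\tilde{Z}_i^k=0$) with value $\sigma_i\cdot S_i=v_i(\xi)$, choosing $\sigma_{\gamma(i+1)}$ analogously. Self-financing at intermediate dates follows from the cancellation of $X_s^{\mathcal{L}(\phi)}$ between $\theta$ and $\phi'$, and at $i$, $i+1$ from the endpoint choices. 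Since $v_s(\xi)\geq 0$ on $D[i,i+1]$ by the preceding step, $\sigma\in\mathcal{R}'$; and $\sigma_{i+1}\cdot S_{i+1}+\tilde{Z}_{i+1}^{\sigma}\geq C'_{i+1}$ from the annual-boundary argument above. Consistency of the financiability condition with the tradables (Definition~\ref{def-financiabilitycondition}), applied to $\sigma$ and $C_i\to C'_{i+1}$, then delivers $C_i\leq v_i(\sigma)=v_i(\xi)$, i.e.\ $\bar{v}_i^{\theta,0,\mathcal{C}}(\mathcal{L}(\phi))\geq v_i(\phi')$.

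The main obstacle is crossing the annual boundary for the $\bar{v}_i$ step: choosing $\sigma_i=\xi_i$ would fail self-financing at $i$ because the capital flow $C_i-C'_i$ in the strategy equation for $\theta$ has no counterpart for $\phi'$, forcing the cash-like re-definition above. That re-definition requires $v_i(\xi)\geq 0$, so within each induction step the $v_i$ part must be settled before the $\bar{v}_i$ part; the setup assumption that every calendar year contains at least one intermediate date ensures that $\gamma(i)$ is intermediate, so the $v_i$ part is reachable via the first case above.
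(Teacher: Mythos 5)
Your proposal is correct and follows essentially the same route as the paper: the same decomposition $\xi=\theta-\phi'$, the same backward recursion in which the fulfillment condition plus the inductive bound on $\Bar{v}_{i+1}$ gives $0\leq C'_{i+1}\leq \xi_{i+1}\cdot S_{i+1}+\tilde{Z}^{\xi}_{i+1}$, consistency of tradables (extended to general strategies) pushes $v_t(\xi)\geq 0$ back through the year, and consistency of the financiability condition with the tradables yields $C_i\leq v_i(\xi)$ and hence the production-cost bound. The only deviation, your auxiliary ``cash-like'' strategy $\sigma$ at the annual boundary, is an unnecessary (and slightly delicate, e.g.\ regarding predictability of $\sigma_i$) embellishment: the paper applies Definition~\ref{def-financiabilitycondition}(2) directly to $\xi$, since the self-financing requirement there concerns only the intra-year rebalancing dates, where the liability cash flows of $\theta$ and $\phi'$ cancel and no capital flows occur.
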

\begin{proof}
We show the two expressions \eqref{eqn-twoclaims} by applying consistency with the financiability condition from Definition~\ref{def-financiabilitycondition} recursively backward in time to the strategy $\xi =\theta - \phi^{\prime}$ with $\phi_t^{\prime}=\phi_t\cdot 1_{\{t\leq\tau\}}$ from Lemma~\ref{lemma-naturalprodstrategy}. For $i\in \{0,\dots ,T-1\}$, because of the full fulfillment condition, $0\leq C_{i+1}^{\prime, \theta}=A_{i+1}^{\prime ,\theta}-L_{i+1}^{\theta}=\theta_{i+1}\cdot S_{i+1}+\tilde{Z}^{\theta}_{i+1}-X_{i+1}^{\mathcal{L}(\phi)}-\Bar{v}_{i+1}^{\theta,0,\mathcal{C}}(\mathcal{L}(\phi))$. Inserting the expression for $X_{i+1}^{\mathcal{L}(\phi)}$ derived from $v_{i+1}(\phi^{\prime})=\phi_{\gamma ({i+1})}^{\prime}\cdot S_{i+1}=\phi_{i+1}^{\prime}\cdot S_{i+1}+\tilde{Z}_{i+1}^{\phi^{\prime}}-X_{i+1}^{\mathcal{L}(\phi)}$ by Lemma~\ref{lemma-naturalprodstrategy}, we can write this as 
\begin{equation}\label{equinlemmanonneg}
0\leq C_{i+1}^{\prime, \theta}=\xi_{i+1}\cdot S_{i+1}+\tilde{Z}^{\xi}_{i+1}+v_{i+1}(\phi^{\prime})-\Bar{v}_{i+1}^{\theta,0,\mathcal{C}}(\mathcal{L}(\phi))
\end{equation}
For $i=T-1$, \eqref{equinlemmanonneg} reduces to $0\leq C_{i+1}^{\prime, \theta}=\xi_{i+1}\cdot S_{i+1}+\tilde{Z}^{\xi}_{i+1}$ or equivalently $\phi^{\prime}_{i+1}\cdot S_{i+1}+\tilde{Z}^{\phi^{\prime}}_{i+1}\leq \theta_{i+1}\cdot S_{i+1}+\tilde{Z}^{\theta}_{i+1}$. As $\phi^{\prime}\in\mathcal{R}^{\geq 0}$ (by Lemma~\ref{lemma-naturalprodstrategy}) and $\theta\in\mathcal{R}^{\prime}$ are production strategies for $\mathcal{L}(\phi)$, they have the same cash flows for any $t\in D\setminus\{0,\dots ,T\}$, so $\xi$ is self-financing. Thus, we can apply consistency of tradables \eqref{eqn-consistencyconditionsimple} successively backward to get for any $i\leq t<i+1$ that $v_{t}(\phi^{\prime})\leq v_{t}(\theta)$, i.e. $v_t(\xi)\geq 0$. Thus, we can apply consistency with the tradables as in Definition~\ref{def-financiabilitycondition} to conclude that $C_i\leq v_i(\xi)=v_i(\theta)-v_i(\phi^{\prime})$ and thus $\bar{v}_i^{\theta,0,\mathcal{C}}(\mathcal{L}(\phi))=v_i(\theta)-C_i\geq v_i(\phi^{\prime})=v_{i}(\phi)\cdot 1_{\{i<\tau\}}$. Inserting this into \eqref{equinlemmanonneg} with $i+1$ replaced by $i$ gives $0\leq C_{i}^{\prime, \theta}=\xi_{i}\cdot S_{i}+\tilde{Z}^{\xi}_{i}$ for $i=T-1$. The proof proceeds recursively.
\end{proof}
\begin{theorem}\label{prop-valueextension}
Let the full fulfillment condition apply, portfolios containing negative units in $\mathcal{R}^{\prime}$ be available for production with close out, and illiquid assets $\psi=0$. Assume consistency within the tradables in $\mathcal{R}$ (Definition~\ref{def-financiabilitycondition}). Then the following are equivalent:
\begin{enumerate}[(a)]
    \item\label{equ-valuerecovered} For any strategy $\phi\in\mathcal{R}^{\geq 0}$ on $D[0,T]$ with zero cash inflows $\tilde{Z}_t=0$ and cash outflows $X_t$, the value of the liability $\mathcal{L}(\phi)$ with $\bar{v}_T(\mathcal{L}(\phi))=0$ is $\Bar{v}_i(\mathcal{L}(\phi))=v_i(\phi)\cdot 1_{\{i<\tau\}}$ for $i\in\{0,\ldots , T-1\}$.
		\item\label{cond-finc} The financiability condition is consistent with the tradables.
\end{enumerate}
\end{theorem}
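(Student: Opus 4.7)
The plan is to prove the two implications of the equivalence separately, leveraging Lemmas~\ref{lemma-naturalprodstrategy} and~\ref{lemma-prodstrategynonnegandcloseout} for the technical content.

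For \eqref{cond-finc}$\Rightarrow$\eqref{equ-valuerecovered}: Lemma~\ref{lemma-naturalprodstrategy} exhibits the production strategy $\phi^{\prime}=\phi\cdot 1_{\{t\leq\tau\}}$ with zero capital and cost exactly $v_i(\phi)\cdot 1_{\{i<\tau\}}$, so $\Bar{v}_i(\mathcal{L}(\phi))\leq v_i(\phi)\cdot 1_{\{i<\tau\}}$. Lemma~\ref{lemma-prodstrategynonnegandcloseout}, whose hypotheses are precisely those of the theorem together with consistency of the financiability condition with the tradables (i.e., \eqref{cond-finc}), provides the matching lower bound $\Bar{v}_i^{\theta,0,\mathcal{C}}(\mathcal{L}(\phi))\geq v_i(\phi)\cdot 1_{\{i<\tau\}}$ for every general production strategy $\theta$. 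Taking the essential infimum yields the equality asserted in \eqref{equ-valuerecovered}.

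For \eqref{equ-valuerecovered}$\Rightarrow$\eqref{cond-finc} I argue by contraposition. Suppose \eqref{cond-finc} fails: there exist $j\in\{0,\ldots,T-1\}$, a self-financing $\hat\phi\in\mathcal{R}^{\prime}$ on $D[j,j+1]$, and a capital investment $\hat C_j\rightarrow \hat C_{j+1}^{\prime}$ satisfying the financiability condition with $\hat C_{j+1}^{\prime}\leq \hat\phi_{j+1}\cdot S_{j+1}+\tilde{Z}_{j+1}^{\hat\phi}=v_{j+1}(\hat\phi)$ but $\hat C_j>v_j(\hat\phi)$. I apply \eqref{equ-valuerecovered} to the trivial strategy $\phi\equiv 0\in\mathcal{R}^{\geq 0}$ on $D[0,T]$: the associated liability $\mathcal{L}(0)$ has identically vanishing cash flows and zero terminal value, so \eqref{equ-valuerecovered} forces $\Bar{v}_j(\mathcal{L}(0))=0$. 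I then exhibit a competing general production strategy $\theta$ for $\mathcal{L}(0)$ on $D[j,T]$ that uses $\hat\phi$ with capital $\hat C_j$ on $[j,j+1]$ and continues with the zero strategy and zero capital on $[j+1,T]$. Its production cost at $j$ equals $v_j(\theta)-\hat C_j=v_j(\hat\phi)-\hat C_j<0$, which contradicts $\Bar{v}_j(\mathcal{L}(0))=0$ and closes the argument.

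Validity of $\theta$ is checked via Proposition~\ref{prop-admgeneralstrategies}, i.e., verifying the conditions \eqref{cond-genfinancing}, \eqref{FinCati}, and \eqref{FulCati} of Definition~\ref{def-admissibleinvestmentstrategy} formally for $\theta$ and $\mathcal{L}(0)$. At intermediate dates strictly between $j$ and $j+1$, self-financing of $\hat\phi$ combined with the vanishing insurance cash flows gives \eqref{def-generalfinancing}. The full fulfillment condition at $j+1$ reduces to $A_{j+1}^{\prime}-L_{j+1}=v_{j+1}(\hat\phi)\geq 0$, which holds since $\hat\phi\in\mathcal{R}^{\prime}$. The realised capital payoff is $C_{j+1}^{\prime,\theta}=v_{j+1}(\hat\phi)\geq \hat C_{j+1}^{\prime}$, so monotonicity property (b) of Definition~\ref{def-financiabilitycondition} lifts financiability of $\hat C_j\rightarrow \hat C_{j+1}^{\prime}$ to financiability of $\hat C_j\rightarrow C_{j+1}^{\prime,\theta}$. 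The continuation on $[j+1,T]$ trivially produces the remaining zero liability. The main obstacle is precisely this validity check: to interpret $\theta$ as a general production strategy when $\hat\phi$ carries short positions, one must invoke the standing hypothesis that portfolios with negative units in $\mathcal{R}^{\prime}$ are available for production with close out and make essential use of monotonicity property (b) of Definition~\ref{def-financiabilitycondition} to accommodate the gap between $\hat C_{j+1}^{\prime}$ and the actual payoff $v_{j+1}(\hat\phi)$ delivered to the capital investor. Once $\theta$ is justified, the explicit computation of its negative production cost is immediate and the contradiction follows.
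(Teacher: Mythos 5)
Your proposal is correct and takes essentially the same route as the paper: (\ref{cond-finc})$\Rightarrow$(\ref{equ-valuerecovered}) via Lemmas~\ref{lemma-naturalprodstrategy} and~\ref{lemma-prodstrategynonnegandcloseout}, and (\ref{equ-valuerecovered})$\Rightarrow$(\ref{cond-finc}) by contraposition, producing the zero liability $\mathcal{L}(0)$ with the offending self-financing strategy and capital and reading off negative production cost. The only cosmetic difference is that the paper localizes the construction to an $\mathcal{F}_i$-set $M$ of positive probability on which $C_i>v_i(\theta)$ (the negation of consistency only gives the strict inequality with positive probability, not almost surely), whereas you work globally; your contradiction still goes through, but the strict inequality and the resulting negative cost should be stated as holding on a set of positive probability.
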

\begin{proof}
\eqref{equ-valuerecovered} follows from \eqref{cond-finc} because of Lemmas~\ref{lemma-naturalprodstrategy} and \ref{lemma-prodstrategynonnegandcloseout}. We show that \eqref{cond-finc} follows from \eqref{equ-valuerecovered} by contraposition: If \eqref{cond-finc} does not hold, there is an $i\in\{0,\ldots , T-1\}$, a self-financing strategy $\theta\in\mathcal{R}^{\prime}$ from $i$ to $i+1$, capital $C_{i}$ with $C_{i}>v_{i}(\theta)$ on a set $M\in\mathcal{F}_{i}$ with positive probability, and $C_{i}\rightarrow \theta_{i+1}\cdot S_{i+1}+\tilde{Z}_{i+1}^{\theta}$ satisfies the financiability condition. To show that \eqref{equ-valuerecovered} does not hold, select the strategy $\phi=0$ in $\mathcal{R}^{\geq 0}$, so $\mathcal{L}(\phi)=0$ and $v_i(\phi)=0$. We define the strategy $\xi$ for $t\in D[0,T]$ by $\xi_t=0$ for $t\notin\interval{i}{i+1}$ as well as for $t\in\interval{i}{i+1}$ on $\Omega\setminus M$, and by $\xi_t=\theta_t$ for $t\in\interval{i}{i+1}$ on $M$. Then, from $i+1$ to $T$, the strategy $\xi=0$ with capital $C_{j}=0$ is a production strategy for $\mathcal{L}(\phi)=0$, so in particular, $\bar{v}_{i+1}^{\xi,0,0}(\mathcal{L}(\phi))=0$. From $i$ to $i+1$, consider $\xi$ together with capital $\widetilde{C}_{i}$, defined to be $C_{i}$ on $M$ and zero otherwise, for the liability $\mathcal{L}(\phi)=0$. Then, $C_{i+1}^{\prime}=A_{i+1}^{\prime}-L_{i+1}=1_M\cdot (\theta_{i+1}\cdot S_{i+1}+\tilde{Z}_{i+1}^{\theta})\geq 0$ as $X_{i+1}^{\mathcal{L}(\phi)}+\bar{v}_{i+1}^{\xi,0,0}(\mathcal{L}(\phi))=0$, so the strategy $\xi$ with capital $\widetilde{C}_{i}\geq 0$ is a production strategy for $\mathcal{L}(\phi)=0$ and thus the value $\Bar{v}_{i}(0)\leq v_{i}(\xi)-\widetilde{C}_{i}$. Specifically on $M$, we have $\Bar{v}_{i}(0)\leq v_{i}(\theta)-C_{i} < 0 = v_{i}(0)$, so \eqref{equ-valuerecovered} does not hold.
\end{proof}
\begin{remark}\label{rem-strictconsistencywithfincond}
Requiring a financiability condition to be consistent with the tradables as in Definition~\ref{def-financiabilitycondition} does not preclude that investments in self-financing strategies of tradables satisfy the financiability condition. If self-financing strategies of tradables satisfy the financiability condition, the proof of Lemma~\ref{lemma-prodstrategynonnegandcloseout} can be used to show that any production strategy $\theta\in\mathcal{R}^{\prime}$ with capital amounts $\mathcal{C}=(C_i)_{i=0}^{T-1}$ for the liability $\mathcal{L}(\phi)$, potentially adjusted by replacing $\mathcal{C}$ with lower capital amounts $\mathcal{C}^{*}$, has ``minimal" production cost equal to the market price, i.e. $\Bar{v}_{i}^{\theta,0,\mathcal{C}^{*}}(\mathcal{L}(\phi))= v_{i}(\phi)\cdot 1_{\{i<\tau\}}$. Intuitively speaking, this means in particular that capital can be raised, even when it would not be required, without increasing the production cost. It can be argued that this should be prevented by making the financiability condition stricter. That is, such that capital investments require a higher return than the return from self-financing strategies (except of course for the investment $0\rightarrow 0$). This may also be plausible given the additional risks investors may face from investing in an insurance company instead of in tradables. In this case, the financiability condition cannot also be neutral to the tradables according to Definition~\ref{def-financiabilitycondition}, at least not for $C_i=0$ and $C_{i+1}^{\prime}=0$.
\end{remark}
Theorem~\ref{prop-valueextension} is limited to perfect fulfillment. With a weaker fulfillment condition, we can for example not in general expect that the liability $\mathcal{L}(\phi)=0$ cannot be produced with a production strategy with negative production cost. For perfect fulfillment, on the other hand, further normative properties can be expected to hold. In particular, the following corollary shows that a liability with ``non-negative cash flows" has a non-negative value.    
\begin{corollary}\label{cor-nonegcashflows}
Let the full fulfillment condition apply and portfolios containing negative units in $\mathcal{R}^{\prime}$ be available for production with close out. Assume consistency within the tradables in $\mathcal{R}$ and with the financiability condition. Let the liability $\mathcal{L}$ with terminal value $Y_T=0$ and the illiquid assets $\psi$ be such that ``cash flows are non-negative" in the sense that, for all $t\in D\equiv  D[0,T]$,
\begin{equation}
	X_t^{\mathcal{L}}-\tilde{Z}_t^{\psi+\mathcal{L}} \geq 0
\end{equation}
Then the liability $\mathcal{L}$ has non-negative value, $\Bar{v}_{i}^{\psi}(\mathcal{L})\geq 0$ for $i\in \{0,\ldots ,T-1\}$.
\end{corollary}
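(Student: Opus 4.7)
The plan is to show by backward induction on $i\in\{T-1,T-2,\ldots,0\}$ that every general production strategy $\theta\in\mathcal{R}^{\prime}$ for $\mathcal{L}$ with capital amounts $\mathcal{C}=(C_j)$ and illiquid assets $\psi$ satisfies $\bar{v}_j^{\theta,\psi,\mathcal{C}}(\mathcal{L})\geq 0$ for all $j\geq i$; taking the essential infimum in Definition~\ref{def-valuebar} then yields $\bar{v}_i^{\psi}(\mathcal{L})\geq 0$. Proposition~\ref{prop-admgeneralstrategies} lets me work with the algebraic conditions \eqref{cond-genfinancing}, \eqref{FinCati}, \eqref{FulCati} of Definition~\ref{def-admissibleinvestmentstrategy} applied directly to $\theta$ and $\mathcal{L}$, so the decomposition $\theta=\theta^{+}-\theta^{-}$ never needs to be invoked explicitly.

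At both the base case $i=T-1$ (where $Y_T=0$) and the inductive step, full fulfillment gives $C_{i+1}^{\prime}=A_{i+1}^{\prime}-L_{i+1}\geq 0$, which expanded via Definition~\ref{def-assetsandliabilities} reads $C_{i+1}^{\prime}=\theta_{i+1}\cdot S_{i+1}+\tilde{Z}_{i+1}^{\theta}+\tilde{Z}_{i+1}^{\psi+\mathcal{L}}-X_{i+1}^{\mathcal{L}}-\bar{v}_{i+1}^{\theta,\psi,\mathcal{C}}(\mathcal{L})$. The hypothesis $\tilde{Z}_{i+1}^{\psi+\mathcal{L}}-X_{i+1}^{\mathcal{L}}\leq 0$ combined with the inductive bound $\bar{v}_{i+1}^{\theta,\psi,\mathcal{C}}(\mathcal{L})\geq 0$ (or $Y_T=0$ for the base case) gives the key inequality $C_{i+1}^{\prime}\leq \theta_{i+1}\cdot S_{i+1}+\tilde{Z}_{i+1}^{\theta}$. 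To convert this into a bound on $C_i$, I would replace $\theta$ on $[i,i+1]$ by an auxiliary self-financing strategy $\tilde{\theta}_t=\theta_t+\delta_t\,e_{k_0}$, where $e_{k_0}$ is the unit vector of a tradable in $\mathcal{R}$ with strictly positive price and $\delta_t\geq 0$ is determined recursively from $\delta_{\gamma(i)}=0$ by imposing self-financing: at each intermediate date $t$ the non-negative net outflow $Y_t=X_t^{\mathcal{L}}-\tilde{Z}_t^{\psi+\mathcal{L}}\geq 0$ that $\theta$ pays out is instead rolled into additional units of $k_0$, giving $\delta_{\gamma(t)}\geq 0$ by induction. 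Then $\tilde{\theta}\in\mathcal{R}^{\prime}$ with $v_i(\tilde{\theta})=v_i(\theta)$ and $\tilde{\theta}_{i+1}\cdot S_{i+1}+\tilde{Z}_{i+1}^{\tilde{\theta}}\geq \theta_{i+1}\cdot S_{i+1}+\tilde{Z}_{i+1}^{\theta}\geq C_{i+1}^{\prime}$, so part~(2) of Definition~\ref{def-financiabilitycondition} yields $C_i\leq v_i(\tilde{\theta})=v_i(\theta)$, i.e., $\bar{v}_i^{\theta,\psi,\mathcal{C}}(\mathcal{L})\geq 0$, completing the induction.

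The main obstacle is realizing the auxiliary strategy $\tilde{\theta}$ within the restriction $\mathcal{R}$: the construction implicitly assumes that $\mathcal{R}$ contains some tradable with strictly positive price to absorb the intermediate net outflows $Y_t$, with membership $\tilde{\theta}_t\in\mathcal{R}$ following from linearity of the subspace. In the typical case where a cash-like or risk-free tradable lies in $\mathcal{R}$ this is automatic; otherwise one has to absorb $Y_t$ by scaling up existing positions of $\theta$ componentwise, which needs care on the set where $\theta$ vanishes and is where the technical difficulty concentrates. A secondary subtlety is verifying that the recursion in the auxiliary strategy respects the predictability requirement, but this is immediate once $\delta$ is built step by step from $\mathcal{F}_t$-measurable quantities.
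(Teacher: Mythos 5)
Your argument is correct, but it takes a genuinely different route from the paper. The paper does not run a per-strategy induction with the financiability condition applied directly; instead it reduces to the zero liability: from an arbitrary production strategy $\phi\in\mathcal{R}^{\prime}$ for $\mathcal{L}$ it builds an auxiliary non-negative strategy $\xi\in\mathcal{R}^{\geq 0}$ that absorbs the non-negative net outflows $X_t^{\mathcal{L}}-\tilde{Z}_t^{\psi+\mathcal{L}}$ at intermediate dates, shows that $\theta=\phi+\xi$ with the same capital is a production strategy for the liability $0$ (no illiquid assets) with $\Bar{v}_{i}^{\theta,0,\mathcal{C}(\theta)}(0)\leq\Bar{v}_{i}^{\phi,\psi,\mathcal{C}}(\mathcal{L})$, and then invokes Theorem~\ref{prop-valueextension} (hence Lemmas~\ref{lemma-naturalprodstrategy} and \ref{lemma-prodstrategynonnegandcloseout}, which use consistency of the tradables within each year) to conclude $\Bar{v}_{i}(0)=0$. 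You instead prove directly, by backward induction over one-year periods, that \emph{every} production strategy has non-negative production cost, using only the expansion $C_{i+1}^{\prime}=A_{i+1}^{\prime}-L_{i+1}$ from Definition~\ref{def-assetsandliabilities} and property (2) of Definition~\ref{def-financiabilitycondition}, applied to a hand-built self-financing majorant $\tilde{\theta}$; this sidesteps Theorem~\ref{prop-valueextension} entirely and, notably, does not use consistency of the tradables (Definition~\ref{def-consistenttradables}) at all, so your proof is more self-contained, while the paper's is shorter given the theorem. The two arguments share the same core device --- rolling the intermediate net outflows into an auxiliary non-negative holding so that the comparison strategy becomes self-financing --- and the ``main obstacle'' you flag (the need for a tradable in $\mathcal{R}$ with strictly positive price to carry this value forward) is not a defect relative to the paper: the paper's own step ``we select $\xi_{\gamma(t)}\in\mathcal{R}^{\geq 0}$ such that $\xi_{\gamma(t)}\cdot S_t=\dots\geq 0$, which is possible as $X_t^{\mathcal{L}}-\tilde{Z}_t^{\psi+\mathcal{L}}\geq 0$'' makes exactly the same implicit assumption, so your explicit discussion of it is, if anything, more careful.
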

\begin{proof}
Given an arbitrary production strategy $\phi\in\mathcal{R}^{\prime}$ for $\mathcal{L}$, we construct a production strategy $\theta = \phi+\xi\in\mathcal{R}^{\prime}$ with $\xi\in\mathcal{R}^{\geq 0}$, capital $\mathcal{C}(\theta)$, and no illiquid assets for the liability $\mathcal{L}(\phi)=0$ with terminal value $0$ such that $\Bar{v}_{i}^{\phi, \psi,\mathcal{C}}(\mathcal{L})\geq \Bar{v}_{i}^{\theta, 0,\mathcal{C}(\theta)}(0)$ for $i\in \{0,\ldots ,T-1\}$. It then follows from the definition of the value and Theorem~\ref{prop-valueextension} that $\Bar{v}_{i}^{\phi, \psi,\mathcal{C}}(\mathcal{L})\geq \Bar{v}_{i}(0)=0$, so the claim follows from the definition of the value $\Bar{v}_{i}^{\psi}(\mathcal{L})$.

We construct the production strategy $\theta$ recursively backward in time for $i\in \{0,\ldots ,T-1\}$, forward in time within each one-year period from date $i$ to $i+1$, such that $\Bar{v}_{i}^{\phi, \psi,\mathcal{C}}(\mathcal{L})\geq \Bar{v}_{i}^{\theta, 0,\mathcal{C}(\theta)}(0)$. At date $i+1=T$, the inequality holds as $\Bar{v}_{T}^{\phi, \psi,\mathcal{C}}(\mathcal{L})=0= \Bar{v}_{T}^{\theta, 0,\mathcal{C}(\theta)}(0)$ by assumption. From $i$ to $i+1$, we define $\xi_{\gamma (i)}=0$ and, at $\gamma (i)\leq t<i+1$, given $\xi_t$, we select $\xi_{\gamma (t)}\in \mathcal{R}^{\geq 0}$ such that 
\begin{equation}
	\xi_{\gamma (t)}\cdot S_t=\xi_t\cdot S_t+\tilde {Z}_t^{\xi}+X_t^{\mathcal{L}}-\tilde{Z}_t^{\psi+\mathcal{L}}\geq 0
\end{equation}
which is possible as $X_t^{\mathcal{L}}-\tilde{Z}_t^{\psi+\mathcal{L}}\geq 0$ by assumption. Then, $\theta_{\gamma (t)}\cdot S_t = \phi_{\gamma (t)}\cdot S_t+\xi_{\gamma (t)}\cdot S_t$ is non-negative as $\phi_{\gamma (t)}\cdot S_t\geq 0$ and $\xi_{\gamma (t)}\cdot S_t\geq 0$ and, using \eqref{cond-genfinancing} from Definition~\ref{def-admissibleinvestmentstrategy} for $\phi$, equal to
 \begin{equation}
	\phi_t\cdot S_t+\tilde {Z}_t^{\phi+\psi+\mathcal{L}}-X_t^{\mathcal{L}}+\xi_t\cdot S_t+\tilde {Z}_t^{\xi}+X_t^{\mathcal{L}}-\tilde{Z}_t^{\psi+\mathcal{L}}=\theta_t\cdot S_t +\tilde {Z}_t^{\theta}-0
\end{equation}
so that \eqref{cond-genfinancing} from Definition~\ref{def-admissibleinvestmentstrategy} holds for $\theta$. At date $i+1$, we denote the assets and liabilities according to Definition~\ref{def-assetsandliabilities} by $A_{i+1}^{\prime,\phi}$ and $L_{i+1}^{\phi}$ for $\mathcal{L}$ and the strategy $\phi$, and as $A_{i+1}^{\prime,\theta}$ and $L_{i+1}^{\theta}$ for the liability $0$ and the strategy $\theta$. Using the respective definitions, we can write $A_{i+1}^{\prime,\theta} - L_{i+1}^{\theta}$ as
 \begin{equation}
	A_{i+1}^{\prime,\phi} - L_{i+1}^{\phi} + (\xi_{i+1}\cdot S_{i+1}+\tilde {Z}_{i+1}^{\xi}) +( X_{i+1}^{\mathcal{L}}-\tilde{Z}_{i+1}^{\psi+\mathcal{L}}) + (\Bar{v}_{i+1}^{\phi, \psi,\mathcal{C}}(\mathcal{L})- \Bar{v}_{i+1}^{\theta, 0,\tilde{\mathcal{C}}}(0))
\end{equation}
By the various properties and assumptions, $A_{i+1}^{\prime,\theta} - L_{i+1}^{\theta}\geq A_{i+1}^{\prime,\phi} - L_{i+1}^{\phi}\geq 0$, so the full fulfillment condition is satisfied for $\theta$. Hence, also, for the return to the capital provider,  $C_{i+1}^{\prime,\theta}\geq C_{i+1}^{\prime,\phi}$ with the analogue notation, hence by the properties of the financiability condition, $C_{i}^{\theta}\geq C_{i}^{\phi}\equiv C_{i}$. We thus get for the production cost $\Bar{v}_{i}^{\theta, 0,\mathcal{C}(\theta)}(0)$ date $i$, using that $v_i(\theta)=\theta_{\gamma (i)}\cdot S_i=\phi_{\gamma (i)}\cdot S_i=v_i(\phi)$, 
\begin{equation}
	\Bar{v}_{i}^{\theta, 0,\mathcal{C}(\theta)}(0)=v_i(\theta)-C_{i}^{\theta} \leq v_i(\phi) - C_{i} = \Bar{v}_{i}^{\phi, \psi,\mathcal{C}}(\mathcal{L})
\end{equation}
The proof then proceeds recursively.
\end{proof}

\subsection{Adding short positions}\label{ss-addingshortpos}

We show in the following that the production cost is additive when adding a liability $\mathcal{L}(\phi)$ for $\phi\in\mathcal{R}^{\geq 0}$ as in Definition~\ref{def-negativeshares} with cash outflows $X_t$ and zero cash inflows $\tilde{Z}_t=0$ to a liability $\mathcal{L}$ with terminal value $Y_T$. We assume the full fulfillment condition and that portfolios with negative units in $\mathcal{R}^{\prime}$ are available for production with close out (Definition~\ref{def-negativeshares}). Let $\theta\in\mathcal{R}^{\prime}$ with capital $\mathcal{C}=(C_{i})_{0}^{T-1}$ and illiquid assets $\psi$ be a production strategy for the liabilities $\mathcal{L}$. We show that $\theta +\phi^{\prime}$ for $\phi^{\prime}_t=1_{\{t\leq\tau\}}\cdot\phi_t$ with the same capital and illiquid assets is a production strategy for $\mathcal{L}+\mathcal{L}(\phi)$ with the same terminal value $Y_T$ and with production cost
\begin{equation}\label{expr-prodcostliabwithphi}
	\bar{v}_i^{\theta +\phi^{\prime},\psi,\mathcal{C}}(\mathcal{L}+\mathcal{L}(\phi))=\bar{v}_i^{\theta,\psi,\mathcal{C}}(\mathcal{L})+v_i(\phi^{\prime}) 
\end{equation}
By Lemma~\ref{lemma-naturalprodstrategy}, $\phi^{\prime}$ is a production strategy for $\mathcal{L}(\phi)$ with no illiquid assets, so \eqref{def-generalfinancing} holds for $\theta +\phi^{\prime}$ and $\mathcal{L}+\mathcal{L}(\phi)$ by additivity. For $i\in\{0,\ldots ,T-1\}$, at date $i+1$, we have $A_{i+1}^{\prime\theta +\phi^{\prime}}-L_{i+1}^{\theta +\phi^{\prime}}=(\theta +\phi^{\prime})_{i+1}\cdot S_{i+1}+\tilde{Z}_{i+1}^{\theta+\phi^{\prime}+\psi+\mathcal{L}+\mathcal{L}(\phi)}-X_{i+1}^{\mathcal{L}+\mathcal{L}(\phi)}-\bar{v}_{i+1}^{\theta +\phi^{\prime},\psi}(\mathcal{L}+\mathcal{L}(\phi))=\theta_{i+1}\cdot S_{i+1}+\tilde{Z}_{i+1}^{\theta+\psi+\mathcal{L}}-X_{i+1}^{\mathcal{L}}+v_{i+1}(\phi^{\prime})-\bar{v}_{i+1}^{\theta +\phi^{\prime},\psi}(\mathcal{L}+\mathcal{L}(\phi))$ using Lemma~\ref{lemma-naturalprodstrategy}. If \eqref{expr-prodcostliabwithphi} holds for $i+1$ instead of $i$, it follows that $A_{i+1}^{\prime\theta +\phi^{\prime}}-L_{i+1}^{\theta +\phi^{\prime}}=A_{i+1}^{\prime\theta}-L_{i+1}^{\theta}$, and as $\theta$ is a production strategy for $\mathcal{L}$, fulfillment and financiability condition are satisfied from $i$ to $i+1$ and hence $\bar{v}_i^{\theta +\phi^{\prime},\psi}(\mathcal{L}+\mathcal{L}(\phi))=v_i(\theta+\phi^{\prime})-C_i=v_i(\theta)-C_i+v_i(\phi^{\prime})=\bar{v}_i^{\theta,\psi}(\mathcal{L})+v_i(\phi^{\prime})$, so \eqref{expr-prodcostliabwithphi} holds for $i$. The recursion starts at $i+1=T$, where \eqref{expr-prodcostliabwithphi} holds by assumption.

\subsection{Production with illiquid assets}\label{subsec-prodwithilliquidassets}

Recall that we allow for a portfolio $\psi$ of ``fully illiquid" assets in the production, which we assume are ``held to maturity" (HTM). In particular, in contrast to a strategy of tradables, no additional illiquid assets can be bought, and only the cash inflows $\tilde{Z}_t^{\psi}$ of the illiquid assets at the given date $t$ can be used in the production and not the proceeds from selling part of the assets. Similarly, cash flows of the illiquid assets after date $T$ have an impact on the production only potentially when the assets are transferred at date $T$ to the capital investors, for whom their value is likely less than if the cash flows came from tradables. Holding assets to maturity thus potentially restricts the range of available production strategies, so a priori, the resulting value of the liabilities is at least as large as when trading assets is possible. 

The treatment of the illiquid assets in the production is identical to how cash inflows (premiums) from the liabilities $\mathcal{L}$ are treated. For this reason, alternatively, we could view the object to be produced to be the liabilities $\mathcal{L}$ together with the illiquid assets $\psi$, effectively by adding the illiquid asset cash flows to the premium cash inflows. A similar type of ``cash flow offsetting" that is common in practice is to value the insurance liabilities ``net" of the cash flows of the outgoing reinsurance covers. 

In the balance sheet value $A_i^{\prime}$ of the assets at a date $i$ according to Definition~\ref{def-assetsandliabilities}, no value is assigned to the future cash flows of the illiquid assets $\psi$, in line with the approach for production. At the same time, these future cash flows plausibly reduce the production cost on the liability side. By how much? One can conjecture that this reduction is equal to the value $v_i(\psi)$ of the outstanding cash flows in the case that $\psi$ is a suitable ``static" investment strategy of tradables instead of illiquid assets, i.e. $\psi_{\gamma (t)}=\psi_t$ for any $t\in D$, 
 cash flows $X_t=\tilde{Z}_t^{\psi}$, no cash inflows, and, for simplicity, terminal value $v_T(\psi)=0$. The following proposition shows this for financiability conditions that are neutral to the tradables and for general fulfillment conditions and strategies in $\mathcal{R}^{\geq 0}$. The proposition can alternatively be formulated for the full fulfillment condition and strategies in $\mathcal{R}^{\prime}$ using suitable assumptions. Note that, for tradables of value $v_i(\psi)$ at date $i$, a different production strategy with lower production cost might be available. 
\begin{prop}
Let the financiability condition be neutral to the tradables (Definition~\ref{def-financiabilitycondition}). Given liabilities $\mathcal{L}$ on $D\equiv D[0,T]$ with terminal value $Y_T$, let $\phi\in\mathcal{R}^{\geq 0}$ with capital $\mathcal{C}=(C_i)_{i=0}^{T-1}$ be a production strategy for the liabilities $\mathcal{L}$ with no illiquid assets and production cost $\bar{v}_i^{\phi,0,\mathcal{C}}(\mathcal{L})$. Let $\psi\in\mathcal{R}^{\geq 0}$ be an investment strategy of tradables with cash outflows $X_t=\tilde{Z}_t^{\psi}$, no cash inflows, $\psi_{\gamma (t)}=\psi_t\geq 0$ on $D$, and $v_T(\psi)=0$. For $i\in\{0,\ldots , T-1\}$, let the investment strategy $\xi\equiv \xi^{(i)}\in\mathcal{R}^{\geq 0}$ of tradables be defined for $t\in D$ with $i<t<i+1$ such that $\xi_{\gamma(i)}=0$ and $\xi_{\gamma(t)}\cdot S_t=\xi_t\cdot S_t+\tilde{Z}_t^{\xi+\psi}$ for $i<t<i+1$. Then, the strategy $\phi +\xi$ with capital $\mathcal{C}^{*}=(C_i+v_i(\psi))_{i=0}^{T-1}$ and ``illiquid assets" $\psi$ is a production strategy for $\mathcal{L}$ with production cost 
\begin{equation}
\bar{v}_i^{\phi +\xi,\psi,\mathcal{C}^{*}}(\mathcal{L})=\bar{v}_i^{\phi,0,\mathcal{C}}(\mathcal{L})-v_i(\psi)
\end{equation}
\end{prop}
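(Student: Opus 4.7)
My plan is to establish the claim by backward induction on $i\in\{0,\ldots,T-1\}$, checking at each step that the pair (strategy $\phi+\xi^{(i)}$, capital $C_i^{*}=C_i+v_i(\psi)$) on $[i,i+1]$, together with illiquid assets $\psi$, satisfies the three conditions of Definition~\ref{def-admissibleinvestmentstrategy} and that the stated production cost formula holds. The formula itself is automatic from the rest: because $\xi^{(i)}_{\gamma(i)}=0$, one has $v_i(\phi+\xi^{(i)})=\phi_{\gamma(i)}\cdot S_i=v_i(\phi)$, so
\begin{equation*}
	\Bar{v}_i^{\phi+\xi,\psi,\mathcal{C}^*}(\mathcal{L})=v_i(\phi+\xi^{(i)})-C_i^{*}=v_i(\phi)-C_i-v_i(\psi)=\Bar{v}_i^{\phi,0,\mathcal{C}}(\mathcal{L})-v_i(\psi).
\end{equation*}
The induction bottoms out at $i+1=T$, where $\Bar{v}_T=Y_T$ on both sides and $v_T(\psi)=0$ by assumption, so the inductive hypothesis holds there trivially.

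Condition \eqref{cond-genfinancing} of Definition~\ref{def-admissibleinvestmentstrategy} at the interior dates $i<t<i+1$ is immediate: adding the conversion equation \eqref{def-generalfinancing} for $\phi$ (which has no illiquid assets, so its cash inflows at these dates are $\tilde{Z}_t^{\phi+\mathcal{L}}$) to the defining recursion for $\xi^{(i)}$ yields the analogous equation for $\phi+\xi^{(i)}$ with cash inflows $\tilde{Z}_t^{(\phi+\xi^{(i)})+\psi+\mathcal{L}}$, and non-negativity is preserved. For the fulfillment condition \eqref{FulCati} at $i+1$, the inductive hypothesis together with $v_{i+1}(\psi)=\psi_{i+1}\cdot S_{i+1}$ reduces everything to the key algebraic identity
\begin{equation*}
	A_{i+1}^{\prime,\phi+\xi}-L_{i+1}^{\phi+\xi}=\bigl(A_{i+1}^{\prime,\phi}-L_{i+1}^{\phi}\bigr)+\eta_{i+1}\cdot S_{i+1}+\tilde{Z}_{i+1}^{\eta},
\end{equation*}
where $\eta\equiv\xi^{(i)}+\psi\geq 0$. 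The added term is non-negative, so $\{A_{i+1}^{\prime,\phi+\xi}\geq L_{i+1}^{\phi+\xi}\}$ contains $\{A_{i+1}^{\prime,\phi}\geq L_{i+1}^{\phi}\}$, and by the monotonicity built into Definition~\ref{def-fulfillmentcondition} the fulfillment condition is preserved.

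For the financiability condition \eqref{FinCati}, I would observe that $\eta=\xi^{(i)}+\psi$ is a non-negative self-financing strategy on the interior of $[i,i+1]$ (adding the $\xi^{(i)}$-recursion to $\psi_{\gamma(t)}=\psi_t$), starts at $i$ with value $v_i(\eta)=\eta_{\gamma(i)}\cdot S_i=\psi_i\cdot S_i=v_i(\psi)$, and has payoff $\eta_{i+1}\cdot S_{i+1}+\tilde{Z}_{i+1}^{\eta}$ at $i+1$. On $\{A_{i+1}^{\prime,\phi}\geq L_{i+1}^{\phi}\}$, the identity above lets me drop the positive-part operators, giving $C_{i+1}^{\prime,\phi+\xi}=C_{i+1}^{\prime,\phi}+\eta_{i+1}\cdot S_{i+1}+\tilde{Z}_{i+1}^{\eta}$ and $C_i^{*}=C_i+v_i(\eta)$. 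Since $C_i\rightarrow C_{i+1}^{\prime,\phi}$ satisfies the financiability condition by hypothesis on $\phi$, neutrality to the tradables (Definition~\ref{def-financiabilitycondition}(3)) applied to $\eta$ then yields that $C_i^{*}\rightarrow C_{i+1}^{\prime,\phi+\xi}$ also satisfies it.

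The main technical subtlety I would need to handle is that $\xi^{(i)}$ is defined only for dates strictly inside $(i,i+1)$, so the self-financing equality for $\eta$ at the starting date $t=i$ itself is not literally in the form of Definition~\ref{def-strategywithcashflows} unless $\tilde{Z}_i^{\psi}=0$. I would resolve this by invoking neutrality in the natural way used elsewhere in the paper: the self-financing strategy over $[i,i+1]$ is identified with its evolution from the fresh portfolio $\eta_{\gamma(i)}=\psi_i$ at $i$, of value $v_i(\psi)$, through the interior conversions to the payoff at $i+1$. A secondary point worth spelling out is that conditions \eqref{cond-genfinancing}--\eqref{FulCati} need only hold on $\{A_i^{\prime,\phi+\xi}\geq L_i^{\phi+\xi}\}$, which by the same identity contains $\{A_i^{\prime,\phi}\geq L_i^{\phi}\}$, so the verification may be restricted to the latter event, where the positive-part operators drop out cleanly and the argument above applies verbatim.
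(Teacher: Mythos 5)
Your proposal is correct and follows essentially the same route as the paper's proof: backward recursion anchored at $i+1=T$ via $Y_T$ and $v_T(\psi)=0$, additivity of the conversion equation at interior dates, the key identity $A_{i+1}^{*\prime}-L_{i+1}^{*}=A_{i+1}^{\prime}-L_{i+1}+(\xi+\psi)_{i+1}\cdot S_{i+1}+\tilde{Z}_{i+1}^{\xi+\psi}$, fulfillment preserved by monotonicity, financiability obtained by applying neutrality to the non-negative self-financing strategy $\xi+\psi$ with $v_i(\xi+\psi)=v_i(\psi)$, and the cost formula from $v_i(\xi)=0$. The only caveat is your closing remark that the verification ``may be restricted'' to $\{A_i^{\prime,\phi}\geq L_i^{\phi}\}$ -- strictly, the conditions are demanded on the possibly larger no-failure set of $\phi+\xi$ -- but this is a subtlety the paper's own proof also passes over silently, so it does not distinguish the two arguments.
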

\begin{proof}
Let $i\in\{0,\ldots ,T-1\}$. For $\phi +\xi\in\mathcal{R}^{\geq 0}$, condition \eqref{def-generalfinancing} holds at $t\in D$ with $i<t<i+1$ because, using \eqref{def-generalfinancing} for $\phi$, we have $(\phi +\xi)_t\cdot S_t+\tilde{Z}_t^{(\phi+\xi)+\psi+\mathcal{L}}-X_t^{\mathcal{L}}=(\phi_t\cdot S_t+\tilde{Z}_t^{\phi+\mathcal{L}}-X_t^{\mathcal{L}})+(\xi_t\cdot S_t+\tilde{Z}_t^{\xi+\psi})=(\phi+\xi)_{\gamma (t)}\cdot S_t$. We denote the value of the assets and liabilities at date $i+1$ by $A_{i+1}^{\prime}$ and $L_{i+1}$ for $\phi$, and by $A_{i+1}^{*\prime}$ and $L_{i+1}^{*}$ for $\phi+\xi$. Using that $\psi_{i+1}\cdot S_{i+1}=v_{i+1}(\psi)$ by assumption in the following last equality sign, we get $A_{i+1}^{*\prime}-L_{i+1}^{*}=(\phi+\xi)_{i+1}\cdot S_{i+1}+\tilde{Z}_{i+1}^{\phi +\xi +\psi+\mathcal{L}}-X_{i+1}^{\mathcal{L}}-\bar{v}_{i+1}^{\phi +\xi,\psi,\mathcal{C}^{*}}(\mathcal{L})=A_{i+1}^{\prime}-L_{i+1}+(\xi+\psi)_{i+1}\cdot S_{i+1}+\tilde{Z}_{i+1}^{\xi +\psi}+\bar{v}_{i+1}^{\phi,0,\mathcal{C}}(\mathcal{L})-\bar{v}_{i+1}^{\phi +\xi,\psi,\mathcal{C}^{*}}(\mathcal{L})-v_{i+1}(\psi)$. We proceed recursively backward in time and show that the last three terms together are equal to zero, i.e. $A_{i+1}^{*\prime}-L_{i+1}^{*}=A_{i+1}^{\prime}-L_{i+1}+(\xi+\psi)_{i+1}\cdot S_{i+1}+\tilde{Z}_{i+1}^{\xi +\psi}$. For $i=T-1$, this holds because $\bar{v}_{T}^{\phi,0,\mathcal{C}}(\mathcal{L})=Y_T=\bar{v}_{T}^{\phi +\xi,\psi,\mathcal{C}^{*}}(\mathcal{L})$ and $v_{T}(\psi)=0$ by assumption. Then, because $(\xi+\psi)_{i+1}\cdot S_{i+1}+\tilde{Z}_{i+1}^{\xi +\psi}\geq 0$, the fulfillment condition holds for $\phi +\xi$ and for the capital payoff $C_{i+1}^{*\prime}\geq C_{i+1}^{\prime}+(\xi+\psi)_{i+1}\cdot S_{i+1}+\tilde{Z}_{i+1}^{\xi +\psi}$. The strategy $\xi +\psi\in\mathcal{R}^{\geq 0}$ is self-financing by construction: for $i<t<i+1$, we have $(\xi+\psi)_{\gamma (t)}\cdot S_t=(\xi +\psi)_t\cdot S_t+\tilde{Z}_t^{\xi +\psi}$. Neutrality of the financiability condition to the tradables then implies that $C_i^{*}=C_i+v_i(\xi+\psi)\rightarrow C_{i+1}^{*\prime}$ satisfies the fulfillment condition. Thus, for the production cost at date $i$, we get $\bar{v}_i^{\phi +\xi,\psi,\mathcal{C}^{*}}(\mathcal{L})=v_i(\phi+\xi)-C_i^{*}=v_i(\phi+\xi)-C_i-v_i(\xi+\psi)=\bar{v}_i^{\phi,0,\mathcal{C}}(\mathcal{L})-v_i(\psi)$ using $v_i(\xi)=0$ as $\xi_{\gamma (i)}=0$. Hence, we can proceed recursively. 
\end{proof}

\section{Valuation of insurance liabilities under Solvency II and in the SST}\label{ss-valuesolvencyiisst}

In the following we set out a way to derive a valuation of insurance liabilities corresponding to Solvency II (\cite{EUFD2009}, \cite{EUDR2014}) and the SST (\cite{AVOSST}) from the framework. This involves specifying the applicable fulfillment and financiability condition and production strategies, as well as additional simplifying assumptions.

{\bf Solvency II and SST.} We recall that an insurance company is solvent at a reference date $i=0$ if the available capital $AC_0$ (Eligible Own Funds in Solvency II, Risk-Bearing Capital in the SST) is at least as large as the regulatory required solvency capital $SCR_0$ (Solvency Capital Requirement in Solvency II, Target Capital in the SST):
\begin{equation}\label{eq-avandreqcap}
AC_0\geq SCR_0
\end{equation}
The available capital $AC_0$ is roughly the value of the assets minus the value of the liabilities. The required capital $SCR_0$ is largely derived from the discounted one-year change in available capital: 
\begin{equation}\label{eq-regreqsolvcap}
SCR_0=\rho ((1+r_{0,1})^{-1}\cdot AC_1-AC_0)
\end{equation}
Here, $r_{0,j}$ is the risk-free interest rate at date $i=0$ for a deterministic payoff of $1$ at date $i=j$ in the currency used. The risk measure $\rho$ is a map from a space of random variables $Y$ into $\mathbb{R}\cup\{\infty\}$, with negative realizations of $Y$ corresponding to losses and $\rho (Y)>0$ to positive risk of losses. Solvency II uses the value-at-risk $\rho (Y)=VaR_{\alpha}(Y)=q_{1-\alpha}(-Y)$ for $\alpha =0.005$, with the $u$-quantile $q_u(Y)=\inf\{y\in\mathbb{R}\mid P[Y\leq y]\geq u\}$ (see \cite{Acerbi2022}). The SST uses the lower expected shortfall $\rho (Y)=ES_{\alpha}(Y)=-\frac{1}{\alpha}\cdot\int_{0}^{\alpha}q_u(Y)du$ for $\alpha = 0.01$ and $\mathbb{E}[Y_{-}]<\infty$ (see \cite{Acerbi2022}). Both risk measures are translation-invariant ($\rho (Y+a)=\rho (Y)-a$ for any $a\in\mathbb{R}$) and positively homogeneous ($\rho (a\cdot Y)=a\cdot \rho (Y)$ for any $a\geq 0$).

The value of the insurance liabilities $\mathcal{L}$ at date $i=0$ is given by the sum of best estimate $BEL_0$ and risk margin $RM_0$ (market value margin in the SST):
\begin{equation}
BEL_0+RM_0
\end{equation}
The best estimate $BEL_0$ is roughly defined as the expected value of all discounted cash flows needed for producing the liabilities with the exception of capital cost (and typically without considering fulfillment in only ``sufficiently many cases"). The risk margin $RM_0$ accounts for the capital cost of the required capital $SCR_i$ over the lifetime $i\in\{0,\ldots ,T\}$ of the insurance liabilities, specified through a cost-of-capital rate $\eta >0$ in excess of the risk-free rate. In Solvency II regulation, the following formula for the risk margin is provided, with $\eta =CoC$ (Article 37 of \cite{EUDR2014}): 
\begin{equation}\label{eqn-siiscrformula}
RM_0=CoC\cdot \sum_{i\geq 0} \frac{SCR_i}{(1+r_{0,i+1})^{i+1}}
\end{equation}
The $SCR_i$ are calculated under specific assumptions including no new business, available capital $AC_i$ equal to required capital $SCR_i$, and an investment strategy that is adapted to the insurance liabilities, e.g. under Solvency II so that market risk is ``minimized". 

{\bf Integration into the framework.} For the fulfillment condition, we get from \eqref{eq-avandreqcap} and \eqref{eq-regreqsolvcap} using translation invariance and positive homogeneity of $\rho$ that $AC_0\geq (1+r_{0,1})^{-1}\cdot \rho (AC_1)+AC_0$, which suggests that the fulfillment condition is $\rho (AC_1)\leq 0$. To be more precise, in our setting of production strategies, $AC_1$ here and in \eqref{eq-regreqsolvcap} is, for $i=1$, the available capital $AC_i^{pre}=A_i^{\prime}-L_i$ before liability cash flows $X_{i}^{\mathcal{L}}$, capital payback, and capital raise at date $i$, respectively, using Definition~\ref{def-assetsandliabilities}. In contrast, $AC_0$ above and in \eqref{eq-regreqsolvcap} is, for $i=0$, the available capital $AC_i^{post}=AC_i^{pre}-(A_i^{\prime}-L_i)_{+}+C_i=C_i$ after these steps, in case $A_i^{\prime}\geq L_i$. The value $\phi_{\gamma(i)}\cdot S_{i}$ of the assets in the ``post"-case is the sum of the production cost $\Bar{v}_{i}^{\phi, \psi,\mathcal{C}}(\mathcal{L})$ and the capital $C_i$.

For the financiability condition, we interpret the cost-of-capital rate $\eta >0$ as the minimum (real-world) expected return on the required capital $SCR_0$ in excess of the risk-free return. We thus assume:
\begin{enumerate}[(a)]
\item The fulfillment condition is
\begin{equation}
\rho (AC_1^{pre})\leq 0
\end{equation}
with a translation-invariant and positively homogeneous risk measure $\rho$.  
\item The financiability condition is the requirement that the capital payoff $C_1^{\prime}=(AC_1^{pre})_{+}$ provide at least a (real-world) expected excess return $\eta \geq 0$ over risk free on the capital $C_{0}$, i.e. 
\begin{equation}\label{eqn-siisstfincond}
\mathbb{E}[(AC_1^{pre})_{+}]\geq (1+r_{0,1}+\eta)\cdot C_0
\end{equation}
We write $\mathbb{E}[(AC_1^{pre})_{+}]=\mathbb{E}[1_{M_1} \cdot (A_1^{\prime}-L_1)]$ with $M_1=\{A_1^{\prime}\geq L_1\}$.
\end{enumerate}
The financiability condition \eqref{eqn-siisstfincond} satisfies the two conditions from Definition~\ref{def-financiabilitycondition} and is positively homogeneous but not necessarily consistent with and neutral to the tradables (and may not be a realistic condition). 

We further assume that there are no cash flows for dates $t\in D$ with $i<t<i+1$ and that, for any $i\in\{0,\ldots ,T-1\}$, there exists a ``risk-free one-year zero-coupon bond" with rate $R_{i,i+1}$ from date $i$ to $i+1$ in the selected currency, i.e. a tradable $k\equiv k_i$ with $S_i^{k}=(1+R_{i,i+1})^{-1}$, $S_{i+1}^{k}=0$, and $\tilde{Z}_{i+1}^{k}=Z_{i+1}^{k}=1$. We have $r_{0,1}=R_{0,1}$.

The fulfillment condition for $\rho$ given by the value-at-risk as under Solvency II is equivalent to $P[A_{1}^{\prime}\geq L_{1}]\geq 1-\alpha$ by standard arguments about the infimum. So, the liabilities indeed need to be produced in intuitively ``many cases". For the SST with the expected shortfall, the interpretation is less straightforward. The fulfillment condition is equivalent to $\int_{0}^{\alpha}q_u(AC_1^{pre})du\geq 0$. This implies in particular that $P[A_{1}^{\prime}\geq L_{1}]\geq 1-\alpha^{\prime}>1-\alpha$ for some $\alpha^{\prime}<\alpha$, so the ``safety level" in terms of value-at-risk is higher than $\alpha =0.01$. This follows as $\int_{0}^{\alpha}q_u(AC_1^{pre})du\geq 0$ implies that $\beta = \inf\{u\in\mathbb{R}\mid q_u(AC_1^{pre})\geq 0\} <\alpha$. So, for any $\beta <\alpha^{\prime}<\alpha$, we have $q_{\alpha^{\prime}}(AC_1^{pre})\geq 0$, which implies by standard infimum arguments that $VaR_{\alpha^{\prime}}(AC_1^{pre})=q_{1-\alpha^{\prime}}(-AC_1^{pre})\leq 0$.

To investigate valuation in the framework under the above assumptions, we use the recursively backward-in-time approach of the framework. We consider three stages of successively more simplifications, with each stage moving closer to the usual presentation of Solvency II and SST valuation. In the first stage, the value is calculated as a whole, with no separate calculation of $BEL_0$ and $RM_0$. In the second stage, $BEL_0$ and $RM_0$ are calculated separately, based on restricting the production strategies. We note that the separate calculation creates additional work and would not be required within the framework. In the third stage, the financiability condition \eqref{eqn-siisstfincond} is replaced by a simpler condition.  

To simplify, we assume that we are given the liability value $L_1=X_{1}^{\mathcal{L}}+\Bar{v}_{1}^{\phi, \psi,\mathcal{C}}(\mathcal{L})=X_{1}^{\mathcal{L}}+BEL_1+RM_1$ with best estimate $BEL_1$ and risk margin $RM_1$ at date $i=1$. The objective is to derive the production cost $\Bar{v}_{0}^{\phi, \psi,\mathcal{C}}(\mathcal{L})=BEL_0+RM_0$ at date $i=0$. 

{\bf First stage.} Following the framework, $\Bar{v}_{0}^{\phi, \psi,\mathcal{C}}(\mathcal{L})$ is calculated in the two steps set out in the introduction: 
\begin{enumerate} 
\item[1:] Determine the value $A_0=\phi_{\gamma(0)}\cdot S_{0}$ of the total assets required at date $i=0$ and a production strategy $\phi$ from $i=0$ to $i=1$ such that the fulfillment condition $\rho(AC_1^{pre})\leq 0$ is satisfied at $i=1$.
\end{enumerate} 
In the special case that the assets are only invested in the risk-free one-year zero-coupon bond, i.e. $A_1^{\prime}=(1+r_{0,1})\cdot A_0$, the value $A_0=(1+r_{0,1})^{-1}\cdot\rho(-L_1)$ solves the fulfillment condition with equality, $\rho(A_1^{\prime}-L_1)=0$, as $\rho$ is translation-invariant. We also get $M_1=\{L_1\leq \rho(-L_1)\}$. 

The second step is:
\begin{enumerate} 
\item[2:] Solve for the capital $C_0=SCR_0$ in the financiability condition $(1+r_{0,1}+\eta)\cdot C_0=\mathbb{E}[(AC_1^{pre})_{+}]=\mathbb{E}[1_{M_1} \cdot (A_1^{\prime}-L_1)]$ with equality. From this, we get the production cost $\Bar{v}_{0}^{\phi, \psi,\mathcal{C}}(\mathcal{L})=A_0-SCR_0$ at date $i=0$.
\end{enumerate} 
In the above special case, we have $\mathbb{E}[1_{M_1}\cdot A_1^{\prime}]=(1+r_{0,1})\cdot A_0\cdot \mathbb{E}[1_{M_1}]=(1+r_{0,1})\cdot A_0\cdot P[L_1\leq \rho(-L_1)]$. Hence, for the financiability condition, $(1+r_{0,1}+\eta)\cdot SCR_0=\mathbb{E}[1_{M_1} \cdot (A_1^{\prime}-L_1)]=(1+r_{0,1})\cdot A_0\cdot P[L_1\leq \rho(-L_1)] - \mathbb{E}[1_{\{L_1\leq \rho(-L_1)\}} \cdot L_1]$. Inserting the resulting expression for $SCR_0$ and the above expression $A_0=(1+r_{0,1})^{-1}\cdot\rho(-L_1)$ into $\Bar{v}_{0}^{\phi, \psi,\mathcal{C}}(\mathcal{L})=A_0-SCR_0$, we get an explicit formula for the production cost $\Bar{v}_{0}^{\phi, \psi,\mathcal{C}}(\mathcal{L})$ in terms of $L_1$ (see Theorem 4 of \cite{Moehr2011}).

{\bf Second stage.} We calculate the two components $BEL_0$ and $RM_0$ of the production cost $\Bar{v}_{0}^{\phi, \psi,\mathcal{C}}(\mathcal{L})$, where $RM_0$ should correspond to the capital cost. For this, we make the following two simplifying assumptions:
\begin{enumerate} 
\item[(c)] The assets corresponding to each summand in $A_0=BEL_0+RM_0+ SCR_0$ are invested separately, with the values at date $1$ denoted by $A_1^{\prime}=A_1^{BEL}+A_1^{RM}+A_1^{SCR}$. 
\item[(d)] The assets corresponding to $SCR_0$ and $RM_0$ are invested in the risk-free zero-coupon bond, so $A_1^{SCR}=(1+r_{0,1})\cdot SCR_0$ and $A_1^{RM}=(1+r_{0,1})\cdot RM_0$. 
\end{enumerate} 
Inserting $A_1^{SCR}=(1+r_{0,1})\cdot SCR_0$ into the fulfillment condition with equality $\rho (AC_1^{pre})=0$, using translation-invariance and solving for $SCR_0$, we find that the required capital $SCR_0=(1+r_{0,1})^{-1}\cdot \rho (A_1^{BEL}+A_1^{RM}-L_1)$ arises from the mismatch between the value $A_1^{BEL}+A_1^{RM}$ of the ``replicating portfolio" and the value $L_1$ of the liabilities. However, this formula cannot directly be used to calculate $SCR_0$. Further inserting $A_1^{RM}=(1+r_{0,1})\cdot RM_0$ and using translation invariance: 
\begin{equation}\label{egn-scrrm}
 (1+r_{0,1})\cdot (RM_0+SCR_0) = \rho (A_1^{BEL}-L_1) 
\end{equation}
Inserting this into $A_1^{\prime}=A_1^{BEL} + (1+r_{0,1})\cdot (RM_0 + SCR_0)$, we get $M_1=\{ A_1^{\prime}\geq L_1 \}=\{ A_1^{BEL}-L_1\geq -\rho (A_1^{BEL}-L_1) \}$. 

The financiability condition with equality is that $(1+r_{0,1}+\eta)\cdot SCR_0$ needs to be equal to
\begin{equation}\label{egn-finc2ndstage}
\mathbb{E}[1_{M_1} \cdot (A_1^{BEL}+ A_1^{RM}+A_1^{SCR}-X_{1}^{\mathcal{L}}-BEL_1-RM_1)] 
\end{equation}
To split the value into $BEL_0$ and $RM_0$, considering that the best estimate is intended to account for the expected cash flows other than capital cost in the ``sufficiently many cases", we require $A_1^{BEL}$ to satisfy:
\begin{enumerate} 
\item[(e)] The best estimate is defined by the condition:\footnote{This condition involves the risk margin $RM_1$ through $M_1$, so it is not totally independent of the capital cost.} 
\begin{equation}\label{cond-sepcond}
\mathbb{E}[1_{M_1} \cdot (A_1^{BEL}-X_{1}^{\mathcal{L}}-BEL_1)]=0
\end{equation} 
\end{enumerate} 
This can be solved for $BEL_0$ and its investment strategy from date $0$ to $1$ without knowing $SCR_0$ or $RM_0$ because of the expression for $M_1$ derived above. 

For calculating the risk margin $RM_0$ (and if required $SCR_0$), we insert \eqref{cond-sepcond} into \eqref{egn-finc2ndstage} and derive from the financiability condition that: 
\begin{equation}\label{eqn-rmgeneral}
(1+r_{0,1}+\eta)\cdot SCR_0=P[M_1]\cdot (1+r_{0,1})\cdot (SCR_0+RM_0)-\mathbb{E}[1_{M_1} \cdot RM_1]
\end{equation}
Solving this for the risk margin $RM_0$, it can be written as the sum of a term that can be interpreted as capital cost on the required capital $SCR_0$ for the period from $0$ to $1$ and a term containing $RM_1$, i.e. the capital cost for subsequent periods. To calculate the risk margin $RM_0$, we insert \eqref{egn-scrrm} into \eqref{eqn-rmgeneral} to derive a formula for $SCR_0$. This can then be inserted into \eqref{egn-scrrm} and solved for $RM_0$ to give:
\begin{equation}\label{eqn-rmexplform}
RM_0= \frac{((1+r_{0,1}+\eta)-P[M_1]\cdot (1+r_{0,1}))\cdot \rho (A_1^{BEL}-L_1)}{(1+r_{0,1}+\eta)\cdot (1+r_{0,1})}   +\frac{\mathbb{E}[1_{M_1} \cdot RM_1]}{(1+r_{0,1}+\eta)}
\end{equation}
{\bf Third stage.} We make the further simplification: 
\begin{enumerate} 
\item[(f)] The financiability condition with equality is replaced (by leaving out $(\cdot)_+$) by the stricter condition $\mathbb{E}[AC_1^{pre}]= (1+r_{0,1}+\eta)\cdot SCR_0$. 
\end{enumerate} 
This gives a lower required capital $SCR_0$ and thus a higher production cost than without the simplification and in this sense an upper bound. Formula \eqref{eqn-rmgeneral} then simplifies (as $M_1$ disappears), and solving it for the risk margin $RM_0$:
\begin{equation} 
RM_0=(1+r_{0,1})^{-1}\cdot\eta\cdot SCR_0+(1+r_{0,1})^{-1}\cdot\mathbb{E}[RM_1]
\end{equation} 
So the risk margin is the sum of the discounted capital cost for the period from $0$ to $1$ and the discounted expected capital cost for the periods after $1$, which is a first step for deriving formula \eqref{eqn-siiscrformula} by suitable recursion. As above, $RM_0$ cannot directly be calculated from this but from \eqref{eqn-rmexplform} (with $M_1$ disappearing):
\begin{equation}
RM_0= \frac{\eta\cdot \rho (A_1^{BEL}-L_1)}{(1+r_{0,1}+\eta)\cdot (1+r_{0,1})}   +\frac{\mathbb{E}[RM_1]}{(1+r_{0,1}+\eta)}
\end{equation}
Further assuming that also the assets for the best estimate are invested in the risk-free one-year zero-coupon bond, i.e. $A_{1}^{BEL}=(1+r_{0,1})\cdot BEL_0$, the correspondingly simplified condition \eqref{cond-sepcond} directly gives 
\begin{equation}
BEL_0=(1+r_{0,1})^{-1}\cdot \mathbb{E} [X_{1}^{\mathcal{L}}] + (1+r_{0,1})^{-1}\cdot\mathbb{E} [BEL_{1}] 
\end{equation}
So, the best estimate is the sum of the discounted expected liability cash flows at date $1$ and the discounted expectation of the best estimate at $1$ for liability cash flows after $1$, which for this special case broadly corresponds to Solvency II and the SST.

\section{Conclusion }\label{sec-conclusion}

We have introduced a framework for the valuation of insurance liabilities by production cost for an insurance company subject to regulatory solvency capital requirements and insolvency laws. It considers fulfillment only in ``sufficiently many cases", capital, insolvency, and illiquid assets in addition to tradables, in discrete time. The framework is elementary and in particular does not assume the existence of ``risk-neutral" pricing measures. It is intended to be practically applicable and, through explicitly using production strategies, allows for a concrete fulfillment by production of the insurance liabilities in practice.

The framework assumes that liability cash flows occur and tradables are traded at discrete dates and that insolvency and regulatory solvency capital requirements are assessed annually. It is based on two conditions: a fulfillment condition, defining when fulfillment is ``sufficient", and a financiability condition, defining the return requirements of capital investors. Defining the central notion of production strategies in the general case requires distinguishing between positive shares of tradables, which are assets, and short positions of tradables, which are liabilities, and making explicit the assumptions on taking on and closing out short positions. We define production strategies initially for portfolios with positive shares of tradables and extend it to portfolios with non-negative values. 

We show that, under suitable assumptions on the financiability condition and the applicable insolvency laws, a production strategy for an arbitrary fulfillment condition can be extended to ``full" fulfillment of liabilities that are adjusted only on the complement of the ``sufficiently many cases" and in a sense correspond to a ``redefinition" of the liabilities that is ``allowed" by the given fulfillment condition. We identify the conditions under which the valuation resulting from the production cost can be viewed as an extension of the valuation by market price for investment strategies of tradables. We investigate the implications of the treatment of illiquid assets in the production on the resulting production cost and sketch how the valuation approaches for insurance liabilities from Solvency II and the SST can be seen as special cases of the framework by means of a fulfillment condition that corresponds to the regulatory solvency capital requirement and a specific ``cost-of-capital" fulfillment condition, potentially using additional simplifying assumptions.

\section{Appendix}\label{sec-appendix}

\subsection{Production strategies covering failure - definitions, proposition, proof}\label{ss-prodstratinclfailuredetail}

We assume that according to the applicable insolvency laws, in case of balance sheet insolvency $A^{\prime}_i< L_i$ at a date $i$, all outstanding cash inflows and outflows of the liabilities $\mathcal{L}$ are proportionally reduced according to Definition~\ref{def-adjustedliabilities} by essentially the same factor such that balance sheet insolvency is just removed, leading to adjusted liabilities $\widetilde{\mathcal{L}}^{(i)}$. The adjustment applies inductively forward in time: after liabilities have been reduced at date $i$, failure may under some developments occur again at later dates. So liabilities $\widetilde{\mathcal{L}}^{(i)}$ are successively adjusted ultimately to adjusted liabilities $\widetilde{\mathcal{L}}$. The adjusted liabilities $\widetilde{\mathcal{L}}$ depend on the strategy $\phi$, the illiquid assets $\psi$ and, in our approach, additionally an investment strategy $\theta$, as we explain below. As we show in Proposition~\ref{prop-completeproductionstrategy}, the adjusted liabilities can be produced by suitably scaling the original production strategy $\phi$. The approach is slightly complicated because the illiquid assets $\psi$ cannot just be scaled down as by assumption they cannot be reduced as they cannot be sold. This requires selecting a strategy $\theta\equiv \theta^{\psi}$ in which the excess $(1-\lambda_i)\cdot \tilde{Z}^{\psi}_t$ of the cash inflows $\tilde{Z}^{\psi}_t$ above the scaled down cash flows $\lambda_{i}\cdot \tilde{Z}^{\psi}_t$ is invested. 
\begin{defn}[Adjusted liabilities]\label{def-adjustedliabilities} 
Given a non-negative production strategy $\phi$ with illiquid assets $\psi$ for liabilities $\mathcal{L}$ from $i_{min}$ to $i_{max}$, with the production cost $\Bar{v}_{i}^{\phi, \psi, \mathcal{C}}(\mathcal{L})$ available for all states, the adjusted liabilities $\widetilde{\mathcal{L}}\equiv\widetilde{\mathcal{L}}(\phi,\psi,\mathcal{C},\theta^{\psi})$ are defined for $t\in D\equiv D[i_{min},i_{max}]$ by the cash flows
\begin{equation}
	\tilde{Z}_{t}^{\widetilde{\mathcal{L}}}=\lambda_{\lceil t-1\rceil}\cdot \tilde{Z}_{t}^{\mathcal{L}},\;\;\; X_{t}^{\widetilde{\mathcal{L}}}=\lambda_{\lfloor t\rfloor}\cdot X_{t}^{\mathcal{L}}
\end{equation}
where $\lceil t-1\rceil=i-1$ for $t=i$ and $\lceil t-1\rceil=i$ for $i< t\leq i+1$, and $\lfloor t\rfloor=i$ for $i\leq t<i+1$. The adapted random variables $(\lambda_i)_{i=i_{min}}^{i_{max}}$ with $0\leq\lambda_i\leq 1$ are defined recursively by $\lambda_{i}=\xi_i\cdot \lambda_{i-1}$ with $\lambda_{i_{min}-1}=1$ and 
\begin{equation}\label{defn-xii}
	\xi_i=\frac{(\lambda_{i-1}\cdot L_i)\wedge (\lambda_{i-1}\cdot A^{\prime}_i+X^{\theta}_i)}{\lambda_{i-1}\cdot L_i}\mbox{ for }L_i>0\mbox{, and }\xi_i=1\mbox{ for }L_i=0
\end{equation}
The  non-negative investment strategy $\theta\equiv \theta^{\psi}$ in \eqref{defn-xii} is arbitrary apart from its cash flows given by, for $t\in D$ and $i\in\{i_{min},i_{min}+1,\ldots, i_{max}\}$,
\begin{equation}
	\tilde{Z}^{\theta}_t = (1-\lambda_{\lceil t-1\rceil})\cdot \tilde{Z}^{\psi}_t,\;\; X^{\theta}_i = \theta_i\cdot S_i +(1-\lambda_{i-1})\cdot \tilde{Z}^{\psi}_i\mbox{ and zero otherwise} 
\end{equation}
where the cash flows $\tilde{Z}^{\theta}_t$ are incoming to $\theta$ from $\psi$ and $X^{\theta}_i$ are outgoing of $\theta$ but remain within the company's assets. Such a strategy can always be defined as, at any date $i$, it only pays out its value and the cash inflows.
\end{defn}
Note that, at date $t = i$, cash inflows of the liability $\mathcal{L}$ are scaled by $\lambda_{i-1}$ and cash outflows by $\lambda_i$, which means that 
 premiums at date $i$ are assumed to be considered in full before failure is declared, whereas outgoing claims, benefits and costs are subsequently reduced, including those at date $i$. For the remaining dates $t\in D\setminus\{0,1,\ldots , T\}$, the same scaling factors are used. In Definition~\ref{def-adjustedliabilities}, liability cash flows are only scaled down over time, although the situation may improve in some developments. 
\begin{prop}\label{prop-completeproductionstrategy}
Let the financiability condition be positively homogeneous. Let $\phi\in\mathcal{R}^{\geq 0}$ with illiquid assets $\psi$ and capital amounts $\mathcal{C}=(C_i)_{i=i_{min}}^{i_{max}-1}$  be a non-negative production strategy from $i_{min}$ to $i_{max}$ for the liability $\mathcal{L}$ with terminal value $Y_{i_{max}}$. Assume that the production cost $\Bar{v}_{i}^{\phi, \psi,\mathcal{C}}(\mathcal{L})$ for all $i\in\{i_{min},\ldots ,i_{max}-1\}$ are available for all states. Given any fixed non-negative strategy $\theta\equiv \theta^{\psi}$ as in Definition~\ref{def-adjustedliabilities}, define the strategy $\widetilde{\phi}\in\mathcal{R}^{\geq 0}$ for $t\in D[i_{min}, i_{max}]$ by 
\begin{equation}
	\widetilde{\phi}_t=\lambda_{\lceil t-1\rceil}\cdot \phi_t \mbox{ and including }X^{\theta}_t\mbox{ as cash inflows}
\end{equation}
with illiquid assets $\widetilde{\psi}_t=\lambda_{\lceil t-1\rceil}\cdot\psi_t$, capital amounts $\widetilde{\mathcal{C}}=(\lambda_{i}\cdot C_i)_{i=i_{min}}^{i_{max}-1}$ and terminal value $\tilde{Y}_{i_{max}}=\lambda_{i_{max}}\cdot Y_{i_{max}}$. Then, the strategy $\widetilde{\phi}\in\mathcal{R}^{\geq 0}$ extends $\phi$ to a production strategy for the full fulfillment condition and the same financiability condition for the liabilities $\widetilde{\mathcal{L}}\equiv \widetilde{\mathcal{L}}(\phi,\psi,\mathcal{C},\theta)$ and has production cost
\begin{equation}
 \Bar{v}_{i}^{\tilde{\phi},\psi,\widetilde{\mathcal{C}}}(\widetilde{\mathcal{L}})=\lambda_{i}\cdot \Bar{v}_{i}^{\phi,\psi,\mathcal{C}}(\mathcal{L}) \;\;\mbox{ for }i\in\{i_{min},\ldots,i_{max}-1\}  
\end{equation}
\end{prop}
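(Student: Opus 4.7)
The plan is to proceed by backward induction on $i$ from $i_{max}$ down to $i_{min}$, showing at each step that the restriction of $\widetilde{\phi}$ to $D[i,i_{max}]$, equipped with capital $(\lambda_j C_j)_{j=i}^{i_{max}-1}$, the illiquid assets $\widetilde{\psi}$, and terminal value $\lambda_{i_{max}} Y_{i_{max}}$, is a non-negative production strategy for $\widetilde{\mathcal{L}}$ under the full fulfillment condition, with production cost $\lambda_i\cdot\Bar{v}_i^{\phi,\psi,\mathcal{C}}(\mathcal{L})$. Membership in $\mathcal{R}^{\geq 0}$ follows from $0\leq\lambda_{\lceil t-1\rceil}\leq 1$ and $\phi\in\mathcal{R}^{\geq 0}$, and predictability is inherited because $\lambda_i$ is $\mathcal{F}_i$-measurable. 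The base case at $i_{max}$ is just the definition of the terminal value.

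In the inductive step on $[i,i+1]$, the key observation is that $\lambda_{\lceil t-1\rceil}=\lambda_i$ is constant for $t\in D$ with $i<t\leq i+1$. Condition \eqref{cond-genfinancing} at intermediate dates $t$ (where $X^{\theta}_t=0$) follows by scaling equation \eqref{def-generalfinancing} for $\phi,\psi,\mathcal{L}$ through by $\lambda_i$ and substituting the definitions of the scaled cash flows $\tilde{Z}^{\widetilde{\psi}}_t=\lambda_i\tilde{Z}^{\psi}_t$, $\tilde{Z}^{\widetilde{\mathcal{L}}}_t=\lambda_i\tilde{Z}^{\mathcal{L}}_t$ and $X^{\widetilde{\mathcal{L}}}_t=\lambda_i X^{\mathcal{L}}_t$. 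The production-cost identity is immediate from $\widetilde{\phi}_{\gamma(i)}=\lambda_i\phi_{\gamma(i)}$, giving $v_i(\widetilde{\phi})-\lambda_i C_i=\lambda_i(v_i(\phi)-C_i)=\lambda_i\Bar{v}_i^{\phi,\psi,\mathcal{C}}(\mathcal{L})$.

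The substantive work is at $t=i+1$. Using the inductive hypothesis $\widetilde{\Bar{v}}_{i+1}=\lambda_{i+1}\Bar{v}_{i+1}$ together with the identity $(-x)_+-(x)_+=-x$, a direct calculation yields
\[
\widetilde{A}^{\prime}_{i+1}-\widetilde{L}_{i+1}=\lambda_i A^{\prime}_{i+1}+X^{\theta}_{i+1}-\lambda_{i+1} L_{i+1}+(\lambda_{i+1}-\lambda_i)(-\Bar{v}_{i+1})_+
\]
which I would then split according to \eqref{defn-xii}. If $\xi_{i+1}=1$, then $\lambda_{i+1}=\lambda_i$ and the expression collapses to $\lambda_i(A^{\prime}_{i+1}-L_{i+1})+X^{\theta}_{i+1}\geq 0$, which also dominates the scaled original payoff $\lambda_i(A^{\prime}_{i+1}-L_{i+1})_+$. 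If $\xi_{i+1}<1$, then by construction $\lambda_{i+1} L_{i+1}=\lambda_i A^{\prime}_{i+1}+X^{\theta}_{i+1}$ and the residual is $(\lambda_{i+1}-\lambda_i)(-\Bar{v}_{i+1})_+$, which vanishes when $\Bar{v}_{i+1}\geq 0$; the original $(A^{\prime}_{i+1}-L_{i+1})_+$ is also zero here since $A^{\prime}_{i+1}<L_{i+1}$. Condition \eqref{FinCati} then follows by applying positive homogeneity of the financiability condition to the original investment $C_i\to(A^{\prime}_{i+1}-L_{i+1})_+$ and monotonicity (property (b) of Definition~\ref{def-financiabilitycondition}).

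The main obstacle lies in the $\xi_{i+1}<1$ sub-case when the original production cost $\Bar{v}_{i+1}$ is strictly negative: the residual $(\lambda_{i+1}-\lambda_i)(-\Bar{v}_{i+1})_+$ is then strictly negative, so the fulfillment identity $\widetilde{A}^{\prime}_{i+1}\geq\widetilde{L}_{i+1}$ does not follow verbatim from the stated scaling. This indicates that either the proposition is to be read within the non-negative production cost regime (situation (B) of Section~\ref{ss-nonnegprodstrategydefn}), or the scaling of the tradables borrowed via step 2bis to realize negative production cost must retain the pre-failure factor $\lambda_i$ on $[i,i+1]$ rather than the post-failure $\lambda_{i+1}$. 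Once this sign sub-case is handled, the induction closes immediately from the scaled equations above and positive homogeneity of the financiability condition.
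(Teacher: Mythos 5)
Your main line of argument is essentially the paper's own: scale the strategy, illiquid assets, capital and terminal value by the adapted factors $\lambda$, verify \eqref{cond-genfinancing} at intermediate dates by multiplying \eqref{def-generalfinancing} through by the factor $\lambda_i$ (constant on the period), obtain fulfillment at the annual dates from the defining formula \eqref{defn-xii}, get \eqref{FinCati} from positive homogeneity together with property (b) of Definition~\ref{def-financiabilitycondition}, and read off the production-cost identity from $v_i(\widetilde{\phi})=\lambda_i v_i(\phi)$. The only structural difference is that you run a single backward induction, whereas the paper builds intermediate liabilities $\widetilde{\mathcal{L}}^{(i)}$ by a forward iteration over the possible failure dates, each step containing the same backward recursion \eqref{eqn-valuerecursionadj}; this is organizational rather than mathematical.

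The corner case you flag is genuine, and it is worth noting how the paper's proof deals with it: it does not, at least not explicitly. Inside the proof the paper works with the ``net'' balance-sheet quantities, $A'_j=\phi_j\cdot S_j+\tilde{Z}_j^{\phi+\psi+\mathcal{L}}$ without the $(-\Bar{v}_j)_{+}$ term and $L_j=X_j^{\mathcal{L}}+\Bar{v}_j$ without the positive part (see its definitions of $A_j^{\prime(i+1)}$ and $L_j^{(i+1)}$ and the identity $L_{i+1}^{(i+1)}=\lambda_{i+1}L_{i+1}$), and implicitly reads \eqref{defn-xii} in the same way. Under that reading the reduction at a failure date gives exact equality $\lambda_{i+1}L_{i+1}=\lambda_i A'_{i+1}+X^{\theta}_{i+1}$, and since the difference $\widetilde{A}'_{i+1}-\widetilde{L}_{i+1}$ is the same whether or not one splits $\Bar{v}$ into positive parts, full fulfillment follows; under the literal Definition~\ref{def-assetsandliabilities}/Definition~\ref{def-adjustedliabilities} reading your residual $(\lambda_{i+1}-\lambda_i)(-\Bar{v}_{i+1})_{+}$ appears and is strictly negative precisely when $\xi_{i+1}<1$, $\lambda_i>0$ and $\Bar{v}_{i+1}<0$. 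So your hedge is warranted and points at an imprecision of the paper rather than a gap in your argument; the resolution consistent with the paper's proof is to interpret $A'_{i+1}$ and $L_{i+1}$ in \eqref{defn-xii} in the net sense (equivalently, to let the signed production cost sit on the liability side of the reduction rule), rather than your alternative of keeping the factor $\lambda_i$ on the borrowed tradables, which would conflict with the claimed identity $\Bar{v}_{i+1}^{\tilde{\phi},\psi,\widetilde{\mathcal{C}}}(\widetilde{\mathcal{L}})=\lambda_{i+1}\Bar{v}_{i+1}^{\phi,\psi,\mathcal{C}}(\mathcal{L})$; restricting to non-negative production cost (situation (B)), as you also suggest, removes the issue altogether. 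Note finally that the same caveat touches the financiability step: the paper's inequality $C_{i+1}^{\prime(i+1)}\geq\lambda_i^{(i+1)}(A_{i+1}-L_{i+1})_{+}$ uses $L_{i+1}\geq 0$, which again holds automatically only outside the negative-production-cost corner.
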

In the proposition, $\widetilde{\phi}_t$ being defined as ``including $X^{\theta}_t$ as cash inflows" means that, at a date $i$ ($X^{\theta}_t$ is zero at other dates), the cash flows $X^{\theta}_t$ flow into the strategy. This does not change the strategy because the additional cash flows are immediately paid out to the capital investors through $C_{i}^{\prime}$. In that sense, the capital investors potentially receive ``too much". 
\begin{proof}
The adjusted liabilities $\widetilde{\mathcal{L}}$ and the strategy $\widetilde{\phi}$ are defined inductively for increasing $i\in \{i_{min}-1\}\cup I$ with $I\equiv\{i_{min},\ldots ,i_{max}\}$ by adjusted liabilities $\widetilde{\mathcal{L}}^{(i)}$, strategies $\widetilde{\phi}^{(i)}$, illiquid assets $\psi^{(i)}_t$, suitable capital amounts $C_j^{(i)}$, and strategies $\theta^{(i)}_t$, for a suitable terminal value $\tilde{Y}_{i_{max}}^{(i)}$. The quantities in step $i$ account for failure - by adjusting the liabilities such that the full fulfillment condition is satisfied - up to and including date $i$, but potentially not at later dates. This is achieved by scaling all quantities at step $i$ instead of by $\lambda_j$ by scaling factors $\lambda_j^{(i)}$ for $j\in I$ that are equal to $\lambda_j$ for $j\leq i-1$ and equal to $\lambda_i$ for $j\geq i$ and with $\lambda_{i_{min}-1}^{(i)}=\lambda_{i_{min}-1}=1=\lambda_{j}^{(i_{min}-1)}$. The strategy $\theta_t^{(i)}$ satisfies $\theta_t^{(i)}=\theta_t$ for $t\leq i+1$ and for the values of the assets and liabilities, $A^{\prime (i)}_{j}=\lambda_{j-1}^{(i)}\cdot A^{\prime}_{j}$ and $L_{j}^{(i)}=\lambda_{j}^{(i)}\cdot L_{j}$ for $j\in I$. In the iteration step from date $i$ to $i+1$, accounting for failure is extended to date $i+1$ without changing the situation up to date $i$. To this end, we define $\xi_j^{(i+1)}$ to be $\xi_j$ for $j\leq i$,
\begin{equation}\label{def-xiip1}
\xi_{i+1}^{(i+1)}=\frac{L_{i+1}^{(i)}\wedge (A^{\prime (i)}_{i+1}+X^{\theta^{(i)}}_{i+1})}{L_{i+1}^{(i)}}
\end{equation}
and equal to $1$ for $j\geq i+2$. We have $\xi_{i+1}^{(i+1)}=\xi_{i+1}$ because $X_{i+1}^{\theta^{(i)}}=\theta_{i+1}^{(i)}\cdot S_{i+1} +(1-\lambda_{i}^{(i)})\cdot \tilde{Z}^{\psi}_{i+1}=\theta_{i+1}\cdot S_{i+1} +(1-\lambda_{i})\cdot \tilde{Z}^{\psi}_{i+1}=X_{i+1}^{\theta}$, and $A^{\prime (i)}_{i+1}=\lambda_{i}\cdot A^{\prime}_{i+1}$, and $L^{(i)}_{i+1}=\lambda_{i}\cdot L_{i+1}$. Thus, $\xi_j^{(i+1)}$ is equal to $\xi_j$ for $j\leq i+1$ and equal to $1$ for $j\geq i+2$. In particular, the situation is unchanged from date $i$ in that $\xi_j^{(i+1)}=\xi_j^{(i)}$ for $j\leq i$. We define the scaling factors $\lambda_{i_{min}-1}^{(i+1)}=1$ and $\lambda_{j}^{(i+1)}=\xi_{j}^{(i+1)}\cdot\lambda_{j-1}^{(i+1)}$ for $j\in I$, which implies: 
\begin{equation}
\lambda_j^{(i+1)} = \left\{ 
\begin{array}{rcl} 
& \lambda_j & \mbox{ for } j\leq i \\ 
& \lambda_{i+1} & \mbox{ for } j\geq i+1
\end{array}
\right.
\end{equation}
In particular, $\lambda_j^{(i+1)}=\lambda_j^{(i)}$ for $j\leq i$. The adjusted liabilities $\mathcal{L}^{(i+1)}$ are defined by scaling the cash flows analogously to Definition~\ref{def-adjustedliabilities}. To define the adjusted strategy $\tilde{\phi}^{(i+1)}$ with $\psi^{(i+1)}_t$, we define $\theta^{(i+1)}$ by $\tilde{Z}^{\theta^{(i+1)}}_t = (1-\lambda_{\lceil t-1\rceil}^{(i+1)})\cdot \tilde{Z}^{\psi}_t$ and $X^{\theta^{(i+1)}}_j = \theta_j^{(i+1)}\cdot S_j +(1-\lambda_{j-1}^{(i+1)})\cdot \tilde{Z}^{\psi}_j$ and investing in the same way as $\theta^{(i)}_t$ for $t\leq i+1$. It follows that $\theta^{(i+1)}_t=\theta_t$ for $t\leq i+2$ and $\theta^{(i+1)}_t=\theta^{(i)}_t$ for $t\leq i+1$. We can then define $A_j^{\prime (i+1)}=\tilde{\phi}_j^{(i+1)}\cdot S_j+\tilde{Z}_j^{\tilde{\phi}^{(i+1)}+\psi^{(i+1)}+\tilde{\mathcal{L}}^{(i+1)}}=\lambda_{j-1}^{(i+1)}\cdot \phi_j\cdot S_j+\lambda_{j-1}^{(i+1)}\cdot\tilde{Z}_j^{\phi+\psi+\mathcal{L}}$. Hence, $A_j^{\prime (i+1)}=\lambda_{j-1}^{(i+1)}\cdot A_j^{\prime}$ is equal to $\lambda_{j-1}\cdot A_j^{\prime}$ for $j\leq i+1$ and $\lambda_{i+1}\cdot A_j^{\prime}$ for $j\geq i+2$. In particular, $A_j^{\prime (i+1)}=A_j^{\prime (i)}$ for $j\leq i$. We define 
\begin{equation}
L_j^{(i+1)}=X_j^{\tilde{\mathcal{L}}^{(i+1)}}+\Bar{v}_{j}^{\tilde{\phi}^{(i+1)},\psi^{(i+1)},\mathcal{C}^{(i+1)}}(\tilde{\mathcal{L}}^{(i+1)})
\end{equation}
with $X_j^{\tilde{\mathcal{L}}^{(i+1)}}=\lambda_j^{(i+1)}\cdot X_j^{\mathcal{L}}$. We define $C_j^{\prime (i+1)}=(A_j^{\prime (i+1)}+X_j^{\theta^{(i+1)}}-L_j^{(i+1)})_{+}$, and $C_j^{(i+1)}=\lambda_{j}^{(i+1)}\cdot C_j$, and $Y_{i_{max}}^{(i+1)}=\lambda_{i_{max}}^{(i+1)}\cdot Y_{i_{max}}$. 

We show that $\tilde{\phi}^{(i+1)}$ is a production strategy for the liabilities $\widetilde{\mathcal{L}}^{(i+1)}$. We have $\tilde{\phi}^{(i+1)}\in \mathcal{R}^{\geq 0}$ as it is a non-negative multiple of $\phi\in\mathcal{R}^{\geq 0}$. To show for $\tilde{\phi}^{(i+1)}$ the equation corresponding to \eqref{def-generalfinancing} for $t\in D[i_{min},i_{max}]$ with $t>i$ and $t\notin \{i+1,i+2,\ldots\}$, we use $X^{\theta^{(i+1)}}_{t}=0$ and that for $s\in\{ t, \gamma (t)\}$ by definition, $\lambda_{\lceil s-1\rceil}^{(i+1)}=\lambda_{\lfloor s\rfloor}^{(i+1)}=\lambda_{i+1}$. So, each of the two sides of \eqref{def-generalfinancing} for $\tilde{\phi}^{(i+1)}$ is equal to the corresponding side for $\phi$ multiplied by $\lambda_{i+1}$, so \eqref{def-generalfinancing} for $\tilde{\phi}^{(i+1)}$ follows from \eqref{def-generalfinancing} for $\phi$.

For the production cost of $\tilde{\phi}^{(i+1)}$, we show recursively backward in time for $j\in I$ that 
\begin{equation}\label{eqn-valuerecursionadj}
\Bar{v}_{j}^{\tilde{\phi}^{(i+1)},\psi^{(i+1)},\mathcal{C}^{(i+1)}}(\tilde{\mathcal{L}}^{(i+1)})=\lambda_j^{(i+1)}\cdot\Bar{v}_{j}^{\phi,\psi,\mathcal{C}}(\mathcal{L})
\end{equation}
This then implies in particular that $L_j^{(i+1)}=\lambda_j^{(i+1)}\cdot L_j$. 

Equation \eqref{eqn-valuerecursionadj} holds for $j=i_{max}$ by definition of $Y_{i_{max}}^{(i+1)}$. Let $i+1\leq j\leq i_{max}-1$. Then, $\lambda_{j+1}^{(i+1)}=\lambda_{i+1}=\lambda_{j}^{(i+1)}$. Given that \eqref{eqn-valuerecursionadj} holds for date $j+1$ by recursion, we get  $L_{j+1}^{(i+1)}=\lambda_{j+1}^{(i+1)}\cdot L_{j+1}=\lambda_j^{(i+1)}\cdot L_{j+1}$. Thus, if $A_{j+1}^{\prime}\geq L_{j+1}$, then, as $\theta^{i+1}$ is non-negative, $A_{j+1}^{\prime (i+1)}+X_{i+1}^{\theta^{(i+1)}}\geq A_{j+1}^{\prime (i+1)}=\lambda_{j}^{(i+1)}\cdot A_{j+1}^{\prime}\geq  \lambda_{j}^{(i+1)}\cdot L_{j+1}=L_{j+1}^{(i+1)}$, so the original fulfillment condition holds for the adjusted liabilities. Also, $C_{j+1}^{\prime (i+1)}\geq (A_{j+1}^{\prime (i+1)}-L_{j+1}^{(i+1)})_{+}=\lambda_{j}^{(i+1)}\cdot C_{j+1}^{\prime}$. As the financiability condition is assumed to be positively homogeneous, it holds for $C_j^{(i+1)}=\lambda_{j}^{(i+1)}\cdot C_j$. Hence, \eqref{eqn-valuerecursionadj} holds for date $j$ as $\Bar{v}_{j}^{\tilde{\phi}^{(i+1)},\psi^{(i+1)},\mathcal{C}^{(i+1)}}(\tilde{\mathcal{L}}^{(i+1)})=\tilde{\phi}_{\gamma (j)}^{(i+1)}\cdot S_j-C_j^{(i+1)}=\lambda_{j}^{(i+1)}\cdot \Bar{v}_{j}^{\phi,\psi,\mathcal{C}}(\mathcal{L})$, and the recursion proceeds.

Now assume that \eqref{eqn-valuerecursionadj} holds at date $j=i+1$. This implies $L_{i+1}^{(i+1)}=\lambda_{i+1}^{(i+1)}\cdot L_{i+1}=\lambda_{i+1}\cdot L_{i+1}=\xi_{i+1}\cdot \lambda_{i}\cdot L_{i+1}=(\lambda_{i}\cdot L_{i+1})\wedge (\lambda_{i}\cdot A_{i+1}^{\prime}+X_{i+1}^{\theta})$. So, $L_{i+1}^{(i+1)}\leq \lambda_{i}\cdot A_{i+1}^{\prime}+X_{i+1}^{\theta}=A_{i+1}^{\prime (i+1)}+X_{i+1}^{\theta^{(i+1)}}$, that is, the full fulfillment condition holds at date $i+1$. We have $C_{i+1}^{\prime (i+1)}\geq (A_{i+1}^{\prime (i+1)}-L_{i+1}^{(i+1)})_{+}=(\lambda_{i}^{(i+1)}\cdot A_{i+1}-\lambda_{i+1}^{(i+1)}\cdot L_{i+1})_{+}$. From $\xi_{i+1}^{(i+1)}\leq 1$ we get $\lambda_{i+1}^{(i+1)}\leq \lambda_{i}^{(i+1)}$, hence $C_{i+1}^{\prime (i+1)}\geq \lambda_{i}^{(i+1)}\cdot (A_{i+1}-L_{i+1})_{+}=\lambda_{i}^{(i+1)}\cdot C_{i+1}^{\prime}$. So we can proceed as above to show that \eqref{eqn-valuerecursionadj} holds at date $i$. 

Finally, for dates $j\leq i$, the situation is unchanged from the previous iteration with $i$ and so \eqref{eqn-valuerecursionadj} also holds at these dates. This completes the iteration step from date $i$ to $i+1$.

\end{proof}

\newpage

\nocite{Foellmer2016}
\nocite{Engsner2017}

\bibliographystyle{plain}


\end{document}